\documentclass[12pt]{article}

\usepackage{setspace}

\usepackage{natbib}
\usepackage[nottoc,notlof,notlot]{tocbibind}

\bibpunct{(}{)}{;}{a}{}{;}
\usepackage{breakcites}
\usepackage{graphicx,xcolor}
\graphicspath{ {images/} }
\usepackage[algoruled]{algorithm2e}
\usepackage[pagebackref=false,breaklinks=true]{hyperref} %
\hypersetup{
    colorlinks=true,
    citecolor=blue,
    filecolor=blue,
    linkcolor=blue,
    urlcolor=blue,
    bookmarksopen=true,
    pdfstartview=FitH
}
\usepackage{caption,threeparttable}

\usepackage{amsthm,amsmath}
\usepackage{bbm}
\usepackage{amsfonts}%
\usepackage{amssymb}%

\usepackage{geometry} %
\usepackage{graphicx} %
\usepackage{framed}
\usepackage{caption}
\usepackage{subcaption}

\usepackage{booktabs} %
\usepackage{array} %
\usepackage{paralist} %
\usepackage{verbatim} %

\usepackage{lipsum}

\usepackage[singlelinecheck=false]{caption}
\usepackage{tabularx}
\usepackage{tikz}

\usepackage{multicol}
\usepackage{multirow}

\newtheorem{theorem}{Theorem}

\newtheorem{assump*}{Assumption}[section]
\newtheorem{lemma}{Lemma}

\newtheorem{proposition}{Proposition}

\theoremstyle{definition}
\newtheorem{definition}{Definition}

\newtheorem{remark}{Remark}

\newtheorem{assumption}{Assumption}

\def\equationautorefname~#1\null{(#1)\null} %

\usepackage{diagbox}
\usepackage{color-edits} %
\addauthor{a}{red}    %
\addauthor{c}{blue} %
\addauthor{b}{purple} %

\title{\LARGE Does p-Hacking Mitigate or Exacerbate the Effects of \\
Publication Bias?\thanks{We have benefited greatly from conversation with Kaspar Wuthrich. We also thank Jack Porter and Kohei Yata for helpful discussion, and the audience at the University of Iowa for useful comments. All errors are our own.}}

\author{ \parbox{0.45\linewidth}{\centering
\normalsize \setstretch{1.2}
Yong Cai \\
Department of Economics \\
University of Wisconsin-Madison \\
\url{yong.cai@wisc.edu}}
\parbox{0.45\linewidth}{\centering
\normalsize \setstretch{1.2}
Agathe Pernoud \\
Booth School of Business\\
University of Chicago\\
\url{agathe.pernoud@chicagobooth.edu}}\\ 
\\
\parbox{0.45\linewidth}{\centering
\normalsize \setstretch{1.2}
Boli Xu \\
Department of Economics \\
University of Iowa\\
\url{boli-xu@uiowa.edu} \hfill }
}

 \date{\parbox{\linewidth}{\centering%
 \hfill \\
  \today \endgraf}}

\begin{document}
\maketitle

\begin{abstract} \setstretch{1}
\noindent This paper studies the effects of p-hacking on the bias of published estimates when papers with statistically significant results are selectively published. We show that fast p-hacking---actions that lead to large changes in p-values---always exacerbates the bias from selective publication. On the other hand, slow p-hacking---actions that lead to small changes in p-values---exacerbates bias when selection is weak, but mitigates it when selection is strong. %
In a model featuring both types of p-hacking, we show that a normality assumption identifies the true distribution of effects as well as the counterfactual mean that would obtain under selective publication without p-hacking. Applying the model to meta-analyses on the effects of behavioral nudges and development aid, we find suggestive evidence that both mitigation and exacerbation can arise in practice.
\looseness=-1
\end{abstract}

\newpage 
\section{Introduction}

p-Hacking is widely considered to be a threat to the credibility of published research. Coined by \cite{simonsohn2014p}, the term refers to the collection of actions that empirical researchers take to increase the probability that their papers are published. A substantial body of work documents selective publication or publication bias \citep{sterling1959publication, rosenthal1979file}, in which papers that are more statistically significant are preferentially published (in economics, see, e.g., \citealt{brodeur2016star, christensen2018transparency, brodeur2020methods}). As such, the practice of p-hacking often entails increasing the statistical significance of empirical results, distorting reported results from the truth. 

\cite{simonsohn2020fast} further distinguishes between \emph{fast} and \emph{slow} p-hacking. Fast p-hacking refers to processes that change p-values by a substantial amount from analysis to analysis. It is often associated with actions that lead to new draws of essentially independent data. A leading example is file-drawering, in which a researcher with a pre-registered analysis plan discards insignificant results and repeats an experiment wholesale. In contrast, slow p-hacking refers to processes that change p-values by a small amount from analysis to analysis. It typically arises because to analyze messy observational data, researchers have to make many small decisions, most of which have little impact on the final results. For example, researchers may have to choose between different ways of transforming a variable, discretization points of continuous variables, or thresholds for defining outliers. Changing these decisions, therefore, offers a way for researchers to make small adjustments to their p-values. Less innocuously, a researcher may also falsify a small fraction of their data.

p-Hacking, in either manifestation, has been linked to exaggerated effect sizes and excess false positives in published research. It is therefore considered an important contributor to the recent ``replication crisis", which began in psychology \citep{open2015estimating} and biomedical research \citep{prinz2011believe,begley2012raise} but has since been documented across fields, including in economics (see, e.g., \citealt{ioannidis2013s,baker20161, camerer2016evaluating, camerer2018evaluating, christensen2018transparency}). Given the potentially severe repercussions, it is important to understand the effects of p-hacking on the accuracy of published research. While a large and growing body of work tackles this issue, existing papers primarily take either an empirical or simulation-based approach to quantifying the distortions arising from p-hacking (see, e.g., \citealt{leamer1983let, ioannidis2005most,simmons2011false, masicampo2012peculiar,head2015extent,brodeur2016star,brodeur2020methods,friese2020p,moss2023modelling}).

To better understand the settings to which prevailing results apply, this paper formally studies the effects of p-hacking in a simple model with selective publication. We take the form of selective publication as given and assume that it is known to researchers. Specifically, publication probability is assumed to be a step function that jumps at some thresholds of significance. We say that selection is \emph{two-sided} when positive and negative significant results are preferentially published with the same probability. We say that selection is \emph{one-sided} when positive significant results are selectively published, but insignificant and negative significant results are published with the same probability. We focus on two- and one-sided selection for our theoretical analyses. 

Selective publication results in biased estimates. It is well known, for example, that two-sided selective publication leads to an amplification bias: the published estimate has, in expectation, the same sign as the study-specific effect but a larger magnitude \citep{lane1978estimating,hedges1984estimation,iyengar1988selection}. For one-sided selective publication, the bias has the same sign as the direction of selection.

Building on these results, we evaluate the effects of p-hacking relative to a benchmark in which selective publication operates alone. This is motivated by the view that systematic p-hacking does not occur in the absence of selective publication. In our stylized model, researchers are endowed with an estimate and standard error. They can then increase the probability that their papers are published by engaging in p-hacking.
Researchers first decide whether or not to conduct fast p-hacking. This is modeled as the option to pay a fixed cost to redraw estimates from the same distribution as the original estimate.  Regardless of whether or not they draw new estimates, researchers then decide whether or not to conduct slow p-hacking. This is the option to pay a cost to report a manipulated estimate and/or standard error. The cost of slow p-hacking is convex and increasing in the amount of manipulation that researchers choose. 

Under two-sided selection, Theorem \ref{thm:slow} shows that slow p-hacking exacerbates the amplification bias when selection is sufficiently weak. On the other hand, slow p-hacking mitigates the amplification bias when selection is sufficiently strong, such that the expected published estimate is sandwiched between the true effect and the selection-only benchmark. Meanwhile, Theorem~\ref{thm:fast} shows that fast p-hacking always exacerbates amplification bias, leading to more distorted estimates. Theorem~\ref{thm:combined} studies a generalized model with both types of p-hacking, finding---as in the slow p-hacking only case---that both mitigation and exacerbation can arise, depending on whether selection is strong or weak. 
Similar results hold in the model with one-sided selection. 

The effect of p-hacking depends on the type of p-hacking that is undertaken, as well as the selection environment that produced it. Although our results are proven in a stylized setting, we argue that both mitigation and exacerbation can arise in environments that are relevant for empiricists.
To that end, we show that under a normality assumption, the mean study effect, as well as the counterfactual mean of published estimates under selective publication only, are both identified. This allows us to assess the extent to which mitigation or exacerbation may have occurred in a given literature. Under additional parametric assumptions, we estimate a censored version of our one-sided model using data from two meta-analyses.  
The first is \cite{mertens2022effectiveness}, which studies the effect of behavioral nudges on promoting desired choices. Here, we find that p-hacking may have mitigated the bias from selective publication by as much as 54\%. The second is \cite{doucouliagos2011ineffectiveness}, which concerns the effect of development aid on economic growth. Here, we find instead that p-hacking has exacerbated bias by 27\%. 

Discussions about p-hacking and selective publication often treat the two phenomenon separately. Our results show that it is important to understand p-hacking in the context of the selection environment that produced it. That p-hacking could mitigate the effects of publication bias under extreme selection also hints at a self-correcting tendency in the scientific process. This echoes the idea that literatures as a whole may be more reliable than their constituent papers, a view in philosophy of science that goes at least as far back as \cite{kuhn1962structure}. However, this does not imply that p-hacking is desirable, since our paper is silent on the other negative effects of p-hacking, such as inflating the number of false positives,
the erosion of public trust in science, or distortions in the allocation of research funds, among others (see, e.g., \citealt{ioannidis2005most,button2013power,Wasserstein02042016}).

The remainder of this paper proceeds as follows. Section \ref{section--setting} describes the formal setting. Sections \ref{section:twosided} and \ref{section:onesided} contain our main theoretical results under two- and one-sided publication bias, respectively. Section \ref{sec:emp} discusses the identification of our proposed model and presents empirical results based on \cite{mertens2022effectiveness} and \cite{doucouliagos2011ineffectiveness}. Section \ref{section--conclusion} concludes. 

\subsection{Related Literature}

This paper connects to the vast and growing literature on p-hacking. First, it speaks specifically to the line of work concerned with the effects of p-hacking on published estimates \citep{leamer1983let,ioannidis2005most, simmons2011false, gelman2013garden, brodeur2016star,ioannidis2017power,friese2020p,moss2023modelling,stefan2023big}. These papers typically provide quantitative evidence either empirically or through simulations. Complementing their approach, we develop a \textit{theoretical} framework to formally study how fast and slow p-hacking affects the bias of published estimates.\footnote{A contemporaneous paper, \cite{keane2026instrument}, also theoretically examines the effect of p-hacking on the bias of published estimates, focusing on the case where researchers p-hack instrumental-variable regressions.} We show that p-hacking can mitigate the effects of selective publication. This result contrasts with the prevailing views of p-hacking, with one notable exception: \cite{friese2020p}, which documents a similar mitigation effect in simulations of a statistical model in which researchers p-hack by drawing significant p-values from a triangular distribution. 

Relatedly, a separate strand of work studies the optimal design of the publication rule in the presence of p-hacking.\footnote{Some other papers study optimal publication rules without allowing for p-hacking. See, e.g., \cite{frankel2022findings, kitagawa2023optimal}.} \citet{jagadeesan2024publication} designs the publication rule when slow p-hacking can distort researchers' incentives at the research-design stage. \citet{spiess2025optimal} designs estimation procedures when researchers' preferences over reported results are misaligned with society's. In contrast to the normative focus of these two papers, we take a \textit{positive} approach, as we take the publication rule as given and investigate the effect of p-hacking. In addition, our framework distinguishes between fast and slow p-hacking, which is not considered by \citet{jagadeesan2024publication} and \citet{spiess2025optimal}.

Second, our identification result contributes to an active strand of research on the detection of and correction for p-hacking and selective publication. Recent work has focused on the use of p-curves \citep{simonsohn2014p, ulrich2015p, simonsohn2015better, bruns2016p, elliott2022detecting, elliott2025power} caliper tests \citep{gerber2008statistical,gerber2008publication, kudrin2024robust} among other methods \citep{brodeur2016star,moss2023modelling,faridani2025p,kudrin2025testing} for detecting p-hacking. Closely related are methods for detecting and correcting for selective publication. Main approaches include selection modeling \citep{hedges1984estimation,iyengar1988selection, vevea1995general, mccrary2016conservative, andrews2019identification}, funnel asymmetry and related meta-regression methods \citep{egger1997bias, duval2000trim, stanley2008meta,stanley2014meta}, as well as Bayesian model averageing \cite{bartovs2023robust}.
Methods geared towards the detection of either p-hacking or selective publication often do not distinguish between the two sources of bias. One exception is \cite{brodeur2016star}, which proposes bounds for the proportion of p-hacked studies under the assumption of monotone selection on significance. Our identification result focuses on the selection-only counterfactual mean, which we obtain under parametric restrictions. 

\section{Setting}\label{section--setting}

We adopt the setting of \cite{andrews2019identification} but allow for p-hacking of the estimate and standard error. A researcher faces a latent study $(\Theta, \Sigma) \sim \mu$, where $\Theta$ is the effect size and $\Sigma$ is the standard error. He goes through a potential publication process as follows.
\begin{enumerate}
\item \textbf{(Initial draw)} The researcher draws an estimate $X_1$ from the distribution $\mathcal{N}(\Theta, \Sigma^{2})$.\footnote{The assumption of normal distribution is consistent with empirical practice, which typically performs inference under the premise of asymptotic normality.} We let $\Phi(x)$ denote its cumulative distribution function (CDF) and $\phi(x)$ the corresponding probability density function (PDF). 

\item \textbf{(Fast p-hacking)} The researcher can redraw estimates independently from $\mathcal{N}(\Theta, \Sigma^2)$. The cost of the $n$-th draw is $\kappa_f\cdot c_n$, where $\kappa_f > 0$ is a scaling parameter that reflects the general difficulty of fast p-hacking. For notational consistency, we let $c_1 = 0$, i.e., the initial draw is free. We assume that $c_n$ strictly increases in $n$, with $\lim\limits_{n\rightarrow\infty}c_n = \infty$. For simplicity, we assume that the researcher is naive and believes $\Theta=0$ throughout the process.\footnote{A similar assumption also appears in \citet{mccloskey2024critical}. Its purpose is to rule out the possibility that the researcher learns about $\Theta$ during fast p-hacking, thereby simplifying the analysis. Alternatively, we could assume that the researcher knows the value of $\Theta$ from the outset or naively believes that $\Theta=\mathbbm{E}_{\mu}(\Theta)$. All of our theoretical results continue to hold under these alternative assumptions.} We let $\boldsymbol{X}=\{X_1, X_2, ..., X_N\}$ denote the set of estimates if the researcher ends up making $N$ draws.  

\item \textbf{(Slow p-hacking)} Among all the estimates in $\boldsymbol{X}$, the researcher can pick one estimate $X_n$, perform slow p-hacking, and report $(\hat{X}, \hat{\Sigma})$. The cost of slow p-hacking is $\kappa_s \cdot c(|\hat{X} - X_n|, |\hat{\Sigma} - \Sigma|)$, satisfying $\kappa_s>0$, $c(0,0)=c_1(0,0) = c_2(0,0) = 0$, $c_{11}>0$, $c_{22}>0$, and $c_{12}= 0$.\footnote{One example of slow p-hacking cost function is $c(|\hat{X} - X_n|, |\hat{\Sigma} - \Sigma|) = c_x \cdot (\hat{X} - X_n)^2 + c_\Sigma \cdot (\hat{\Sigma} - \Sigma)^2$.} Notice that the researcher has the option of not undertaking slow p-hacking by letting $(\hat{X}, \hat{\Sigma}) = (X_n, \Sigma)$. 

\item \textbf{(Publication)} The reported result $(\hat{X}, \hat{\Sigma})$ is published (denoted by $D=1$) with probability $p(\hat{X}/\hat{\Sigma})$, which only depends on $\hat{X}/\hat{\Sigma}$ (the test statistic for the null hypothesis $\Theta = 0$ when the reported result is not manipulated), as we specify below. This publication rule is known to the researcher from the beginning of the entire process. The researcher receives a reward of $V > 0$ upon publication and zero otherwise. 
\end{enumerate}

The researcher's expected payoff from making $N$ draws, using the $n$-th estimate, and reporting $(\hat{X}, \hat{\Sigma})$ is 
\[V\cdot p(\hat{X}/\hat{\Sigma}) - \kappa_s \cdot c(|\hat{X} - X_n|, |\hat{\Sigma} - \Sigma|) - \kappa_f \cdot \sum_{n=1}^{N} c_n .\]

\paragraph{\underline{Model of p-hacking.}} \ Our model of p-hacking follows \cite{simonsohn2020fast}, which distinguishes between \emph{fast} and \emph{slow} p-hacking.

Fast p-hacking refers to practices that change p-values substantially from analysis to analysis. A leading example of which is file-drawering, whereby researchers discard insignificant data and repeat an experiment unchanged. We therefore model fast p-hacking as independent redrawing of estimates. The fixed $\Sigma$ reflects the idea that researchers cannot change the design of their study, either because they have already pre-registered their experiment, or have already committed to some data collection strategy. We assume that the cost of drawing new estimates is increasing, reflecting factors such as the scarcity of experiment sites or objects, as well as the increasing reputation cost of continued p-hacking. The cost of drawing new estimates diverges to infinity, capturing the fact that researchers do not have unlimited resources for conducting studies. Our model of fast p-hacking resembles \citet{mccloskey2024critical}, in which the researcher can draw up to $L$ estimates. It is also consistent with \citet[Supplement 3]{simonsohn2014p}, which models file-drawering as researchers obtaining a sequence of independent p-values.

Slow p-hacking refers to practices that change the p-value by a small amount from analysis to analysis. It arises because researchers often have to make many decisions when analyzing messy data. For example, they choose the thresholds for discretizing continuous variables, which outliers to drop, or the transformation of variables. Individual decisions probably do not affect the p-values by much. Collectively, they offer a way for researchers to approach the significance threshold through a sequence of tiny steps. As a result, the outcome of slow p-hacking rarely exceeds the target threshold by much. This produces the bunching behavior most associated with p-hacking, and is what caliper tests are designed to detect. As such, we model slow p-hacking as the direct and deterministic manipulation of estimates and standard errors. The manipulation cost increases with the distance between the reported result and the original one,\footnote{Our model is in line with the literature on communication with lying costs (e.g., \citet{kartik2009strategic}).} reflecting factors such as the increasing amount of time spent on specification search. 
Our model of slow p-hacking is similar to \citet{jagadeesan2024publication}, in which researchers pay a linear cost to manipulate the estimate only. 

As a side note, our model assumes that researchers engage in fast p-hacking before resorting to slow p-hacking. This ordering is without loss of optimality, as the outcome of slow p-hacking is deterministic.

\paragraph{\underline{Publication rule.}} \ In our model, the publication probability of a reported result $(\hat{X}, \hat{\Sigma})$ only depends on $\hat{X}/\hat{\Sigma}$, the test statistic for the null hypothesis $\Theta = 0$ when the reported result is not manipulated. Throughout the paper, we focus on the following publication rule: 
\begin{equation}
    p(z) = \begin{cases}
        1 & \text{ if } z \geq t, \\
        1/\delta_0 & \text{ if } -t < z < t, \\
        1/\delta_- & \text{ if } z \leq -t~,
    \end{cases}
\label{eq:general}
\end{equation}
where $\delta_0 \geq \delta_{-} \geq 1$. Under this rule, a positive significant result at the significance level $t$ is published with probability normalized to one, (weakly) higher than a negative significant result's publication probability, which in turn is (weakly) higher than that of an insignificant result.\footnote{The assumption that a positive significant result is more likely to be published than a negative significant one is without loss of generality. In the empirical analysis, we normalize signs so that this assumption holds; that is, if a literature favors negative results, we flip the signs of the estimates.} This rule reflects two common phenomena in scientific publishing. First, empirical results significant at a certain level (such as 5\%) are more likely to be published compared to those that are marginally insignificant at the same level.\footnote{For instance, such a preference for significant results is seen in the fields of economics \citep{brodeur2016star, chopra2024null}, medicine \citep{ioannidis2007exploratory}, political science \citep{gerber2008statistical}, psychology \citep{masicampo2012peculiar}, sociology \citep{gerber2008publication}, and social sciences as a whole \citep{franco2014publication}.} Second, in certain literature, editors may preferentially publish results with a particular sign. Specifically, they may prefer either ``reasonable results,'' which support their prior beliefs, or ``surprising results,'' which contradict their prior beliefs. For example, in the minimum wage literature, \cite{card1995time} and \cite{andrews2019identification} find evidence that editors favor results showing negative effects of minimum wage on employment. 

Note that this publication rule encompasses two special cases. It corresponds to \textit{two-sided selective publication} if $\delta_{-} = 1$ and \textit{one-sided selective publication} if $\delta_{-} = \delta_0$. These two publication rules stand as the extreme cases of the general rule we consider in the paper. 

\paragraph{\underline{Assumption on p-hacking cost.}} \ Finally, we introduce a regularity assumption on the cost of slow p-hacking to make the researcher's slow p-hacking behavior consistent with reality. 

\begin{remark}
\label{remark:uniquemin}
For $X\geq 0$, the problem $\min\limits_{ \tilde{X} \geq t\tilde{\Sigma} > 0} c(|\tilde{X} - X|, |\tilde{\Sigma} - \Sigma|)$ has a unique solution. 
\end{remark}
\begin{proof}
See Appendix~\ref{pf:remark:uniquemin}.
\end{proof}

Remark~\ref{remark:uniquemin} implies the following. Under the publication rule we consider in this paper, if the researcher decides to slow p-hack a positive insignificant estimate to positive significance, there is a unique result that minimizes the p-hacking cost among all candidates. Hereafter, when $0\leq X<t\Sigma$, we let $(\tilde{X},\tilde{\Sigma})=(\xi(X), \xi(X)/t)$ denote this unique minimizer, where we fix $\Sigma$ and express it as a function of $X$. Based on that, we also let $C(X):=c(|\xi(X) - X|, |\xi(X)/t - \Sigma|)$ denote the minimal cost of slow p-hacking this positive insignificant estimate towards positive significance. 

We can extend these two definitions to $X<0$, with $(\xi(X), -\xi(X)/t)$ being the optimal result chosen by the researcher if he slow p-hacks a negative insignificant estimate to negative significance, and $C(X) > 0$ being the corresponding slow p-hacking cost. Due to the symmetric structure of the slow p-hacking cost function $c(\cdot,\cdot)$ and the fact that positive and negative significances require the same level, we have that $\xi(X)$ is anti-symmetric around zero, and $C(X)$ is symmetric around zero. 

We are now ready to state the regularity assumption. 

\begin{assumption}
\label{assump:largekappas}
We have $\kappa_s \cdot C(0) > V$. 
\end{assumption}

Assumption~\ref{assump:largekappas} states that $\kappa_s$ is sufficiently large such that it is never worthwhile to undertake slow p-hacking when $X=0$. This further implies that the researcher's optimal slow p-hacking strategy does not change the sign of the estimate; that is, it is not worthwhile to slow p-hack a negative estimate to positive significance or vice versa. Although this assumption is not crucial for the qualitative results of the paper, it is consistent with the empirical observation that insignificant results are also published. If, instead, we allow for $\kappa_s \cdot C(0) \leq V$, then it becomes possible that every published result is significant under two-sided selective publication. 

\section{Two-Sided Selective Publication}\label{section:twosided}

This section focuses on \textit{two-sided selective publication}, i.e., $\delta_{-} = 1$. To save notation, we let $\delta_0 = \delta$ in this section, so the publication rule becomes 
\begin{equation*}
        p(z) = \begin{cases}
1 & \text{ if } |z| \geq t, \\
1/\delta & \text{ if } |z| < t,
        \end{cases}
    \end{equation*}
with $\delta > 1$ being referred to as the \textbf{extent of selective publication}. 

Since we are interested in the interplay of p-hacking and selective publication, we fix $(\Theta, \Sigma, \kappa_s, \kappa_f)$ and consider $\delta$, the extent of selective publication, as the primary parameter in the analysis. We let $\gamma_h(\delta):=  \mathbbm{E}(\hat{X}|D=1,\Theta, \Sigma)$,  in which the subscript ``h'' stands for hacking, denote the expected published result in our model, in which selective publication and p-hacking co-exist. Meanwhile, we let $\gamma(\delta)$ denote the expected published result when there is selective publication but \textit{no} p-hacking; equivalently, we have $\gamma(\delta) = \lim\limits_{ \kappa_s, \kappa_f\rightarrow\infty}\gamma_h(\delta)$, as the researcher does not undertake any p-hacking when it is too costly. Similarly, we let $\gamma_f(\delta)= \lim\limits_{ \kappa_s\rightarrow\infty}\gamma_h(\delta)$ and $\gamma_s(\delta)= \lim\limits_{ \kappa_f\rightarrow\infty}\gamma_h(\delta)$ denote the expected published result under the scenario of fast-only and slow-only p-hacking, respectively. 

\begin{assumption}
\label{assump:tiebreak}
The researcher adopts the following tie-breaking rule in Section~\ref{section:twosided}. \\
(a) If he is indifferent about undertaking a certain p-hacking action (i.e., his continuation value remains the same with or without a certain p-hacking action), he does \textit{not} p-hack. \\ 
(b) If he is indifferent among multiple results at the time of reporting, he reports the one with the lowest p-value. 
\end{assumption}

Assumption~\ref{assump:tiebreak} specifies the researcher's behavior in certain scenarios where he is indifferent. The paper's results remain intact under alternative reasonable tie-breaking rules (see, e.g., Footnotes~\ref{footnote:tiebreak1} and \ref{footnote:tiebreak2}). 

Note that we defined our outcomes of interest conditional on $(\Theta,\Sigma)$, so all our theoretical results hold pointwise. Without loss of generality, we only consider the case where the effect size is positive, i.e., $\Theta > 0$; the case with $\Theta < 0$ follows by symmetry.\footnote{When $\Theta=0$, due to the symmetry of our model, neither selective publication nor p-hacking induces any publication bias.} 

\subsection{Benchmark: Selective Publication Without p-Hacking}

It is well known that published results under two-sided selective publication tend to overestimate the magnitude of the treatment effect. This is shown by \cite{hedges1984estimation} for $F$-tests of mean differences and \cite{iyengar1988selection} for the $t$-test. For completeness, Proposition~\ref{prop:nohack} confirms that the same result holds in our setting. 

\begin{proposition}
\label{prop:nohack}
When $\Theta > 0$, we have $\gamma(\delta)>\Theta$, $\forall \delta>1$. 
\end{proposition}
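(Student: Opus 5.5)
Without p-hacking the reported result is $(X_1,\Sigma)$ with $X_1\sim\mathcal{N}(\Theta,\Sigma^2)$, and it is published with probability $p(X_1/\Sigma)$. So I would write the benchmark mean as a weighted average,
\[
\gamma(\delta)=\frac{\mathbbm{E}[X_1\,p(X_1/\Sigma)]}{\mathbbm{E}[p(X_1/\Sigma)]}.
\]
The first step is to standardize. Set $Z=(X_1-\Theta)/\Sigma\sim\mathcal{N}(0,1)$ and $\theta=\Theta/\Sigma>0$. The claim $\gamma(\delta)>\Theta$ then becomes
\[
\mathbbm{E}\big[Z\,w(Z+\theta)\big]>0,
\qquad
w(z)=1-(1-1/\delta)\mathbbm{1}\{|z|<t\}.
\]
Since $\mathbbm{E}[Z]=0$, the constant part of $w$ drops out. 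The inequality is therefore equivalent to
\[
\mathbbm{E}\big[Z\,\mathbbm{1}\{|Z+\theta|<t\}\big]<0,
\]
because $1-1/\delta>0$ for every $\delta>1$.

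The second step is to compute this truncated first moment explicitly. Using $\int_a^b z\phi(z)\,dz=\phi(a)-\phi(b)$ on the interval $(-t-\theta,\;t-\theta)$ gives
\[
\mathbbm{E}\big[Z\,\mathbbm{1}\{|Z+\theta|<t\}\big]=\phi(t+\theta)-\phi(t-\theta),
\]
where I have used the symmetry of $\phi$.

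The last step is to sign this quantity. Because $\theta>0$ and $t>0$, we have $|t+\theta|>|t-\theta|$. The standard normal density strictly decreases in $|z|$, so $\phi(t+\theta)<\phi(t-\theta)$, which is the required strict inequality.

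A second route avoids the calculation. Publication-weighting with weight $w(z+\theta)$ amounts to reweighting a symmetric distribution, and it removes mass from an interval centered at $-\theta$, that is, an interval shifted to the left of zero. Removing more mass on the left than on the right raises the mean.

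There is essentially no obstacle. The only care needed is handling the normalization by $\mathbbm{E}[p]$, and that is harmless because the denominator is positive. The explicit density formula keeps the sign argument clean.
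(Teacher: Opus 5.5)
Your proof is correct, and it takes a different route from the paper's.

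The paper derives Proposition~\ref{prop:nohack} from a general kernel-comparison result, Lemma~\ref{lem:key}. It writes $\Theta=K(m_u)$ and $\gamma(\delta)=K(m_n)$ as means under symmetric weights and folds both integrals onto $[0,\infty)$. It then uses that $\frac{\phi(x)-\phi(-x)}{\phi(x)+\phi(-x)}=\tanh(\Theta x/\Sigma^2)$ is increasing, and concludes because $m_n/m_u$ is a nondecreasing, nonconstant step function on $x\ge 0$, i.e.\ a monotone-likelihood-ratio reweighting. You instead note that the publication weight differs from a constant only by an interval indicator, so only one truncated standard-normal first moment matters, and you compute it in closed form.

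Your route gives an explicit expression,
\[
\gamma(\delta)-\Theta=\Sigma\,\frac{(1-1/\delta)\,\bigl[\phi(t-\theta)-\phi(t+\theta)\bigr]}{1-(1-1/\delta)\,\mathbb{P}(|Z+\theta|<t)},
\]
where $\phi$ must be read as the standard normal density, not the paper's density of $\mathcal{N}(\Theta,\Sigma^2)$. This shows the size of the bias and that it is strictly increasing in $\delta$. It also needs no symmetry: the same computation with $\mathbb{E}[Z\mathbbm{1}\{Z<t-\theta\}]=-\phi(t-\theta)<0$ gives the one-sided claim $\breve{\gamma}(\delta)>\Theta$ for every value of $\Theta$. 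The paper states that claim without proof in Section~\ref{section:onesided}.

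The paper's route gives reusability. Lemma~\ref{lem:key} compares two arbitrary symmetric kernels with monotone ratio, which is what the paper needs later, e.g.\ for $e_a>e_{bc}^f$ and for the monotonicity of $e_{bc}^f$ in the number of draws in the proof of Theorem~\ref{thm:fast}. Your calculation, by contrast, relies on one of the two kernels being constant.
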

\begin{proof}
See Appendix~\ref{pf:prop:nohack}.
\end{proof}

Taking this publication bias as the benchmark, we now explore whether p-hacking mitigates or exacerbates the bias. 

\begin{definition}
\label{def:mora}
Let $\Theta>0$. We say that p-hacking \textbf{mitigates} publication bias if $\Theta < \gamma_h(\delta) < \gamma(\delta)$ and \textbf{exacerbates} publication bias if $\Theta < \gamma(\delta) < \gamma_h(\delta)$. Similar definitions apply to slow-only p-hacking using the term $\gamma_s(\delta)$ and fast-only p-hacking using the term $\gamma_f(\delta)$, respectively. 
\end{definition}

\subsection{Slow-Only p-Hacking}
\label{sec:slow}

In this subsection, we study the implications of slow p-hacking by letting $\kappa_f \rightarrow \infty$. In this case, the researcher only draws one estimate, $X_1$, and decides whether to undertake slow p-hacking. The following proposition summarizes the optimal slow p-hacking strategy. 

\begin{proposition}\label{prop:slow_strat}
The researcher's optimal slow p-hacking strategy can be represented by a threshold $\overline{X}_s\in (0, t\Sigma ]$ such that: \begin{enumerate}[(i)]
\item He undertakes slow p-hacking if and only if $|X_1| \in (\overline{X}_s, t\Sigma)$.
\item His reported result is 
\begin{align*} %
(\hat{X},\hat{\Sigma}) = \begin{cases}
            (\xi(X_1), |\xi(X_1)|/t) & \quad\text{if }|X_1|\in (\overline{X}_s, t\Sigma),  \\
            (X_1,\Sigma) & \quad\text{otherwise}.
        \end{cases}
\end{align*}
\end{enumerate}
Furthermore, the slow p-hacking threshold $\overline{X}_s$ (weakly) decreases in $\delta$ and converges to $t\Sigma$ as $\delta$ goes to one.
\end{proposition}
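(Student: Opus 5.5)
The plan is to solve the researcher's reporting problem pointwise in $X_1$, with $\kappa_f\to\infty$ so that $X_1$ is the only estimate, and then to read the threshold structure off the monotonicity of the minimal cost $C(\cdot)$.

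\textbf{Step 1: reduce the options.} If $|X_1|\ge t\Sigma$, the unmanipulated report already earns $V$, the maximal payoff, at zero cost. Any manipulation has weakly positive cost, so the tie-breaking rule in Assumption~\ref{assump:tiebreak}(a) gives no p-hacking.

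If $|X_1|<t\Sigma$, the researcher either reports $(X_1,\Sigma)$ and gets $V/\delta$, or manipulates. Any manipulated report that stays insignificant earns $V/\delta$ at strictly positive cost, so it is dominated. The only candidate is therefore manipulation to significance. By Remark~\ref{remark:uniquemin} and Assumption~\ref{assump:largekappas}, the cheapest significant target is $(\xi(X_1),|\xi(X_1)|/t)$, which keeps the sign of $X_1$, at cost $\kappa_s C(X_1)$. I also need to rule out jumping to opposite-sign significance. Its cost is at least the cost from $0$, because moving $X$ toward zero before crossing is part of the path; this bound can be formalized using the symmetry and monotonicity of $c$ in its arguments. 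So the researcher p-hacks iff
\[
V-\kappa_s C(X_1) > V/\delta .
\]
This yields the reported result in part (ii).

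\textbf{Step 2: show $C$ is strictly decreasing in $|X|$ on $[0,t\Sigma)$.} By symmetry it suffices to take $X\ge0$. I would argue by an envelope or comparison argument. For $X<X'$, take the optimal pair $(\xi(X),\xi(X)/t)$ for $X$; it is also feasible for $X'$. Its $x$-distance $\xi(X)-X'$ is smaller, since $\xi(X)\ge t\tilde\Sigma>0$ and indeed $\xi(X)\ge X'$ can be arranged, or else $X'$ already lies in a cheaper region. Because $c$ is strictly increasing in its first argument on the positive axis, which follows from $c_1(0,0)=0$ and $c_{11}>0$, this gives $C(X')<C(X)$. Continuity of $C$ on $[0,t\Sigma)$ follows from the maximum theorem, using the uniqueness in Remark~\ref{remark:uniquemin}. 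Finally $C(X)\to0$ as $X\to t\Sigma$, since $(t\Sigma,\Sigma)$ is feasible at zero cost.

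\textbf{Step 3: define the threshold and derive the comparative statics.} Set
\[
\overline{X}_s=\inf\{x\in[0,t\Sigma):\kappa_s C(x)<V(1-1/\delta)\}.
\]
By strict monotonicity and continuity of $C$, the p-hacking region is exactly $|X_1|\in(\overline{X}_s,t\Sigma)$, where the endpoint is excluded by tie-breaking. Assumption~\ref{assump:largekappas} gives $\kappa_s C(0)>V\ge V(1-1/\delta)$, so $\overline{X}_s>0$. If the set is empty, $\overline{X}_s=t\Sigma$; this cannot happen for $\delta>1$ because $C\to0$, but it is allowed in the stated range.

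The right-hand side $V(1-1/\delta)$ is increasing in $\delta$, so the sublevel set grows and $\overline{X}_s$ weakly decreases. As $\delta\to1$, the right-hand side tends to $0$. Since $C>0$ on $[0,t\Sigma)$ and $C$ is continuous and decreasing, $\overline{X}_s=C^{-1}\!\left(V(1-1/\delta)/\kappa_s\right)\to t\Sigma$.

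\textbf{Main obstacle.} I expect Step 2 to be the hardest: establishing that $C$ is strictly monotone and continuous, and rigorously excluding the opposite-sign jump using only the stated properties of $c$. I would first try the feasibility-comparison argument above, falling back on first-order conditions of the constrained minimization with the binding constraint $\tilde X=t\tilde\Sigma$.
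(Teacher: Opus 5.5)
Your proposal is correct and follows essentially the same route as the paper: compare the benefit $V(1-1/\delta)$ with the minimal cost $\kappa_s C(X_1)$, then use the symmetry, strict monotonicity and continuity of $C$ on $[0,t\Sigma)$, together with Assumption~\ref{assump:largekappas} and $C(t\Sigma)=0$, to obtain the threshold, and read the comparative statics off the left-hand side as $\delta$ varies. The paper simply asserts two things you argue for: the monotonicity of $C$, and that slow p-hacking never switches the sign. Your feasibility-comparison argument for monotonicity (which, when $\xi(X)<X'$, needs the alternative comparison point $(X',X'/t)$) and your $C(0)$ lower bound for opposite-sign jumps are useful added detail, not a different approach.
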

\begin{proof}
See Appendix~\ref{pf:prop:slow_strat}.
\end{proof}

Proposition~\ref{prop:slow_strat} shows that the researcher undertakes slow p-hacking only when the insignificant estimate $X_1$ is sufficiently close to being significant ($|X_1| \in (\overline{X}_s, t\Sigma)$). If publication is more selective (higher $\delta$), then the incentive to p-hack is stronger, leading to a lower threshold for slow p-hacking (lower $\overline{X}_s$). Importantly, the optimal p-hacking strategy is geared to the publication rule, and, in particular, is symmetric around zero. 

Having characterized the researcher's optimal slow p-hacking strategy, we can further explore its implications for publication bias. 

\begin{theorem}
\label{thm:slow}
There exist two thresholds $1<\underline{\delta}_s< \overline{\delta}_s<\infty$ such that slow p-hacking mitigates publication bias if $\delta > \overline{\delta}_s$ and exacerbates publication bias if $\delta < \underline{\delta}_s$.
\end{theorem}
\begin{proof}
See Appendix~\ref{pf:thm:slow}.
\end{proof}

For intuition, suppose researchers cannot p-hack $\Sigma$ and consider all published results with $X \geq 0$. We can think of the overall distribution as a mixture of insignificant and significant estimates, with the latter having a higher mean. Slow p-hacking introduces two effects. On the one hand, researcher manipulation reduces the mass of insignificant estimates, thereby increasing the bias. On the other hand, manipulated results land exactly on the threshold. They are published with higher probability than before and have the smallest possible value among significant estimates. This lowers the mean of estimates conditional on significance. The relative strength of these two effects determines the overall outcome. In the extreme case with most severe publication selection ($\delta = \infty$), no insignificant results are published, thereby shutting down the first effect. This explains why slow p-hacking can mitigate publication bias when selective publication is sufficiently severe. 

\paragraph{Quadratic Costs.} Theorem \ref{thm:slow} does not stipulate what happens when the extent of selective publication is intermediate $\delta\in [\underline{\delta}_s,\overline{\delta}_s]$. In that region, there are countervailing forces at play, and our general model does not yield a clear prediction for the effect of slow p-hacking. However, we can sharpen Theorem \ref{thm:slow} by imposing additional structure on the p-hacking technology. For the following result only, suppose that the slow p-hacking cost is quadratic: 
\[c(|\hat{X}-X|,|\hat{\Sigma}-\Sigma|)=c_X\cdot (\hat{X}-X)^2+c_\Sigma\cdot (\hat{\Sigma}-\Sigma)^2\,,\quad c_X,c_\Sigma>0\,.\]

\begin{proposition}
\label{thm:slow_quad}
Suppose the cost of slow p-hacking is quadratic and that the function $\phi(x)+\phi(-x)$ is log-concave on $[t\Sigma-\sqrt{V(c_Xt^2+c_\Sigma)/(\kappa_sc_Xc_\Sigma)}, t\Sigma]$. Then there exists a single threshold $\delta^*\in(1,\infty)$ such that slow p-hacking mitigates publication bias if $\delta > \delta^*$ and exacerbates publication bias if $\delta < \delta^*$.
\end{proposition}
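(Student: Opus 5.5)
The plan is to make every object in the comparison explicit under the quadratic cost, reparametrize by the slow p-hacking threshold $\overline{X}_s$ instead of $\delta$, and prove that the sign of $\gamma_s(\delta)-\gamma(\delta)$ changes exactly once. Theorem~\ref{thm:slow} then says which side is which: exacerbation for $\delta$ near one, mitigation for large $\delta$.

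\emph{Step 1 (explicit cost and strategy).} Consider reaching the line $\tilde X=t\tilde\Sigma$ from $(X,\Sigma)$ with $0\le X<t\Sigma$. Write $a=\tilde X-X$ and $b=\Sigma-\tilde\Sigma$, so that $a+tb=t\Sigma-X$. Minimizing $c_Xa^2+c_\Sigma b^2$ by a Lagrangian gives
\[C(X)=\frac{c_Xc_\Sigma}{c_Xt^2+c_\Sigma}(t\Sigma-X)^2,\qquad \xi(X)=X+\lambda(t\Sigma-X),\qquad \lambda=\frac{c_\Sigma}{c_Xt^2+c_\Sigma}\in(0,1).\]
So $\xi$ is affine in $X$. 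By Proposition~\ref{prop:slow_strat}, the threshold solves $\kappa_sC(\overline{X}_s)=V(1-1/\delta)$, which gives
\[t\Sigma-\overline{X}_s=\sqrt{V(1-1/\delta)(c_Xt^2+c_\Sigma)/(\kappa_sc_Xc_\Sigma)}.\]
This is a strictly decreasing bijection between $\delta\in(1,\infty)$ and $\overline{X}_s\in(t\Sigma-L,t\Sigma)$, where $L$ is the square root in the statement. In particular, every relevant threshold lies in the interval on which log-concavity is assumed. Any single-crossing statement in $u:=\overline{X}_s$ therefore transfers to $\delta$.

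\emph{Step 2 (sign function).} Write $f$ for the density of $X_1$, and let $N(\delta)$, $D(\delta)$ be the numerator and denominator of $\gamma(\delta)$ (the integrals of $x f(x)p(x/\Sigma)$ and $f(x)p(x/\Sigma)$). Slow hacking moves the mass on $|x|\in(u,t\Sigma)$ from publication probability $1/\delta$ at value $x$ to probability $1$ at value $\xi(x)$. Using the antisymmetry of $\xi$, the changes are
\[\Delta D=(1-\tfrac1\delta)\int_u^{t\Sigma}g(x)\,dx,\qquad \Delta N=\int_u^{t\Sigma}\bigl(f(x)-f(-x)\bigr)\bigl(\xi(x)-x/\delta\bigr)\,dx,\]
where $g(x)=f(x)+f(-x)$. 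Since $\Delta D>0$, mitigation holds iff $S(\delta):=\Delta N-\gamma(\delta)\Delta D<0$, and exacerbation holds iff $S>0$. Dividing by $\Delta D$, the condition reads $M(u,\delta)<\gamma(\delta)$. Here $M$ is the mean of the reallocated published mass: a weighted average of $\xi(x)-x/\delta$ over $[u,t\Sigma]$ under weights proportional to $f(x)-f(-x)$, normalized by $(1-1/\delta)\int g$.

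\emph{Step 3 (single crossing, the main obstacle).} I would show that at any zero of $S$, viewed as a function of $\delta$ with $u=u(\delta)$ substituted, the derivative $dS/d\delta$ is strictly negative. A continuous function with that property crosses zero at most once, downward. Combined with $S>0$ near $\delta=1$ and $S<0$ for large $\delta$ (Theorem~\ref{thm:slow}), this yields a unique $\delta^*$.

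Differentiating produces three kinds of terms:
\begin{itemize}
\item a boundary term at $u$, whose sign is governed by $\xi(u)-u/\delta-\gamma(\delta)(1-1/\delta)$;
\item terms from $\partial_\delta$ of the $x/\delta$ weights;
\item $-\gamma'(\delta)\Delta D$, which is negative because $\gamma$ increases in $\delta$ (the bias from Proposition~\ref{prop:nohack} grows with selection).
\end{itemize}
At a zero, I would use $\gamma=\Delta N/\Delta D$ to rewrite the boundary term as a comparison between the value at the margin $u$ and a truncated mean over $[u,t\Sigma]$. This is where log-concavity of $g=\phi(x)+\phi(-x)$ on $[t\Sigma-L,t\Sigma]$ enters. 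I would invoke the standard fact that for a log-concave density the truncated mean $\mathbb{E}[X\mid u\le X\le t\Sigma]$ is increasing in $u$ with slope at most one, together with the affine form of $\xi$ (slope $1-\lambda<1$). These should show that lowering $u$ (raising $\delta$) pulls the conditional mean of reallocated mass down faster than $\gamma$ rises.

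The delicate point is the weight $f(x)-f(-x)$ as opposed to $g$. I expect to handle it by writing $f(x)-f(-x)=g(x)\tanh(\Theta x/\Sigma^2)$, which is $g$ times a factor that is increasing in $x$. A monotone-likelihood-ratio argument then reduces the comparison to the log-concave $g$. I expect this bookkeeping at the zero of $S$ to be the hardest step, and I would first try it in the limiting case $\delta\to\infty$ to confirm the signs.
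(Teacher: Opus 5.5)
Your setup matches the paper's. Your $S$ is exactly $p_b\tilde\Lambda$ from the proof of Theorem~\ref{thm:slow}, and reparametrizing by the threshold is the paper's $\varepsilon=t\Sigma-\overline{X}_s$. The paper also proves single crossing by showing the derivative is negative at any zero, dropping the $\gamma'$ term, and using that $\tanh(\Theta x/\Sigma^2)$ is increasing to pass from $f(x)-f(-x)$ to $g$.

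\textbf{The gap.} The tool you name for the decisive inequality in Step 3 is too weak. The function averaged in $M$ is not $\xi$ but
\[K_\delta(x):=\frac{\xi(x)-x/\delta}{1-1/\delta}=x+\beta\,(t\Sigma-x),\qquad \beta:=\frac{c_\Sigma}{(c_Xt^2+c_\Sigma)(1-1/\delta)}.\]
Its coefficient $\beta$ moves with $\delta$ and blows up as $\delta\downarrow 1$. Let $m(u)$ be the $g$-weighted mean on $[u,t\Sigma]$ and $r:=(m(u)-u)/(t\Sigma-u)$. After the two reductions above, what remains is to show that $M_0:=m(u)+\beta\,(t\Sigma-m(u))$ is weakly decreasing in $\delta$, with both $u$ and $\beta$ varying. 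This is equivalent to
\[(\beta-1)\,m'(u)\le 2\beta\,(1-r),\]
which for large $\beta$ is essentially $m'(u)\le 2(1-r)$.

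The bound $m'(u)\le 1$ does not deliver this. Whenever $m(u)$ lies in the upper half of $[u,t\Sigma]$, we have $2(1-r)<1$. That happens when $g$ is increasing on $[u,t\Sigma]$, for instance when $\Theta$ is well above $t\Sigma$, and the log-concavity hypothesis allows that case. Comparing $\xi'<1$ with the slope of the truncated mean is also beside the point. The negative term that must win is $-2\beta\,(t\Sigma-m(u))/(t\Sigma-u)$, and it comes from the $\delta$-dependence of $\beta$, not from the slope of $\xi$.

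\textbf{The missing idea.} You need a joint bound on $r$ and the endpoint density. Set $F(x):=\int_x^{t\Sigma}g$ and $z:=(t\Sigma-u)\,g(u)/F(u)$. Then $m'(u)=zr$, and a sufficient form of the requirement (sharp as $\beta\to\infty$) is $r\le 2/(2+z)$. Slope at most one gives only $r\le 1/z$, which is weaker for $z<2$.

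The paper obtains the sharper bound as follows. Log-concavity of $g$ on the interval makes $F/g$ nonincreasing, hence
\[F(x)\le F(u)\exp\!\bigl(-(x-u)\,g(u)/F(u)\bigr).\]
Since $r=\int_u^{t\Sigma}F\,dx\big/\bigl((t\Sigma-u)F(u)\bigr)$, integrating gives $r\le(1-e^{-z})/z$. The elementary inequality $(1-e^{-z})/z\le 2/(2+z)$ then finishes; the paper reaches the same place after an integration by parts.

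You also need no rate comparison with $\gamma'$. Once the $\gamma'$ term is dropped, what remains is exactly the statement that $M$ is weakly decreasing in $\delta$, and this holds for every $\delta$. Strict monotonicity of $\gamma$ alone then gives the single crossing.
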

\begin{proof}
See Online Appendix~\ref{pf:thm:slow_quad}.
\end{proof}

The log-concavity assumption on $\phi$ is always satisfied when $\Theta\leq \Sigma$. It also holds if the slow p-hacking threshold under the most selective publication rule is not too low: $\lim_{\delta\rightarrow\infty}\overline{X}_s\geq 0.663 \Sigma$. When $t=1.96$, this is the case when a researcher never moves an estimate by more than about 1.3 standard errors. That is, when researchers only slow p-hack results with p-values below 0.51.

\subsection{Fast-Only p-Hacking}
\label{sec:fast}

In this subsection, we investigate the implications of fast p-hacking by letting $\kappa_s\rightarrow \infty$. Essentially, the researcher chooses when to stop drawing new estimates. 

Let $\Delta_f(\delta):= \mathbb{P}_{X\sim\mathcal{N}(0,\Sigma^2)}(|X|\geq t\Sigma) \cdot (1-\frac{1}{\delta}) V$ be the researcher's expected payoff gain from exactly one more draw if all his current results are insignificant. Let $n^*_f:= \max\{n \mid \kappa_f c_n < \Delta_f(\delta) \}$. The researcher's optimal strategy when he can only fast p-hack is as follows.

\begin{proposition}
\label{prop:fast_strat}
The researcher draws up to $n^*_f$ estimates until obtaining a significant result. Furthermore, $n^*_f$ is weakly increasing in the extent of selective publication $\delta$. 
\end{proposition}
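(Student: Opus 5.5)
The proof treats the researcher's problem as an optimal stopping problem under the naive belief $\Theta=0$. With $\kappa_s\to\infty$ there is no slow p-hacking, so each reported result is just one of the drawn estimates.

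\textbf{Stopping on significance.} Once a significant draw ($|X_n|\geq t\Sigma$) is in hand, the researcher reports it. By Assumption~\ref{assump:tiebreak}(b) he reports the one with the lowest p-value. This already secures the maximal publication probability $1$, so further draws bring zero benefit at strictly positive cost, and he stops. If all current draws are insignificant, the only state variable is the number of draws $n$ made so far, because any insignificant estimate yields publication probability $1/\delta$ and reporting one is equivalent to reporting any other.

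\textbf{The myopic rule is optimal.} Under the belief $\Theta=0$, each new draw is significant with probability $q:=\mathbb{P}_{X\sim\mathcal{N}(0,\Sigma^2)}(|X|\geq t\Sigma)$, independently of the past. The marginal gain from one more draw is therefore $\Delta_f(\delta)=q(1-1/\delta)V$, and the marginal cost of the $(n+1)$-th draw is $\kappa_f c_{n+1}$. I would compare the myopic rule with the full dynamic program using the standard monotone-problem argument.
\begin{itemize}
\item Since $c_n$ is strictly increasing and the one-step gain $\Delta_f$ does not depend on $n$, the set $\{n:\kappa_f c_{n}\geq \Delta_f(\delta)\}$ is closed under moving forward: once a draw is not worth it, no later draw is.
\item I would show by backward reasoning that the continuation value of drawing at state $n$ equals the one-step-lookahead value whenever $\kappa_f c_{n+1}\geq\Delta_f$. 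Because $c_n\to\infty$, the horizon is effectively finite, so backward induction from a large $N$ is legitimate.
\item For $n$ with $\kappa_f c_{n+1}<\Delta_f$, I would show drawing is strictly optimal. Drawing once and then stopping already yields a strict improvement over stopping now.
\item Tie-breaking (a) resolves equality in favour of not drawing, which explains the strict inequality in the definition of $n^*_f$.
\end{itemize}
This establishes that the researcher draws up to $n^*_f$ estimates, stopping earlier only upon obtaining a significant result.

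\textbf{Comparative statics.} $\Delta_f(\delta)$ is increasing in $\delta$, so the set $\{n:\kappa_f c_n<\Delta_f(\delta)\}$ expands as $\delta$ rises. Hence its maximum $n^*_f$ is weakly increasing in $\delta$. The set is nonempty because $c_1=0<\Delta_f$ for every $\delta>1$, and it is finite because $c_n\to\infty$.

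\textbf{Main obstacle.} The delicate step is justifying that the one-step-lookahead rule is globally optimal. One must rule out that a researcher facing $\kappa_f c_{n+1}\geq\Delta_f$ would still gain by planning several further draws. I would handle this by noting that the value of any plan with $k$ more draws is
\[
(1-(1-q)^k)(1-1/\delta)V-\kappa_f\sum_{j=1}^k c_{n+j}.
\]
Its increments in $k$ equal $(1-q)^{k-1}\Delta_f-\kappa_f c_{n+k}$, and these are strictly decreasing in $k$. The objective is therefore concave in $k$, and it is maximized exactly where the first increment becomes nonpositive.
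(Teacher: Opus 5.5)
Your itemized argument is the paper's proof, written out more carefully, and it is correct: stop at the first significant draw; when all draws are insignificant the only state is the number $n$ of draws; then run backward induction from a large finite horizon, comparing the one-step gain $\Delta_f(\delta)$ with the increasing cost $\kappa_f c_{n+1}$, with Assumption~\ref{assump:tiebreak}(a) giving the strict inequality. Your comparative statics are also correct.

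The flaw is in the paragraph you label the main obstacle. The expression $(1-(1-q)^k)(1-1/\delta)V-\kappa_f\sum_{j=1}^k c_{n+j}$ is not the value of the plan ``draw up to $k$ more estimates''. By your own first step, the researcher stops at the first significant draw, so the $j$-th additional draw is made, and paid for, only with probability $(1-q)^{j-1}$. Your formula is instead the value of a non-adaptive plan that keeps paying for draws after a significant result. That plan is dominated, so showing it peaks at $k=0$ does not exclude the adaptive plans you need to exclude. Taken at face value, your rule even contradicts the proposition. It says stop where the increment $(1-q)^{k-1}\Delta_f-\kappa_f c_{n+k}$ first becomes nonpositive. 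Whenever $(1-q)\Delta_f\le\kappa_f c_{n+2}<\Delta_f$, this stops after $n+1$ draws, even though $n^*_f\ge n+2$.

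The correct value is $(1-(1-q)^k)(1-1/\delta)V-\kappa_f\sum_{j=1}^k(1-q)^{j-1}c_{n+j}$, whose increments are $(1-q)^{k-1}(\Delta_f-\kappa_f c_{n+k})$. These need not be monotone in $k$, so the objective is not concave in general. However, each increment has the sign of $\Delta_f-\kappa_f c_{n+k}$, which is positive exactly when $n+k\le n^*_f$. The value is therefore single-peaked with maximum at $k=\max\{n^*_f-n,0\}$, which gives both halves of the stopping rule at once. Alternatively, drop that paragraph: the backward induction in your itemized steps already settles the point it was meant to address.
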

\begin{proof}
See Appendix~\ref{pf:prop:fast_strat}.
\end{proof}

Based on the researcher's optimal strategy, we have the following result. 

\begin{theorem}\label{thm:fast}
There exists a threshold $\overline{\delta}_f \in (1, \infty)$ such that fast p-hacking exacerbates publication bias if $\delta > \overline{\delta}_f$ and has no effect otherwise. Moreover, the effect of fast p-hacking converges to zero when $\delta\rightarrow \infty$; that is, $\lim\limits_{\delta\rightarrow\infty}[\gamma_f(\delta) - \gamma(\delta)]=0$. 
\end{theorem}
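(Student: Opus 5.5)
The plan is to compute $\gamma_f(\delta)$ in closed form from the stopping rule in Proposition~\ref{prop:fast_strat} and compare it with $\gamma(\delta)$. Write $S=\{x:|x|\ge t\Sigma\}$ and let $q=\mathbb{P}_{X\sim\mathcal N(\Theta,\Sigma^2)}(X\in S)\in(0,1)$ be the true probability of significance. Let $m_S=\mathbb{E}(X\mid X\in S)$ and $m_I=\mathbb{E}(X\mid X\notin S)$.

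Without p-hacking, the published mean is
\[
\gamma(\delta)=\frac{q m_S+(1-q)m_I/\delta}{q+(1-q)/\delta}.
\]
Proposition~\ref{prop:nohack} gives $\gamma(\delta)>\Theta$. Since $\gamma(1)=\Theta$ and $\gamma(\delta)$ is a weighted average of $m_S$ and $m_I$, this implies $m_S>\Theta>m_I$; I would verify the latter directly, or else reuse the argument behind Proposition~\ref{prop:nohack}.

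Under fast-only hacking with $n^*=n^*_f(\delta)$ draws allowed, the researcher stops at the first significant draw. The reported estimate is therefore significant with probability $Q_n:=1-(1-q)^{n}$. In that case its conditional law is that of $X\mid X\in S$, because the draws are i.i.d. and the stopping time is independent of the value of the significant draw. Otherwise the reported estimate is insignificant: by Assumption~\ref{assump:tiebreak}(b) it is the one with the lowest p-value among $n$ insignificant draws. Publication then selects with weights $1$ and $1/\delta$.

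The delicate point is the insignificant branch. The reported value is not distributed as $m_I$, since it is the insignificant draw of largest $|X|$. I would handle it by symmetry and direct computation. The mean of the max-$|X|$ draw among $n$ insignificant draws is $\mathbb{E}(X\mid |X|=|X|_{(n)},\,|X|<t\Sigma)$. Because $\Theta>0$, for each value of $|X|$ the sign is positive with probability $\phi$-ratio $\ge 1/2$, so this conditional mean is also of the form $\mathbb{E}(|X|\cdot \tanh(\cdot))$. I expect this to be the main obstacle. A cleaner route is to note that the published mean is
\[
\gamma_f=\frac{Q_n m_S+(1-Q_n)\tilde m_n/\delta}{Q_n+(1-Q_n)/\delta},
\]
and that $m_S>\tilde m_n$, since every significant value exceeds in $|X|$ every insignificant one and the sign-weighting $\tanh(\Theta|x|/\Sigma^2)$ is increasing in $|x|$. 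Monotonicity of a weighted average then gives two effects: raising the weight on $m_S$ from $q/(q+(1-q)/\delta)$ to $Q_n/(Q_n+(1-Q_n)/\delta)$ strictly increases the mean, and replacing $m_I$ by $\tilde m_n$ should also weakly raise it (the max-$|X|$ selection increases $\mathbb{E}[X]$ by the same $\tanh$ monotonicity). If that second step fails to be monotone, I would instead just bound $\tilde m_n\le m_S$ and argue with the first effect alone combined with a comparison; I would try the tanh-monotonicity argument first.

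For the threshold, $\Delta_f(\delta)=\mathbb{P}_0(|X|\ge t\Sigma)(1-1/\delta)V$ is continuous and strictly increasing from $0$ to a finite limit. Define $\overline\delta_f$ as the $\delta$ at which $\Delta_f(\delta)=\kappa_f c_2$, if that level is below the limit; otherwise the claim would need $\overline\delta_f$ finite, and I would note that one extra draw is profitable for large $\delta$ under the implicit parameter condition. For $\delta\le\overline\delta_f$ we have $n^*=1$, so $\gamma_f=\gamma$ with no effect; for $\delta>\overline\delta_f$ we have $n^*\ge2$, which gives strict exacerbation. The condition $\overline\delta_f>1$ holds because $\Delta_f(1)=0<\kappa_f c_2$. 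Finally, as $\delta\to\infty$, $n^*$ is bounded because $\Delta_f$ is bounded and $c_n\to\infty$. Both weights then tend to $1$, so both $\gamma$ and $\gamma_f$ converge to $m_S$, and hence $\gamma_f-\gamma\to0$.
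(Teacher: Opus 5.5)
Your proposal is correct and follows essentially the same route as the paper: you split the published mean into a significant branch distributed as $X\mid |X|\ge t\Sigma$ and an insignificant branch given by the max-$|X|$ draw. You then use the monotonicity of $x\tanh(\Theta x/\Sigma^2)$ (the paper's Lemma~\ref{lem:key}) to get $e_a>e_{bc}^f$ and that $e_{bc}^f$ rises with $n$, define $\overline\delta_f$ by $\Delta_f(\overline\delta_f)=\kappa_f c_2$, and obtain the limit because both weights on $e_a$ tend to one.

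Your remark that a finite $\overline\delta_f$ requires $\kappa_f c_2<\mathbb{P}_{X\sim\mathcal N(0,\Sigma^2)}(|X|\ge t\Sigma)\,V$ is a valid point that the paper leaves implicit.
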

\begin{proof}
See Appendix~\ref{pf:thm:fast}.
\end{proof}

The intuition behind Theorem~\ref{thm:fast} is as follows. Fast p-hacking enables the researcher to hide insignificant estimates, making significant estimates more likely to be reported than insignificant ones. In terms of overestimating the magnitude of the treatment effect, this force works in the same direction as selective publication, thereby exacerbating the publication bias. Moreover, as $\delta\rightarrow\infty$, the effect of fast p-hacking on publication bias vanishes since all insignificant estimates are unpublished in that limit, regardless of the extent to which the researcher hides them. When $\delta \leq \overline{\delta}_f$, fast p-hacking is never undertaken and has no effect on the distribution of published results. 

\subsection{General Model}
\label{sec:joint}

In this section, we consider situations where the researcher can undertake both types of p-hacking. Beyond generalizing our previous results, we emphasize how the two types of p-hacking interact. 

As before, it is strictly optimal for the researcher to stop engaging in any form of p-hacking once he obtains a significant result. If all his current results are insignificant, the decision of whether to draw another estimate depends on the expected benefits. Let $X_k^*:= \text{arg}\max\limits_{X_\kappa: \kappa\leq k}|X_\kappa|$ denote the researcher's ``best current estimate'' if he has drawn $k$ estimates (i.e., the one with the lowest p-value).\footnote{Throughout the paper, we restrict attention to the generic situations where no two estimates have identical absolute values.} Additionally, we let $\Delta_h:=\mathbb{P}_{X\sim \mathcal{N}(0, \Sigma^{2})}\left[|X|>\overline{X}_s\right] \times \mathbb{E}_{X\sim \mathcal{N}(0, \Sigma^{2})}\left[V(1-\frac{1}{\delta}) - \kappa_s C(X)\mid |X|>\overline{X}_s \right]$ and $n^*:= \max\{n\mid k_f\cdot c_n < \Delta_h\}$. 

\begin{proposition}\label{prop:combined_strat}
The researcher draws at most $n^*$ estimates. His optimal p-hacking strategy can be represented by a slow threshold $\overline{X}_s$ and a decreasing series of fast thresholds $\{\overline{X}_f^n\}_{n=1}^{n^*-1}$ satisfying
\[0< \overline{X}_s <\overline{X}_f^{n^*-1} < \cdots < \overline{X}_f^{2} < \overline{X}_f^{1}< t\Sigma. \]
Suppose he has already drawn $k<n^*$ estimates. He draws a $(k+1)$-th estimate if and only if his best current estimate $X_k^*$ satisfies $|X_k^*|<\overline{X}_f^k$. After he stops drawing new estimates, he undertakes slow p-hacking if and only if $|X_k^*|\in (\overline{X}_s,t\Sigma)$ and reports the following result:
\begin{align*}
(\hat{X}, \hat{\Sigma}) = \begin{cases}
(\xi(X_k^*), |\xi(X_k^*)|/t) & \quad\text{if }|X_k^*|\in (\overline{X}_s, t\Sigma),  \\
            (X_k^*,\Sigma) & \quad\text{otherwise}.
        \end{cases}
\end{align*}
\end{proposition}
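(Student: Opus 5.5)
We proceed by backward induction on the number of draws, treating the problem as an optimal stopping problem whose state after $k$ draws is $|X_k^*|$. The only relevant information is $|X_k^*|$: the researcher believes $\Theta=0$, so new draws are i.i.d.\ $\mathcal{N}(0,\Sigma^2)$, and the publication rule and slow-hacking cost are symmetric in sign.

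\textbf{Step 1 (terminal slow-hacking stage).} Once drawing stops, the researcher faces exactly the slow-only problem of Proposition~\ref{prop:slow_strat} applied to $X_k^*$. It is optimal to report the best current estimate. The terminal value is
\[
W(x)=\max\{V/\delta,\;V-\kappa_s C(x)\}\ \text{ for } x<t\Sigma, \qquad W(x)=V \ \text{ for } x\ge t\Sigma .
\]
Since $C$ is decreasing in $|x|$ on $[0,t\Sigma)$ and vanishes at $t\Sigma$, and Assumption~\ref{assump:largekappas} holds, $W$ is nondecreasing in $|x|$. The slow threshold is the same $\overline{X}_s$ as in Proposition~\ref{prop:slow_strat}, and the reported result has the stated form. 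In particular, a significant estimate yields $V$, so stopping is strictly optimal once one is obtained.

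\textbf{Step 2 (gain from one more draw and the cap $n^*$).} Let $G(x)=\mathbb{E}_{X}[W(\max\{x,|X|\})]-W(x)$ be the expected gain from one extra draw at state $x$. Since $W$ is nondecreasing, $G$ is nonincreasing in $x$. It equals zero for $x\ge t\Sigma$ and is strictly decreasing on $(\overline{X}_s,t\Sigma)$, where $W$ is strictly increasing. On $[0,\overline{X}_s]$ we have $W=V/\delta$, so $G(x)=\mathbb{E}[(W(|X|)-V/\delta)^+]$ is constant and equals $\Delta_h$ as defined.

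If the researcher were to stop after the next draw, the $(k+1)$-th draw is worthwhile at state $x$ iff $\kappa_f c_{k+1}<G(x)$. Since $c_n$ is strictly increasing and $G\le\Delta_h$, no draw beyond $n^*$ is ever worth taking. This gives the cap of $n^*$ draws.

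\textbf{Step 3 (thresholds via one-step look-ahead).} We show the problem is monotone, so the one-step look-ahead rule is optimal. The continuation set $\{x: \kappa_f c_{k+1}<G(x)\}$ shrinks in $k$ because costs increase. The state $|X_k^*|$ is nondecreasing along any path. Hence if stopping is optimal at $(k,x)$, it remains optimal at every later state, and the standard monotone-stopping argument applies. Formally, we would run backward induction from $k=n^*$ and verify that the value function $U_k(x)$ equals $\max\{W(x),\mathbb{E}U_{k+1}(\max\{x,|X|\})-\kappa_f c_{k+1}\}$, with continuation exactly when $\kappa_f c_{k+1}<G(x)$.

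Define $\overline{X}_f^k$ as the unique solution in $(\overline{X}_s,t\Sigma)$ of $G(x)=\kappa_f c_{k+1}$ for $k<n^*$. Existence and uniqueness come from the intermediate value theorem: $G$ is continuous, equals $\Delta_h>\kappa_f c_{k+1}$ at $\overline{X}_s$, equals $0$ at $t\Sigma$, and is strictly decreasing in between. Strict monotonicity of $c_n$ then gives $\overline{X}_f^{n^*-1}<\cdots<\overline{X}_f^1$. Both $\overline{X}_f^k>\overline{X}_s$ and $\overline{X}_s>0$ follow from Proposition~\ref{prop:slow_strat} together with $G$ being flat below $\overline{X}_s$. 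Assumption~\ref{assump:tiebreak} resolves indifference at the thresholds in favor of not hacking.

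\textbf{Main obstacle.} The hard part is Step 3: justifying that myopic comparison of $G$ with the next cost is globally optimal when multiple future draws are allowed. The option value of later draws could in principle raise continuation above the one-step gain. I would handle this with the monotone-problem argument: induct backward, showing $U_{k+1}=W$ on $\{x\ge \overline{X}_f^{k+1}\}$, and that on $\{x<\overline{X}_f^k\}$ continuing beats stopping already via the one-step deviation. The extra option value only matters below $\overline{X}_f^{k+1}<\overline{X}_f^k$, where continuing is already chosen, so it does not move the threshold.

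A secondary issue is that the boundary value equals $\Delta_h$ exactly, which requires the state being at or below $\overline{X}_s$ to be payoff-irrelevant. Continuity of $G$ also needs continuity of $C$, which follows from Remark~\ref{remark:uniquemin} and Berge's maximum theorem.
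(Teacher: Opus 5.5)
Your proof is correct, but its key step differs from the paper's.

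The paper runs a full backward induction on $U_n$, $V_n$ and the \emph{full} continuation gain $\Delta_n(x)=V_n(x)+\kappa_f c_{n+1}-U_s(x)$, which includes the option value of later draws. Lemma~\ref{lemma:delta} carries two facts down the stages: the shape of $V_n$ and $\Delta_n$ (flat on $[0,\overline{X}_s]$, strictly monotone on $(\overline{X}_s,t\Sigma)$, flat beyond $t\Sigma$), and the inequality $\Delta_n\ge\Delta_{n+1}$. The paper then defines $\overline{X}_f^n$ by $\Delta_n=\kappa_f c_{n+1}$ and orders the thresholds via $V_n>V_{n+1}$.

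You instead observe that the problem is a monotone stopping problem in the sense of Chow and Robbins. The state $|X_k^*|$ is nondecreasing along every path, the one-step gain $G(x)=\mathbb{E}[(W(|X|)-W(x))^+]$ is nonincreasing, and draw costs increase. Hence the one-stage look-ahead rule is optimal, and everything reduces to the single function $G$, which is exactly the paper's $\Delta_{n^*-1}$.

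What your route buys:
\begin{itemize}
\item An explicit characterization $\overline{X}_f^k=G^{-1}(\kappa_f c_{k+1})$ on $(\overline{X}_s,t\Sigma)$.
\item The strict ordering of the thresholds follows immediately from strict monotonicity of $c_n$.
\item It avoids the comparison $U_{n+1}\ge U_{n+2}$, which the paper justifies only intuitively.
\item It shows that the paper's $\Delta_n$ coincides with $G$ on $[\overline{X}_f^{n+1},\infty)$, so both constructions give the same thresholds.
\end{itemize}
The paper's route, in exchange, is self-contained and describes every stage's value function without appealing to monotone-stopping theory.

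One point to tighten. Starting the backward induction at $k=n^*$ presupposes $U_{n^*}=W$, and that is itself a consequence of the monotone argument rather than a given. Instead, truncate at the first $N$ with $\kappa_f c_N>V(1-1/\delta)$. An $N$-th draw can never pay, because its cost alone exceeds the largest possible gain over stopping, $V-W(x)\le V(1-1/\delta)$; this is the reasoning in the paper's Step~2.1. Then run the induction down from stage $N-1$. For $n^*\le k<N$ the look-ahead stopping set is the whole state space, since $G\le\Delta_h\le\kappa_f c_{k+1}$, and this yields $U_{n^*}=W$ rather than assuming it.
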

\begin{proof}
See Appendix~\ref{pf:prop:combined_strat}.
\end{proof}

Proposition~\ref{prop:combined_strat} shows that, at any moment, the researcher's continuation p-hacking strategy depends on his best current estimate and how many draws have been made. Two features of his optimal p-hacking strategy are worth highlighting. First, the fast p-hacking threshold, $\overline{X}_f^n$, decreases in $n$, indicating that the researcher is less likely to fast p-hack if he has already drawn more estimates. This is driven by the fact that the fast p-hacking cost $c_n$ increases in $n$. Second, the slow p-hacking threshold is independent of $n$. The intuition is that once the researcher undertakes slow p-hacking, he must have decided to stop drawing new estimates, and therefore, only the best current estimate matters. Indeed, this threshold is the same as the p-hacking threshold in the special case with only slow p-hacking (see Section~\ref{sec:slow}). 

\begin{theorem}
\label{thm:combined}
There exist two thresholds $1<\underline{\delta}_h< \overline{\delta}_h<\infty$ such that p-hacking mitigates publication bias if $\delta > \overline{\delta}_h$ and exacerbates publication bias if $\delta < \underline{\delta}_h$.
\end{theorem}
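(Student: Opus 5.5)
We will prove the two regimes separately, using Proposition~\ref{prop:combined_strat} to write $\gamma_h(\delta)$ explicitly and then doing limiting arguments as $\delta\downarrow 1$ and $\delta\to\infty$.

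\textbf{Representation of $\gamma_h$.}
Write $\gamma_h(\delta)=N_h(\delta)/P_h(\delta)$, where $N_h=\mathbb{E}[\hat X\, p(\hat X/\hat\Sigma)]$ and $P_h=\mathbb{E}[p(\hat X/\hat\Sigma)]$. The expectations are taken over the sequence of draws under the optimal stopping and manipulation rule of Proposition~\ref{prop:combined_strat}. The distribution of the reported result is obtained by conditioning on the number of draws $k\le n^*$:
\begin{itemize}
\item the process stops at draw $k$ either with a significant estimate, or with best estimate $|X_k^*|\ge \overline{X}_f^k$ (or $k=n^*$);
\item slow p-hacking then maps the best estimate to $\xi(X_k^*)$ when $|X_k^*|\in(\overline{X}_s,t\Sigma)$.
\end{itemize}
Compare this with $\gamma(\delta)=N(\delta)/P(\delta)$ for a single unmanipulated draw. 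Both $\gamma$ and $\gamma_h$ are mixtures of a significant part, published with weight $1$, and an insignificant part, published with weight $1/\delta$.

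\textbf{Exacerbation for $\delta$ near one.}
The incentive $V(1-1/\delta)$ vanishes as $\delta\downarrow1$. Assumption~\ref{assump:largekappas} and $\kappa_f c_2>0$ then give $\Delta_h\to 0$, so $n^*=1$ and $\overline{X}_s\to t\Sigma$ for $\delta$ close to $1$ (Proposition~\ref{prop:slow_strat}). Hence for $\delta$ near one only slow p-hacking operates, and $\gamma_h=\gamma_s$ there. By Theorem~\ref{thm:slow}, $\gamma_h>\gamma>\Theta$ for $\delta<\min(\underline{\delta}_s,\delta_1)$, where $\delta_1$ is the point below which $n^*=1$.

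To stay self-contained, we would also redo the first-order expansion from the proof of Theorem~\ref{thm:slow}. Manipulated mass of order $(t\Sigma-\overline{X}_s)$ near $\pm t\Sigma$ gets its weight raised from $1/\delta$ to $1$. This moves mass toward $|x|=t\Sigma$, which exceeds $\gamma(\delta)\approx\Theta$ in the relevant direction since $\Theta>0$ and $\phi$ is asymmetric. The expansion shows $\gamma_h-\gamma>0$ to leading order in $\delta-1$. Set $\underline{\delta}_h$ to this threshold.

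\textbf{Mitigation for large $\delta$.}
Let $\delta\to\infty$. Then $n^*$ and the thresholds $\overline{X}_f^k,\overline{X}_s$ converge to limits, because $c_n\to\infty$ bounds $n^*$ uniformly and the payoffs are monotone in $\delta$. Insignificant reports carry weight $1/\delta\to0$, so
\begin{itemize}
\item $\gamma(\delta)\to \mathbb{E}[X\mid |X|\ge t\Sigma]$;
\item $\gamma_h(\delta)$ converges to the mean over published significant reports. These consist of genuinely significant draws plus manipulated draws landing exactly at $\xi(x)$ with $|\xi(x)|=t\hat\Sigma$.
\end{itemize}

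The key step is to show the limiting $\gamma_h$ is strictly between $\Theta$ and the limiting $\gamma$; continuity then extends this to all large $\delta$, giving $\overline{\delta}_h$. We decompose the limiting published distribution as a mixture of two parts:
\begin{enumerate}
\item[(a)] the distribution of the first significant draw. By independence of draws, this is exactly the law of $X$ conditional on $|X|\ge t\Sigma$, so its mean equals $\lim\gamma$. This is the same logic as in Theorem~\ref{thm:fast}, that fast p-hacking has no effect at $\delta=\infty$.
\item[(b)] the manipulated reports $\xi(X_k^*)$. Their positive parts lie in $(\overline{X}_s,\xi(t\Sigma)]$, with $\xi(X)\le t\Sigma$-scale values below those of significant draws.
\end{enumerate}

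Mixing (a) with the lower values in (b) pulls the mean down, giving $\gamma_h<\gamma$. The condition $\gamma_h>\Theta$ needs care, since the manipulated mass is symmetric in sign while the original mass is tilted positive. We would argue as follows. In the limit, $\gamma_h$ equals the mean of a symmetric-in-magnitude map of the best-estimate distribution. Each stopped best estimate $X_k^*$ with $|X_k^*|>\overline{X}_s$ has $\mathbb{P}(X_k^*>0)\ge \mathbb{P}(X_k^*<0)$ conditional on its magnitude when $\Theta>0$, by a likelihood-ratio argument for normals. So the published mean is at least a positive combination that we would show exceeds $\Theta$, mirroring the argument for Theorem~\ref{thm:slow} at $\delta=\infty$.

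\textbf{Main obstacle.}
The hardest step is this lower bound $\gamma_h>\Theta$ at $\delta=\infty$ when multiple draws interact with slow manipulation. The best-estimate distribution after optimal stopping is complicated. My first attempt would be to condition on $|X_k^*|$ and use only the sign-tilt property, reducing the problem to the single-draw computation already done in Theorem~\ref{thm:slow}. A secondary concern is the continuity of the thresholds in $\delta$ across jumps of $n^*$. This is handled by taking $\delta$ large enough that $n^*$ has stabilized, since $n^*$ is monotone and bounded.
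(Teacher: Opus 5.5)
Your overall structure matches the paper's. For small $\delta$, you use $\Delta_h\le V(1-1/\delta)<\kappa_f c_2$, so $n^*=1$ and $\gamma_h=\gamma_s$, and Theorem~\ref{thm:slow} applies. For large $\delta$, you pass to $\delta=\infty$ and split published reports into first significant draws, with mean $e_a$, and manipulated reports, with mean $e_b^h$.

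\textbf{Gap in the step $e_b^h<e_a$.} You argue that the manipulated reports take lower values than the significant draws, but that is true only of their positive parts. Under two-sided selection, manipulated reports also include $\xi(x)\in(-t\Sigma,-\overline{X}_s)$ for $x\in(-t\Sigma,-\overline{X}_s)$, and these lie above every negative significant draw. Moreover, $e_a=\mathbb{E}(X\mid |X|\ge t\Sigma)$ can be far below $t\Sigma$ (it tends to $0$ as $\Theta\downarrow 0$), so $|\xi|<t\Sigma$ alone proves nothing. What you wrote is the one-sided argument, where $e_a>t\Sigma\ge e_b^h$ does suffice.

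The missing idea is the sign tilt, which you reserve for the lower bound. Stopping and slow p-hacking depend only on $|X_k^*|$, so $\eta(x):=\psi(x)/\phi(x)$ is symmetric (Lemma~\ref{lemma:eta}). Together with anti-symmetry of $\xi$, this gives
\[
e_b^h=\frac{\int_{\overline{X}_s}^{t\Sigma}\xi(x)\tanh(\Theta x/\Sigma^2)\,\eta(x)[\phi(x)+\phi(-x)]\,dx}{\int_{\overline{X}_s}^{t\Sigma}\eta(x)[\phi(x)+\phi(-x)]\,dx}<t\Sigma\tanh(\Theta t/\Sigma)<e_a .
\]
The first inequality uses $0<\xi(x)<t\Sigma$ and monotonicity of $\tanh$; the second is Lemma~\ref{lem:keyinequality}.

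\textbf{Lower bound $\lim\gamma_h>\Theta$.} Here the sign tilt alone does not close your argument either. After conditioning on magnitude, you also need the magnitude weight of published originals to be tilted upward relative to that of $|X|$. This does hold: in the limit, $\eta$ is nondecreasing on $(\overline{X}_s,\infty)$, jumping up at each $\overline{X}_f^k$. Lemma~\ref{lem:key} plus $|\xi(x)|>x$ then finishes, so your route can be completed. The paper takes a shortcut: $\gamma_h(\delta)\ge\gamma_s(\delta)$ for all $\delta$ by Theorem~\ref{thm:fastandslow}(a), whose proof is independent of this theorem, and $\lim_{\delta\to\infty}\gamma_s(\delta)>\Theta$ is already in the proof of Theorem~\ref{thm:slow}.

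\textbf{Small-$\delta$ heuristic.} Your optional self-contained heuristic misplaces the mechanism. $t\Sigma\tanh(\Theta t/\Sigma)$ can lie below $\Theta$, and the reweighting from $1/\delta$ to $1$ is of higher order in $t\Sigma-\overline{X}_s$. What drives exacerbation is the first-order upward shift $\xi(x)-x$, which is why the paper obtains $\partial(e_b^s-e_b)/\partial\lambda\to-\infty$.
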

\begin{proof}
See Appendix~\ref{pf:thm:combined}.
\end{proof}

Theorem~\ref{thm:combined} can be understood as follows. When $\delta$ is sufficiently small, the researcher does not undertake fast p-hacking, so the overall effect of p-hacking follows its slow component. When $\delta$ is sufficiently large, the researcher engages in both fast and slow p-hacking. As we point out in Theorem~\ref{thm:fast}, the effect of fast p-hacking converges to zero as $\delta$ approaches infinity. Therefore, in the limit, the overall effect of p-hacking is also captured by the slow component. 

While discussions about p-hacking generally suppose that it is undesirable, our result suggests the importance of considering p-hacking within the context of selective publication and that there exist simple, plausible models in which p-hacking may mitigate publication bias. Nevertheless, it is worth emphasizing that our result does not speak to other potential damaging effects of p-hacking, such as reducing trust in science and distorting the incentives of career researchers.

Theorems \ref{thm:slow} to \ref{thm:combined} compare $\gamma(\delta)$ with $\gamma_s(\delta)$, $\gamma_f(\delta)$, and $\gamma_h(\delta)$, respectively, aiming to investigate the effect of p-hacking on publication bias. The next theorem compares $\gamma_h(\delta)$ with $\gamma_s(\delta)$ and $\gamma_f(\delta)$, which enables us to understand the effect of banning one type of p-hacking, taking the existence of both types of p-hacking as the status quo. 

\begin{theorem}
\label{thm:fastandslow}
(a) Banning fast p-hacking always reduces publication bias. That is, for any $\kappa_s$, we have $\gamma_h(\delta) \geq \gamma_s(\delta) > \Theta$, $\forall \delta$. The first inequality is strict if and only if $n^* > 1$ (i.e., the researcher sometimes draws more than one estimate when fast p-hacking is allowed). \\
(b) Banning slow p-hacking increases publication bias when publication is selective enough. That is,  for any $\kappa_f$, $\gamma_f(\delta) > \gamma_h(\delta)  > \Theta$ when $\delta$ is sufficiently large.
\end{theorem}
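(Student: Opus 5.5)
For part (a), I will compare the general model with the slow-only model through the structure given in Proposition~\ref{prop:combined_strat}. The slow threshold $\overline{X}_s$ is the same in both models, and slow p-hacking acts on the best current estimate through the same map $\xi$. The general model is therefore the slow-only model applied to a different ``pre-slow'' estimate. In the slow-only model this estimate is one draw $X_1\sim\mathcal N(\Theta,\Sigma^2)$. In the general model it is $X^*_K$, the best estimate at the random stopping time $K\le n^*$.

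I will write the published mean as a ratio,
\[\gamma = \frac{\mathbb E[\hat X\, p(\hat X/\hat\Sigma)]}{\mathbb E[p(\hat X/\hat\Sigma)]},\]
and split the support of the pre-slow estimate into regions. Significant estimates are published with weight $1$. Estimates with $|X|\in(\overline X_s,t\Sigma)$ are published with weight $1$ at the value $\xi(X)$. The remaining estimates are published with weight $1/\delta$ at value $X$.

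The key step is to show that fast p-hacking reallocates mass from low $|X|$ toward high $|X|$ in a way that raises the published mean. The continuation rule with thresholds $\overline{X}_f^k$ is symmetric in sign. A redraw replaces the estimate with a fresh independent copy, then takes the max in absolute value. This shifts the law of $|X^*_K|$ upward, and the shift is sign-preserving in distribution because $\Theta>0$ makes positive values likelier than negative ones for each fixed $|x|$.

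The cleanest route is to condition on $|X^*_K|=a$. For a single fresh draw, the conditional probability of a positive sign given $|X|=a$ is $\phi_\Theta(a)/(\phi_\Theta(a)+\phi_\Theta(-a))$, which is increasing in $a$. Selecting the max of several draws gives the same conditional sign law given $|X^*|=a$, because all draws are i.i.d. and the max is attained by a single draw. So the general model changes only the distribution of $a$, and weights it toward larger $a$.

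I then define $g(a)$ as the conditional published contribution given $|X^*|=a$. This is the sign-weighted value times the publication weight, using $\xi$ in the manipulated region. I also define $w(a)$ as the publication weight. Then $\gamma=\mathbb E[g]/\mathbb E[w]$. The claim reduces to showing that moving mass from below $\overline X_f^k$ upward, in the specific form induced by optional redrawing, raises $\mathbb E[g]/\mathbb E[w]$ above the single-draw ratio. I expect this to be the main obstacle, since $g/w$ need not be monotone across the slow region, where $\xi(a)$ is near $t\Sigma$.

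To handle it, I plan to argue draw-by-draw by induction on $n^*$. Each optional redraw removes only low-$|X|$ outcomes, namely those with $|X|<\overline X_f^k\le\overline X_f^1<t\Sigma$. It replaces them with the mixture of a fresh draw and the old value, and by the threshold structure every removed value has smaller $g/w$ than the fresh single-draw average. The published mean of the fresh draw is $\gamma_s>\Theta$. The removed unmanipulated insignificant values have $|g/w|$ at most about $\mathbb E[X\mid |X|<\overline X_s]$-type quantities below $\gamma_s$. This uses Proposition~\ref{prop:nohack}-style monotone-likelihood reasoning, which gives that conditional means over symmetric windows increase with the window.

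Strictness holds when $n^*>1$ because a positive-probability region is then redrawn. When $n^*=1$ the two models coincide. The inequality $\gamma_s>\Theta$ follows from the analogous single-draw computation, mirroring Proposition~\ref{prop:nohack}.

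For part (b), I will use limits as $\delta\to\infty$. By Theorem~\ref{thm:fast}, $\gamma_f(\delta)-\gamma(\delta)\to0$, and $\gamma_f\ge\gamma$. In the limit only results with $|\hat X|\ge t\Sigma$ are published. I will compute $\lim\gamma_h$ and compare it with $\lim\gamma=\lim\gamma_f$. Under slow p-hacking, a positive mass of draws with $|X^*|\in(\lim\overline X_s,t\Sigma)$ is published at $\xi(X^*)$, which lies at the threshold and below the conditional mean of significant draws. As in the intuition after Theorem~\ref{thm:slow}, this strictly lowers the mean. Fast redrawing with bounded $n^*$ only mixes in further conditional-on-significance draws and does not undo this.

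I therefore expect $\lim\gamma_h<\lim\gamma_f$, and by continuity the inequality $\gamma_f(\delta)>\gamma_h(\delta)$ holds for large $\delta$. The bound $\gamma_h>\Theta$ for large $\delta$ follows because the limit of $\gamma_h$ is a mean of values that are at least $t\Sigma$ on the positive side. Negative values are down-weighted by the likelihood ratio, and I will check this is above $\Theta$ by adapting the argument of Theorem~\ref{thm:combined}, whose mitigation region already gives $\Theta<\gamma_h$.
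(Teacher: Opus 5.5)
\paragraph{Part (a): a genuine gap.} Your reduction is sound. The stopping rule depends only on $|X|$, so conditional on $|X^*_K|=a$ the sign law matches a single draw (this is Lemma~\ref{lemma:eta}). Hence $\gamma_h$ and $\gamma_s$ are both averages over $a\ge 0$ of $h(a):=v(a)\tanh(\Theta a/\Sigma^2)$, where $v=\xi$ on $(\overline X_s,t\Sigma)$ and $v(a)=a$ otherwise. The weights are $\eta(a)w(a)[\phi(a)+\phi(-a)]$ for $\gamma_h$ and $w(a)[\phi(a)+\phi(-a)]$ for $\gamma_s$. The way you finish, however, does not work, for three reasons.
\begin{enumerate}
\item The obstacle you anticipate does not exist. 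By Lemma~\ref{lemma:slowhack}, $a<\xi(a)<t\Sigma$, $\xi$ is increasing, and $\xi(a)\to t\Sigma$. So $h=g/w$ is increasing on $[0,\infty)$.
\item The step you substitute, that every value removed by a redraw has $g/w$ below $\gamma_s$, is false. Removed values include all $|X|<\overline X_f^k$, in particular unmanipulated values near $\overline X_s$ and slow-region values. Take $\Theta=0.1\Sigma$, $t=1.96$, $\delta=2$ and slow region $(1.9\Sigma,t\Sigma)$. Then $\gamma_s\approx 0.12\Sigma$, while removed values near $\overline X_s$ have $h\approx 0.36\Sigma$. Moreover, $n^*>1$ occurs at any $\delta$ once $\kappa_f$ is small.
\item First-order dominance of $|X^*_K|$ over $|X_1|$ is not enough on its own. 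The published mean is a ratio whose weight $w$ jumps from $1/\delta$ to $1$ at $\overline X_s$. Pushing mass from below $\overline X_s$ into the slow region adds full weight at a below-average value and can lower the ratio; this is exactly the mitigation mechanism of Theorem~\ref{thm:slow}.
\end{enumerate}

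\paragraph{What is missing.} You need a likelihood-ratio ordering, not just a stochastic one. Use the conventions $\overline X_f^0:=\infty$, $\overline X_f^{n^*}:=0$, and $G(a):=\Phi(a)-\Phi(-a)$. The density in Lemma~\ref{lemma:eta} gives, for $a\in[\overline X_f^m,\overline X_f^{m-1})$,
\[
\eta(a)=mG(a)^{m-1}+\sum_{l=m}^{n^*-1}G(\overline X_f^l)^l .
\]
This is nondecreasing within each band, and at $\overline X_f^m$ it jumps up by $mG^{m-1}(1-G)\ge 0$, where $G=G(\overline X_f^m)$. So $\eta$ is nondecreasing, and it is nonconstant if and only if $n^*>1$. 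Since $h$ is strictly increasing, the covariance argument of Lemma~\ref{lem:key} then gives $\gamma_h>\gamma_s$ exactly when $n^*>1$. The same argument, using $v(a)\ge a$ and a nondecreasing kernel, gives $\gamma_s>\Theta$.

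This would be a compact alternative to the paper's route. The paper splits published results into categories (a), (b), (c) and proves:
\begin{itemize}
\item $e_b^h>e_b^s$ and $e_c^h>e_c$, by within-category dominance;
\item $e_a>e_b^s>e_c$;
\item $q_a^h>q_a^s$ and $q_c^h<q_c^s$.
\end{itemize}
It then concludes from $\gamma_h-\gamma_s>(q_a^h-q_a^s)(e_a-e_b^s)+(q_c^s-q_c^h)(e_b^s-e_c)>0$.

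\paragraph{Part (b).} Your argument for the first inequality matches the paper's. You get $\lim\gamma_f=\lim\gamma=e_a$ from Theorem~\ref{thm:fast}, and $e_b^h<t\Sigma\tanh(\Theta t/\Sigma)<e_a$ as in Theorem~\ref{thm:combined}. Two corrections are needed.
\begin{itemize}
\item Slow-hacked reports satisfy $|\hat X|=|\xi(X^*)|<t\Sigma$, not $|\hat X|\ge t\Sigma$.
\item The bound $\gamma_h>\Theta$ should come from part (a). The paper's proof of Theorem~\ref{thm:combined} obtains that very bound from part (a) of this theorem, so adapting Theorem~\ref{thm:combined} here would be circular.
\end{itemize}
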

\begin{proof}
See Appendix~\ref{pf:thm:fastandslow}.
\end{proof}

As established in Theorem~\ref{thm:fastandslow}, holding the slow p-hacking technology fixed, banning fast p-hacking is unambiguously desirable in reducing publication bias. By contrast, for any given fast p-hacking technology, prohibiting slow p-hacking backfires when selective publication is sufficiently stringent. 

\section{One-Sided Selective Publication}
\label{section:onesided}

This section analyzes the other extremal publication rule, \textit{one-sided selective publication}. To save notation, we let $\delta_{-} = \delta_0 =\delta$ in this section, so the publication rule becomes  
\begin{equation*}
        p(z) = \begin{cases}
1 & \text{ if } z \geq t, \\
1/\delta & \text{ if } z < t,
        \end{cases}
    \end{equation*}
where the parameter $\delta > 1$ is the \textbf{extent of selective publication}. Unlike Section~\ref{section:twosided}, we do \textit{not} restrict attention to $\Theta > 0$ in this section, since the publication rule is not symmetric around zero. 

The purpose of this section is to show that the main results of Section~\ref{section:twosided} continue to hold under one-sided selective publication: allowing p-hacking mitigates publication bias when publication selection is sufficiently severe, but exacerbates it when publication selection is sufficiently mild. To keep the analysis brief, we directly consider our general model where both fast and slow p-hacking are allowed. Also, since the proofs in this section follow similar reasoning to those in Section~\ref{section:twosided}, we relegate them to Online Appendix~\ref{onlineapx_secondaryproof}. 

Similarly to Section~\ref{section:twosided}, we introduce the following tie-breaking assumption. As before, its purpose is to simplify the analysis, and all results remain intact if we relax it. 

\begin{assumption}
\label{assump:tiebreak_onesided}
The researcher adopts the following tie-breaking rule in Section~\ref{section:onesided}. \\
(a) If he is indifferent about undertaking a certain p-hacking action (i.e., his continuation value remains the same with or without a certain p-hacking action), he does \textit{not} p-hack. \\ 
(b) If he is indifferent among multiple results at the time of reporting, he reports the one with the highest t-statistics. 
\end{assumption}

The following proposition states the researcher's optimal p-hacking strategy under one-sided selective publication. Let $\breve{X}_k^*:=\max\limits_{l\leq k}X_l$ denote the researcher's best current estimate if he has drawn $k$ estimates. 

\begin{proposition}
\label{prop:onesided_strat}
Under one-sided selective publication, the researcher draws at most $n^*$ estimates. His optimal p-hacking strategy can be represented by a slow threshold $\overline{X}_s$ and a decreasing series of fast thresholds $\{\overline{X}_f^n\}_{n=1}^{n^*-1}$ satisfying\footnote{For notational simplicity, we reuse $n^*$, $\overline{X}_s$, and $\{\overline{X}_f^n\}_{n=1}^{n^*-1}$ for both the two-sided and one-sided selective publication settings. Note that while the optimal slow p-hacking threshold $\overline{X}_s$ remains identical across both models, the values for $n^*$ and $\{\overline{X}_f^n\}_{n=1}^{n^*-1}$ are specific to the publication environment and differ between the two sections. \label{footnote1}}
\[0< \overline{X}_s <\overline{X}_f^{n^*-1} < \cdots < \overline{X}_f^{2} < \overline{X}_f^{1}< t\Sigma. \]
Suppose the researcher has already drawn $k<n^*$ estimates. He draws a $(k+1)$-th estimate if and only if $\breve{X}_k^* < \overline{X}_f^k$. After he stops drawing a new estimate, he undertakes slow p-hacking if and only if $\breve{X}_k^*\in (\overline{X}_s, t\Sigma)$ and reports the following result:
\begin{align*}
(\hat{X}, \hat{\Sigma}) = \begin{cases}
                        (\xi(\breve{X}_k^*), \xi(\breve{X}_k^*)/t) & \quad\text{if }\breve{X}_k^*\in (\overline{X}_s, t\Sigma),  \\
            (\breve{X}_k^*,\Sigma) & \quad\text{otherwise}.
        \end{cases}
\end{align*}
\end{proposition}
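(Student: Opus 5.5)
The proof mirrors that of Proposition~\ref{prop:combined_strat}, solved by backward induction. The structural change is that under one-sided selection only positive significance is rewarded, so the relevant state after $k$ draws is $\breve{X}_k^*=\max_{l\le k}X_l$ rather than the draw with the largest absolute value. Negative estimates are treated like insignificant ones.

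\emph{Step 1 (reporting stage).} Suppose the researcher stops with estimates $X_1,\dots,X_k$. Since $p$ depends only on whether $\hat X/\hat\Sigma\ge t$, the only worthwhile manipulation is to move some $X_l$ to the significance region $\{\tilde X\ge t\tilde\Sigma\}$.
\begin{itemize}
\item By Remark~\ref{remark:uniquemin}, for $X_l\ge 0$ the cheapest such move costs $\kappa_s C(X_l)$ and lands at $(\xi(X_l),\xi(X_l)/t)$.
\item For $X_l<0$ the cost is at least $\kappa_s C(0)>V$, by Assumption~\ref{assump:largekappas} and monotonicity of $c$ in each argument (which follows from $c_{11},c_{22}>0$ and $c_1(0,0)=c_2(0,0)=0$). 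So such moves are never profitable.
\item $C$ is strictly decreasing on $[0,t\Sigma)$ with $C(t\Sigma)=0$, because the feasible set for a larger $X$ contains points closer to it. Hence the best candidate is $\breve X_k^*$.
\end{itemize}
The researcher slow p-hacks iff $V(1-1/\delta)>\kappa_s C(\breve X_k^*)$, using tie-breaking Assumption~\ref{assump:tiebreak_onesided}(a). This defines $\overline X_s$ through $\kappa_s C(\overline X_s)=V(1-1/\delta)$, or $\overline X_s=t\Sigma$ if no solution exists. Positivity of $\overline X_s$ follows from $C(0)\kappa_s>V$. This is exactly the slow problem of Proposition~\ref{prop:slow_strat} restricted to the positive side, so the threshold coincides with the two-sided one. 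If the researcher does not hack, he reports $\breve X_k^*$ by Assumption~\ref{assump:tiebreak_onesided}(b).

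\emph{Step 2 (value after stopping).} The terminal value is
\[
W(x)=V/\delta+\max\{0,\,V(1-1/\delta)-\kappa_sC(x)\}\ \text{ for } x<t\Sigma, \qquad W(x)=V \text{ for } x\ge t\Sigma.
\]
$W$ is continuous on $(-\infty,t\Sigma)$, weakly increasing, and constant at $V/\delta$ for $x\le\overline X_s$.

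\emph{Step 3 (fast stage).} Under the naive belief $\Theta=0$, a new draw $Y\sim\mathcal N(0,\Sigma^2)$ replaces the state by $\max\{x,Y\}$. The gain from one more draw followed by stopping is
\[
G(x)=\mathbb E[W(\max\{x,Y\})]-W(x)=\mathbb E[(W(Y)-W(x))^+],
\]
which is weakly decreasing in $x$ because $W$ is increasing. For $x\le\overline X_s$ it equals the constant
\[
\Delta_h^{\mathrm{one}}=\mathbb P(Y>\overline X_s)\,\mathbb E\bigl[V(1-1/\delta)-\kappa_sC(Y)^+\,\big|\,Y>\overline X_s\bigr],
\]
which is the one-sided analogue of $\Delta_h$ and defines $n^*$.

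Next I argue, as in Proposition~\ref{prop:combined_strat}, that a one-step-lookahead rule is optimal. The stopping sets $\{x:\kappa_f c_{k+1}\ge G(x)\}$ are upper sets in $x$ and are nested increasingly in $k$, since $c_{k+1}$ increases. The state is monotone, so the problem is monotone (the state never leaves a stopping set once in it), and the myopic rule is optimal. This gives thresholds $\overline X_f^k$ defined by $G(\overline X_f^k)=\kappa_f c_{k+1}$, which decrease in $k$. Each is strictly above $\overline X_s$ whenever $k<n^*$, because $G(\overline X_s)=\Delta_h^{\mathrm{one}}>\kappa_fc_{k+1}$. Each is strictly below $t\Sigma$, because $G(x)\to0$ as $x\uparrow t\Sigma$ by continuity of $W$ there. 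For $k\ge n^*$ we have $\kappa_fc_{k+1}\ge\Delta_h^{\mathrm{one}}\ge G(x)$ for all $x$, so the researcher never draws beyond $n^*$.

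\emph{Main obstacle.} The delicate step is the strict monotonicity and continuity of $G$ on $(\overline X_s,t\Sigma)$, needed to get strict inequalities among the thresholds and unique crossing points. I would handle it by writing
\[
G(x)=\int_{x}^{\infty}(W(y)-W(x))\,\phi(y)\,dy
\]
and differentiating, which gives $G'(x)=-W'(x)\,\mathbb P(Y>x)<0$ wherever $W'>0$, i.e.\ on $(\overline X_s,t\Sigma)$. The jump of $W$ at $t\Sigma$ has size $\kappa_sC(t\Sigma^-)=0$, so there is no discontinuity to worry about, and the tie-breaking rule settles boundary cases.
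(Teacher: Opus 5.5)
Your proof is correct, but it reaches the fast-p-hacking thresholds by a different route than the paper.

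\textbf{The paper's route.} It runs an explicit backward induction (Lemma~\ref{lemma:delta_onesided}) on the true dynamic-programming gain $\Delta_n(x)=V_n(x)+\kappa_f c_{n+1}-U_s(x)$, which already builds in optimal behaviour after the next draw. By induction it shows that each $\Delta_n$ is constant below $\overline{X}_s$, strictly decreasing on $(\overline{X}_s,t\Sigma)$, and zero beyond. It then gets the nesting $\overline{X}_f^{n}>\overline{X}_f^{n+1}$ from $V_n>V_{n+1}$.

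\textbf{Your route.} You observe that this is a monotone stopping problem: the state $\breve{X}_k^*$ never decreases, the myopic gain $G$ is decreasing in the state, and the per-draw cost $\kappa_f c_{k+1}$ is increasing. Hence the one-step look-ahead rule is optimal, and every threshold is a level set of the single explicit function $G$, namely $G(\overline{X}_f^k)=\kappa_f c_{k+1}$.

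\textbf{Comparison.} Your approach is shorter and gives a closed-form description of the thresholds. Strict nesting follows at once from strict monotonicity of $c_n$ and of $G$ on $(\overline{X}_s,t\Sigma)$, and the definition of $n^*$ becomes transparent. Your thresholds coincide with the paper's, because $\Delta_n=G$ on $[\overline{X}_f^{n+1},\infty)$, and that set contains $\overline{X}_f^n$. The paper's route is self-contained: it does not appeal to the Chow--Robbins--Siegmund monotone-case theorem, and it yields extra structural facts such as $V_n>V_{n+1}$.

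\textbf{Two small points to tighten.}
\begin{enumerate}
\item Apply the monotone-case theorem on a finite horizon. This is legitimate because $\kappa_f c_{k+1}>V$ for large $k$, which makes further draws strictly unprofitable. Without the truncation, the infinite-horizon version needs extra conditions.
\item In your formula for $W(x)$ with $x<0$, $C(x)$ must mean the minimal cost of reaching \emph{positive} significance, which is at least $C(0)>V/\kappa_s$. It cannot be the paper's symmetric extension of $C$ to negative arguments. Under that extension, $V(1-1/\delta)-\kappa_s C(x)$ can be positive near $-t\Sigma$, and $W$ would then fail to be monotone.
\end{enumerate}
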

\begin{proof}
See Online Appendix~\ref{pf:prop:onesided_strat}.
\end{proof}

We let $\breve{\gamma}(\delta)$ denote the expected published estimate with one-sided selective publication and no p-hacking, and $\breve{\gamma}_h(\delta)$ denote the expected published estimate with both selective publication and p-hacking. It is not difficult to see that one-sided selective publication gives rise to publication bias, i.e., $\breve{\gamma}(\delta) > \Theta$ for $\delta>1$. We say that p-hacking mitigates publication bias if $\breve{\gamma}(\delta) > \breve{\gamma}_h(\delta) > \Theta$ and exacerbates publication bias if $\breve{\gamma}_h(\delta) > \breve{\gamma}(\delta) > \Theta$. 

\begin{theorem}
\label{thm:onesided}
Under one-sided selective publication, there exist two thresholds $1<\underline{\delta}_1< \overline{\delta}_1<\infty$ such that p-hacking mitigates publication bias if $\delta > \overline{\delta}_1$ and exacerbates publication bias if $\delta < \underline{\delta}_1$.
\end{theorem}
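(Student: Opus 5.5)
The plan follows the structure of the proof of Theorem~\ref{thm:combined}: treat the two regimes $\delta$ near one and $\delta$ large separately. Throughout, I use the optimal strategy characterized in Proposition~\ref{prop:onesided_strat}.

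\textbf{Step 1: weak selection ($\delta$ close to one).} By the analogue of Proposition~\ref{prop:fast_strat}, the one-sided expected gain from an extra draw is proportional to $(1-1/\delta)V$. Since $\kappa_f c_2>0$, this gain falls below the cost for $\delta$ near one, so $n^*=1$ and no fast p-hacking occurs. The model then reduces to slow-only p-hacking of $X_1\in(\overline{X}_s,t\Sigma)$ up to $\xi(X_1)>X_1$, with $\overline{X}_s\to t\Sigma$ as $\delta\downarrow 1$, exactly as in Proposition~\ref{prop:slow_strat}.

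I will write $\breve{\gamma}_h(\delta)$ and $\breve{\gamma}(\delta)$ as ratios $N_h/D_h$ and $N/D$, where $N=\mathbb{E}[X p(X/\Sigma)]$ and $D=\mathbb{E}[p(X/\Sigma)]$. Manipulation affects only the interval $(\overline{X}_s,t\Sigma)$, so I will take a first-order expansion in the interval width $\epsilon = t\Sigma-\overline{X}_s$. On that interval the publication weight rises from $1/\delta$ to $1$, and the reported value changes from $X$ to $\xi(X)\approx t\Sigma$. Hence
\[
\breve{\gamma}_h-\breve{\gamma}\approx \frac{(1-1/\delta)\,\phi\!\left(\frac{t\Sigma-\Theta}{\Sigma}\right)\epsilon}{D}\,\bigl(t\Sigma-\breve{\gamma}\bigr).
\]
As $\delta\to1$, $\breve{\gamma}\to\Theta$, and the sign depends on whether $t\Sigma>\Theta$. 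This is the main obstacle. In the one-sided case the exacerbation claim seems to require $t\Sigma>\breve{\gamma}(\delta)$, which fails for very large $\Theta$. I would first check whether the second-order shift of $\xi$ and of the $\Sigma$ manipulation rescues it. Failing that, I would follow the online appendix's likely route and argue via the mixture intuition: manipulated mass moves from the low-weight region to the threshold $t\Sigma$, which exceeds the mean of the insignificant component, namely the conditional mean of $X$ given $X<t\Sigma$. Making this rigorous requires $\breve{\gamma}$ close to $\Theta<t\Sigma$, or else a cleaner comparison. I would also need $\breve{\gamma}_h>\breve{\gamma}>\Theta$, which follows from $\breve{\gamma}>\Theta$ (stated in the text) and the sign above.

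\textbf{Step 2: strong selection ($\delta\to\infty$).} Unpublished insignificant results receive weight $1/\delta\to0$, so both means converge to conditional means over the significant region. Without hacking, $\breve{\gamma}(\infty)=\mathbb{E}[X\mid X\ge t\Sigma]$. With hacking, by the analogue of Theorem~\ref{thm:fast}, fast redraws only change which significant draws appear, and they are drawn i.i.d. from the same conditional law. The significant part therefore still has mean $\mathbb{E}[X\mid X\ge t\Sigma]$, and the remaining published mass sits at $\xi(\cdot)$ with ratio equal to $t$. The limiting mean is a mixture of $\mathbb{E}[X\mid X\ge t\Sigma]$ and values $\xi(X)\in[t\hat\Sigma,\,\cdot)$. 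I need to check that the manipulated values lie below $\mathbb{E}[X\mid X\ge t\Sigma]$. Because the slow-p-hacking cost makes $\xi(X)$ a point near $t\Sigma$, with $\hat\Sigma\le\Sigma$ allowed, we have $\xi(X)\le t\Sigma<\mathbb{E}[X\mid X\ge t\Sigma]$, so the limit is strictly below $\breve{\gamma}(\infty)$. For the lower bound $\breve{\gamma}_h>\Theta$, I would show that the limiting mean exceeds $\Theta$ because it is supported on $[\xi(\overline{X}_s),\infty)$ with a nontrivial significant component. If $\Theta$ is large this needs care; I would use that the hacked mass vanishes relative to the significant mass when $\Theta\gg t\Sigma$. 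Continuity in $\delta$ then gives the threshold $\overline{\delta}_1$, and Step 1 gives $\underline{\delta}_1$. If needed, I would shrink $\underline{\delta}_1$ to ensure $\underline{\delta}_1<\overline{\delta}_1$.
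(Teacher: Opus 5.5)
Your Step 2 is essentially the paper's mitigation argument. As $\delta\to\infty$, $e_a^h=e_a=\mathbb{E}[X\mid X\ge t\Sigma]$ because the researcher stops at the first significant draw, and the manipulated values satisfy $\xi(X)<t\Sigma<e_a$.

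The genuine gap is in Step 1. The ``main obstacle'' you identify there is an artifact of a mis-scaled expansion, and the claim does not fail for large $\Theta$. Take $\lambda=1/\delta$, $\epsilon=t\Sigma-\overline{X}_s$ and $n^*=1$. The exact identity is
\[
\breve{\gamma}_h(\delta)-\breve{\gamma}(\delta)=\frac{p_b}{p_a+p_b+\lambda p_c}\Bigl[(e_b^h-e_b)+(1-\lambda)\bigl(e_b-\breve{\gamma}(\delta)\bigr)\Bigr].
\]
Your displayed formula keeps only the second term in the bracket, treating the value shift $\xi(x)-x$ as negligible. The opposite is true:
\begin{itemize}
\item Because $c_1(0,0)=c_2(0,0)=0$, the minimal cost $C$ has zero slope at $t\Sigma$. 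The threshold equation $V(1-\lambda)=\kappa_s C(\overline{X}_s)$ then forces $1-\lambda=O(\epsilon^2)$, i.e.\ $\epsilon$ shrinks like $\sqrt{1-1/\delta}$.
\item Meanwhile $\xi'(x)=c_{11}/(c_{11}+c_{22}/t^2)<1$ and $\xi(x)\to t\Sigma$ as $x\to t\Sigma$. So $\xi(x)-x$ is of the order of $t\Sigma-x$, and $e_b^h-e_b$ is of order $\epsilon$.
\end{itemize}
The manipulation term (order $\epsilon$) therefore dominates the weight-jump term (order $\epsilon^2$) whatever the sign of $t\Sigma-\Theta$. Exacerbation near $\delta=1$ thus holds for every $\Theta$, including $\Theta\ge t\Sigma$. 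What you call a ``second-order shift of $\xi$'' is in fact the leading term. The mixture-intuition fallback is unnecessary, and comparing $t\Sigma$ with $\mathbb{E}[X\mid X<t\Sigma]$ rather than with $\breve{\gamma}$ would not settle the sign anyway.

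The paper makes the same point via derivatives. Writing $\tilde\Lambda(\lambda)$ for the bracket, it has $\tilde\Lambda(1)=0$ and
\[
\lim_{\lambda\to1}\tilde\Lambda'(\lambda)=\lim_{\lambda\to1}\partial(e_b^h-e_b)/\partial\lambda-t\Sigma+\Theta=-\infty,
\]
because $\partial\overline{X}_s/\partial\lambda\to+\infty$ (again from $c_1(0,0)=0$) and $\xi'-1<0$ near $t\Sigma$. The finite term $\Theta-t\Sigma$ that worried you is simply swamped.

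A smaller point in Step 2 concerns the lower bound $\lim_{\delta\to\infty}\breve{\gamma}_h(\delta)>\Theta$. It does not follow from the support starting at $\xi(\overline{X}_s)$ when $\Theta$ is large, and an asymptotic ``$\Theta\gg t\Sigma$'' argument says nothing at a fixed $\Theta$. Instead, note that the limit equals
\[
\mathbb{E}[\hat X\mid \breve{X}^*>\overline{X}_s]\ge\mathbb{E}[\breve{X}^*\mid\breve{X}^*>\overline{X}_s]\ge\mathbb{E}[X\mid X>\overline{X}_s]>\Theta.
\]
This uses $\hat X\ge\breve{X}^*$, and the fact that the density of the retained best draw has a nondecreasing likelihood ratio relative to the $\mathcal{N}(\Theta,\Sigma^2)$ density. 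It is a rigorous version of the paper's brief remark that max-selection and slow p-hacking only raise the reported value.
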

\begin{proof}
See Online Appendix~\ref{pf:thm:onesided}.
\end{proof}

Theorem~\ref{thm:onesided} confirms that the main insights of Section~\ref{section:twosided} continue to hold under one-sided selective publication. 

\section{Empirical Demonstration}\label{sec:emp}

Our analysis shows p-hacking can mitigate or exacerbate the effects of publication bias, depending on the selection environment. These results were obtained under simplifying assumptions that may be restrictive. We show in this section that both exacerbation and mitigation can arise under data-generating processes that are empirically plausible.

To better match data, we generalize our model to allow slow p-hacking to be potentially stochastic in Section \ref{section:emp_model}. Section \ref{section:identification} shows that it is possible to identify the selective-publication-only counterfactual at the literature level under the assumption that $\Theta \mid \Sigma = s \sim N(\theta(s),\sigma^2(s))$. Nonetheless, identification and estimation appear tricky in practice. Section \ref{sec:emp_mle} introduces parametric assumptions so that the restricted model can be estimated by maximum likelihood. In Section \ref{section:emp_results}, we estimate our model using two meta-analyses, finding suggestive evidence that both mitigation and exacerbation can arise in practice.

\subsection{Empirical Model with Stochastic Slow p-Hacking}\label{section:emp_model}

This section describes our empirical model, which generalizes the theoretical model by allowing slow p-hacking to be potentially stochastic. As will become clear in the next two sections, our empirical applications are more suited to models of one-sided selection. Therefore, we develop the empirical model based on one-sided selection.\footnote{Online Appendix \ref{app:twosided_emp} treats a more general empirical model in which $\delta_- \in [1, \delta_0]$.} 

We modify the slow p-hacking setting as follows. Suppose a researcher slow p-hacks an estimate $X_n>0$ to positive significance. The associated cost is still $\kappa_s C(X_n)$. However, unlike the baseline model, we now let the generated output $(\hat{X}, \hat{\Sigma})$ be stochastic: it is drawn from an arbitrary distribution $H(\cdot, \cdot \mid X_n, \Sigma)$ and must satisfy $\hat{X}/\hat{\Sigma} \in[t, t_w]$ for some known $t_w \geq t$. %
Note that if we let $H$ be the Dirac-delta function that assigns probability one to $(\xi(X_n), \xi(X_n)/t)$, this modified slow p-hacking setting is identical to that in the theoretical model. As discussed in \cite{simonsohn2020fast}, slow and fast p-hacking differ in how much control researchers have over their p-values. Allowing slow p-hacking to be stochastic captures the fact that this control is imperfect. Researchers nonetheless retain substantially more control under slow than fast p-hacking, as encoded by the support restriction on the former. 

Notice that the researcher's optimal p-hacking strategy in this modified setting still follows Proposition~\ref{prop:onesided_strat}. Intuitively, allowing the outcome of slow p-hacking to be stochastic does not affect the researcher's incentives at any stage of the p-hacking process. Although the outcome of slow p-hacking is now stochastic, slow p-hacking yields the researcher the same payoff and incurs the same cost as in the theoretical model. Therefore, the researcher's slow p-hacking strategy remains unchanged. As a consequence, his incentive to engage in fast p-hacking also remains unchanged, since the continuation payoff from stopping drawing new estimates and potentially engaging in slow p-hacking is the same as in the theoretical model. Therefore, we can continue to use Proposition~\ref{prop:onesided_strat} for characterizing the researcher's p-hacking behavior in the empirical model. 

\subsection{Identification under Normality}\label{section:identification}

Up to this point, our analysis has operated at the \textit{study level}, as our theoretical results characterized $\gamma_h$ and $\gamma$ conditional on the study $(\Theta, \Sigma)$. Going forward, we will turn to identifying parameters at the \textit{literature level}. We focus on the joint distribution of $(\Theta, \Sigma)$ and the mean effect size $\mathbb{E}(\Theta)$, as well as the means under p-hacking and selective publication and selective publication only. With some abuse of notation, we define the latter quantities as:
\begin{equation*}
    \mathbb{E}(\gamma_h) := \frac{\mathbb{E}_{\Theta, \Sigma}(q_h(\Theta, \Sigma; \delta)\gamma_h(\delta))}{\mathbb{E}_{\Theta, \Sigma}(q_h(\Theta, \Sigma; \delta))} \quad \mbox{ and } \quad \mathbb{E}(\gamma) := \frac{\mathbb{E}_{\Theta, \Sigma}(q(\Theta, \Sigma; \delta)\gamma(\delta))}{\mathbb{E}_{\Theta, \Sigma}(q(\Theta, \Sigma; \delta))} ~,
\end{equation*}
where $q_h(\Theta, \Sigma; \delta)$ and $q(\Theta, \Sigma; \delta)$ are the publication probabilities $\mathbb{P}(D = 1 \mid \Theta, \Sigma; \delta)$ under under the respective scenarios. We suppress the dependence of the above parameters on $\delta$ to lighten notation.

To simplify the analysis, we will assume that $\Theta \mid \Sigma = s \sim N(\theta(s), \sigma^2(s))$. This assumption nests the classic random effects model, in which $\Theta$ has a normal marginal distribution and is independent of $\Sigma$. This allows us to identify the parameters of interest:

\begin{theorem}\label{thm:identification_onesided}
   Suppose $t$ and $t_w$ are known and that $\Theta \mid \Sigma = s \sim N(\theta(s), \sigma^2(s))$. Moreover, suppose $\mathbb{P}(\hat{X}_i/\hat{\Sigma}_i > t_w), \mathbb{P}(|\hat{X}_i/\hat{\Sigma}_i| < t) > 0$ and that $\sigma^2(s) > 0$ for all $s \in \text{Supp}(\Sigma)$. Then, the joint distribution of $(\Theta, \Sigma)$, as well as $\delta$ are identified. In particular, $\mathbb{E}(\Theta)$, $\mathbb{E}(\gamma)$ and $\mathbb{E}(\gamma_h)$ are identified. %
\end{theorem}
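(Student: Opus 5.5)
We condition on $\Sigma=s$ throughout. The marginal of $\hat\Sigma$ is not directly the marginal of $\Sigma$: slow p-hacking changes the standard error for results landing in $[t,t_w]$. So the plan is to work only with published results whose reported pair is untouched by p-hacking. These are the results with $\hat X/\hat\Sigma<t$ and those with $\hat X/\hat\Sigma>t_w$. For such results $\hat\Sigma=\Sigma$ and $\hat X=\breve X_k^*$ is some raw draw.

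\textbf{Step 1: the region $z>t_w$.} Here the reported estimate is the first significant draw and is published with probability one. By Proposition~\ref{prop:onesided_strat}, the researcher stops as soon as a draw exceeds $t\Sigma$. Conditional on stopping at a significant draw, that draw has density proportional to $\phi$ on $\{x>ts\}$. Marginalizing over $\Theta\mid\Sigma=s\sim N(\theta(s),\sigma^2(s))$, the density of $X$ given $\Sigma=s$ is $N(\theta(s),s^2+\sigma^2(s))$. Hence the observed density of $\hat X$ on $\{x>t_ws\}$, given $\hat\Sigma=s$, is proportional to this normal density.

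A normal density restricted to an interval of positive length, and known up to scale, identifies its mean and variance: the log-density is quadratic, and its coefficients determine both. This requires the region to have positive probability, which is the assumption $\mathbb P(\hat X/\hat\Sigma>t_w)>0$. So $\theta(s)$ and $s^2+\sigma^2(s)$ are identified, and hence $\sigma^2(s)$.

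There is one subtlety. The weighting over $\Theta$ inside the significant region depends on the stopping rule, through the probability of reaching draw $k$. This probability varies with $\Theta$, so the mixture over $\Theta$ is reweighted. I would handle this by noting the following: given $\Theta$, the density of the final significant draw on $\{x>ts\}$ is exactly $\phi((x-\Theta)/s)$ times a factor $w(\Theta)$, namely the sum over $k$ of the probabilities of reaching $k$ and failing. The observed density is therefore $\int \phi_s(x-\Theta)\,w(\Theta)\,dN(\theta,\sigma^2)$, which is not Gaussian in general. This is the step I expect to be the main obstacle.

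My first attempt is to use the insignificant region instead, where the structure is cleaner. Alternatively, I would argue that the joint law of $(\Theta,X)$ given a specific path is still tractable: $w(\Theta)$ is a known functional of thresholds that are themselves identified.

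\textbf{Step 2: the region $z<\overline X_s/s$, whichever route works.} Results here were not hacked and are published with probability $1/\delta$. Each is the maximum of $n^*$ draws, with fast-thresholds that are monotone. Combining the two regions, I would identify $(\theta(s),\sigma^2(s))$ from the shape of the densities, and then $\delta$ from the ratio of mass below $t$ to mass above $t_w$, given the known normal model. The assumption $\mathbb P(|\hat X/\hat\Sigma|<t)>0$ is used here.

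\textbf{Step 3: the distribution of $\Sigma$.} The publication probability $q_h(s)$ is now a known function of the identified quantities. The published density of $\hat\Sigma$ on unhacked observations is proportional to $f_\Sigma(s)\,q_h(s)$ times the conditional mass of the unhacked region. I would invert this to recover $f_\Sigma$ up to normalization.

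Once the joint law of $(\Theta,\Sigma)$ and $\delta$ are known, $\mathbb E(\Theta)$ follows directly. $\mathbb E(\gamma)$ is an explicit integral of the no-hacking selection model over $\mu$. For $\mathbb E(\gamma_h)$, I would not integrate the model; it is simply the observed mean of $\hat X$ among published studies.
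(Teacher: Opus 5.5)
\textbf{There is a genuine gap, exactly where you flag it in Step 1, and neither of your proposed fixes closes it.}

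\emph{Why Step 1 fails as written.} Given $\Sigma=s$, the published density on $\{x>t_w s\}$ is
$p_{Q,2}(s)^{-1}\int \tfrac{1}{s}\phi\bigl(\tfrac{x-b}{s}\bigr)\,w(b)\,\mu_{\Theta\mid\Sigma}(b\mid s)\,db$, with $w(b)=\sum_{j=0}^{n^*-1}\Phi\bigl(\tfrac{\overline X_f^j-b}{s}\bigr)^j$. Here $n^*$ and the thresholds are unknown, so the truncated-normal argument is invalid.

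\emph{Why your two fallbacks do not work.}
\begin{itemize}
\item The insignificant region is not cleaner. Unhacked reports below $t$ are maxima of $n^*$ draws, reweighted by $n^*\Phi\bigl(\tfrac{x-b}{s}\bigr)^{n^*-1}$. That weight depends on $x$ itself and on the unknown $n^*$. These reports are also supported only below the unknown slow threshold $\overline X_s$.
\item Appealing to ``identified thresholds'' is circular as stated. Nothing in your argument identifies the thresholds before $\mu_{\Theta\mid\Sigma}$; in the paper they are recovered from $w$ only after $\mu_{\Theta\mid\Sigma}$ is known.
\end{itemize}

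\emph{The missing idea is identification at infinity.} The key facts are $w\geq 1$, $w(b)\to 1$ and $\partial_b\log w(b)\to 0$ as $|b|\to\infty$. The paper proceeds as follows.
\begin{itemize}
\item Extend the observed density from $\{x>t_w s\}$ to $\mathbb{R}$ by real analyticity.
\item Deconvolve, which works because the Gaussian characteristic function never vanishes. This recovers $g_1(b,s)=w(b)\mu_{\Theta\mid\Sigma}(b\mid s)/p_{Q,2}(s)$.
\item Read $-1/\sigma^2(s)$ and $2\theta(s)/\sigma^2(s)$ off the tails of $h=\partial_b\log g_1$.
\item Recover $p_{Q,2}(s)=\bigl(\lim_{b\to\infty}g_1/\mu_{\Theta\mid\Sigma}\bigr)^{-1}$.
\end{itemize}

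A lighter repair of your own Step 1 would also work. The integral equals the $N(\theta(s),s^2+\sigma^2(s))$ density at $x$ times $\mathbb{E}[w(b)]$ under the posterior of $b$ given $x$. The posterior mean diverges as $x\to\infty$ because $\sigma^2(s)>0$. Hence the observed density equals that normal density divided by $p_{Q,2}(s)$, times $1+o(1)$. The quadratic in its log is therefore identified from the tail, not from exact normality on the interval.

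\textbf{Steps 2 and 3 have a related problem.} The latent probability of a report below $t$, the function $q_h(s)$, and the mass of your ``unhacked region'' all depend on $n^*$, the fast thresholds and $\overline X_s$. So the normal model alone does not turn the published ratio of mass below $t$ to mass above $t_w$ into $\delta$. The paper avoids this in three steps.
\begin{itemize}
\item It recovers $\mu_\Sigma$ from $Q=2$ observations only. There $\hat\Sigma=\Sigma$ and publication is certain, so $\mu_{\Sigma\mid Q}(s\mid 2)\propto \mu_\Sigma(s)\,p_{Q,2}(s)$. Normalizing this also yields the latent mass $\overline p_{Q,2}$.
\item Reports in $[t,t_w]$ and in $(t_w,\infty)$ are both published with probability one. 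Their published ratio therefore gives $\overline p_{Q,1}$.
\item The latent mass below $t$ is then $1-\overline p_{Q,1}-\overline p_{Q,2}$, and the published share below $t$ identifies $1/\delta$.
\end{itemize}
None of this requires $\overline X_s$ or the fast thresholds.
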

\begin{proof}
See Appendix~\ref{pf:thm:identification_onesided}.
\end{proof}

Our proof makes use of identification-at-infinity as well as deconvolution arguments. As such, identification may be weak and estimation may be challenging in practice. This motivates the additional restrictions that we describe in the next section. 

Theorem \ref{thm:identification_onesided} requires two parametric assumptions. The first is the normality of $\Theta \mid \Sigma$, which allows us to identify the full distribution of $(\Theta, \Sigma)$ using only observations for which $\hat{X}_i/\hat{\Sigma}_i > t_w$. The second is the assumption that selection takes the form of a step function. This allows us to separate the effects of p-hacking from that of selection, to back out $\delta$ and hence $\mathbb{E}(\gamma)$. $\mathbb{E}(\gamma_h)$ is identified since it is the mean of published papers. We can thus infer the amount of bias arising from selective publication and assess whether p-hacking exacerbates or mitigates it. 

Additionally, $t$ and $t_w$ need to be specified. Since a large empirical literature demonstrates the bunching of p-values at $5\%$ and $1\%$ across different fields of study \citep{masicampo2012peculiar, benjamin2018redefine, andrews2019identification}, it is natural to set $t = 1.96$. $t_w$ does not need to be exactly known. It only needs to be sufficiently large so that slow p-hacking is unlikely to exceed it. To accommodate p-hacking up to the 1\% level of significance, it is natural to set $t_w = 2.58 + \varepsilon$.

Note that the model is not fully identified. For example, the thresholds for slow p-hacking as well as the cost functions for fast and slow p-hacking are not identified. 

\subsection{Maximum Likelihood Estimation}\label{sec:emp_mle}

The previous section shows that the counterfactuals of interest are semi-parametrically identified. In practice, however, identification and estimation may be challenging. 

To make headway, we impose that researchers can draw at most one additional estimate, so that $n^*=2$. We set $t_w = 4$ to accommodate slow p-hacking up to the 1\% level of significance (critical value = 2.58) while ensuring sufficient mass in the tail for identification (see Theorem \ref{thm:identification_onesided}). The buffer allows researchers to overshoot their slow p-hacking target by a reasonable amount. Appendix \ref{app:emp_robust} shows that the results in the next section are not sensitive to the choice of $n^*$ and $t_w$.

Moreover, we will assume that $\Theta$ and $\Sigma$ are independent and let $(\theta, \sigma^2):=(\mathbb{E}(\Theta),\text{Var}(\Theta))$. \cite{andrews2019identification} require this assumption to identify the probability of selection using meta-analyses. In our case, independence reduces the nonparametric problem of estimating $(\theta(s), \sigma(s))$ to a parametric problem. Furthermore, as in the empirical sections of \cite{andrews2019identification}, we will assume that $\Sigma \sim \Gamma(\kappa, \lambda)$, where $\kappa$ and $\lambda$ are the shape and the rate parameters of the Gamma distribution to be estimated. 

Finally, while we previously treated $\Sigma$ as given and formulated the slow p-hacking cost essentially as $C(X;\Sigma)$, our estimation adopts a more structured specification. We assume that
\[C(X; \Sigma)=\tilde{C}(X/\Sigma),\]
so that the cost of slow p-hacking depends only on the unmanipulated p-value associated with the estimate being manipulated. This specification reflects the idea that what matters for the difficulty of obtaining statistical significance is not the estimate itself, but its statistical significance relative to its standard error. Intuitively, it is easier to manipulate $X$ in an environment that is noisier. Combined with the assumption that researchers believe $\Theta=0$, this specification implies that $\overline{X}_f^n(s) = \overline{X}_f^n \cdot s$ for constant $\overline{X}_f^n$. 

Under the above assumptions, we are able to specify a likelihood function for a censored version of the data with only a finite number of identified parameters. The likelihood function can be found in Appendix \ref{app:likelihood}. It describes censored data where observations in the bins $(-t,t)$ and $(t, t_w]$ are treated as indistinguishable from each other. As such, we do not have to specify the extent to which researchers slow p-hack $\hat{X}_i$ relative to $\hat{\Sigma}_i$, or any randomness that may be part of the slow p-hacking process, as long as the resulting t-statistic is upper bounded by $t_w$ in absolute value. 

Additionally, we impose the constraints that $-t \leq \overline{X}_f^{n*-1} \leq \cdots \leq \overline{X}_f^1 \leq t$ as well as other model-implied constraints that are described in Appendix \ref{app:likelihood}. To avoid constrained optimization, as well as issues with parameters at the boundary, we implement them as strict inequalities via appropriate transformations of the variables.

Because of weak identification and error from numerical integration, the Hessian can be poorly conditioned. To compute standard errors for $\mathbb{E}(\gamma)$, we implement the Moore-Penrose inverse for the Hessian, dropping eigenvalues that are smaller than $10^{-9} \times$ the largest eigenvalue. In all cases, we verify that the dropped directions are orthogonal to those needed to compute the variance of $\mathbb{E}(\gamma)$, so that omitting them have minimal effects. 

\subsection{Results}\label{section:emp_results}

We apply the above maximum likelihood procedure to data from two meta-analyses and find that both exacerbation and mitigation can be relevant in practice. Section \ref{section:emp_nudge} considers the effectiveness of nudges in promoting desired choices. We find that p-hacking may have mitigated the effects of publication bias in this literature. Section \ref{section:emp_aid} presents our results on the effectiveness of development aid on promoting growth, suggesting that p-hacking has exacerbated the effects of publication bias in this literature.

\subsubsection{Effect of Nudges on Choices}\label{section:emp_nudge}

The meta-analysis of \cite{mertens2022effectiveness} studies the effectiveness of nudges---interventions in the choice architecture---for promoting personally or socially desirable behavior. The analysis is based on 455 estimates from 334 studies\footnote{Studies refer to independent experiments, possibly belonging to the same paper. \cite{mertens2022effectiveness} treat studies as independent units of analysis, which is standard in meta-analysis in Psychology.}, obtained from 213 published and unpublished papers. The authors find that nudges are effective in altering behavior, with the headline result that $\mathbb{E}(\Theta) \approx 0.45$, measured in Cohen's $d$.\footnote{Cohen's $d$ is a standardized measure of effect size commonly used in meta-analysis.} \cite{maier2022no} argues instead that there is substantial selection in publication. Correcting for it using Bayesian model averaging, they estimate $\mathbb{E}(\Theta)$ to be between 0.04 and 0.11, with Bayes factor below $1$, implying weak evidence that the parameter is non-zero.

We apply our method to the data of \cite{maier2022no}.\footnote{This is the corrected version of \cite{mertens2022effectiveness}, though it assigns 2 papers to \texttt{publication\_id} 95. Fixing this discrepancy leads to a total of 213 papers. %
}  Since it does not distinguish between published and unpublished papers, we use ChatGPT 5.6 Sol High to obtain the publication status of each paper, dropping the 6 unpublished papers from our sample\footnote{The unpublished papers have \texttt{publication\_id} 17,18,19,20,22 and 87.}. Following \cite{maier2022no}, we treat studies as independent units of analysis. Some studies present multiple estimates based on the same data. We choose the most significant estimates as our preferred sample. This corresponds to the notion that a paper needs at least one significant result to be ``publishable". Our estimate for $\mathbb{E}(\gamma_h)$ is then the simple mean of these preferred estimates. 

Our final sample comprises 328 studies (from 207 published papers), each contributing one estimate. Histograms of the t-statistics for the full sample and our preferred sample are presented in Figure \ref{fig:emp_nudge}. The apparent bunching of the t-statistics at the threshold of 1.96 suggests appreciable slow p-hacking in this literature. We might therefore expect p-hacking to have a mitigative effect here. Comparing the two plots, we also see that retaining only the most significant estimate does not qualitatively alter the distribution of t-statistics. 

\begin{figure}
\centering
\includegraphics[width=0.48\linewidth]{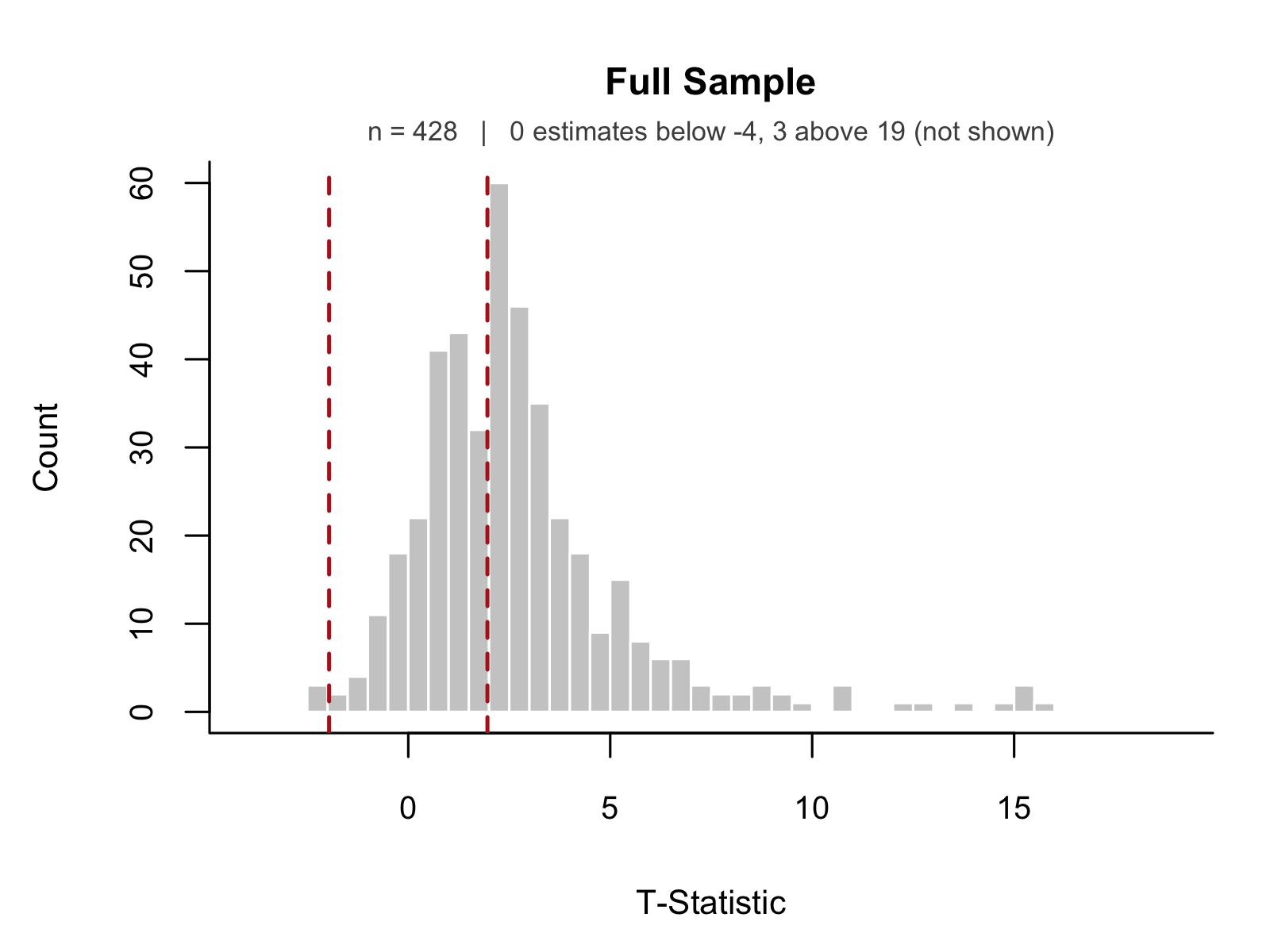}
\includegraphics[width=0.48\linewidth]{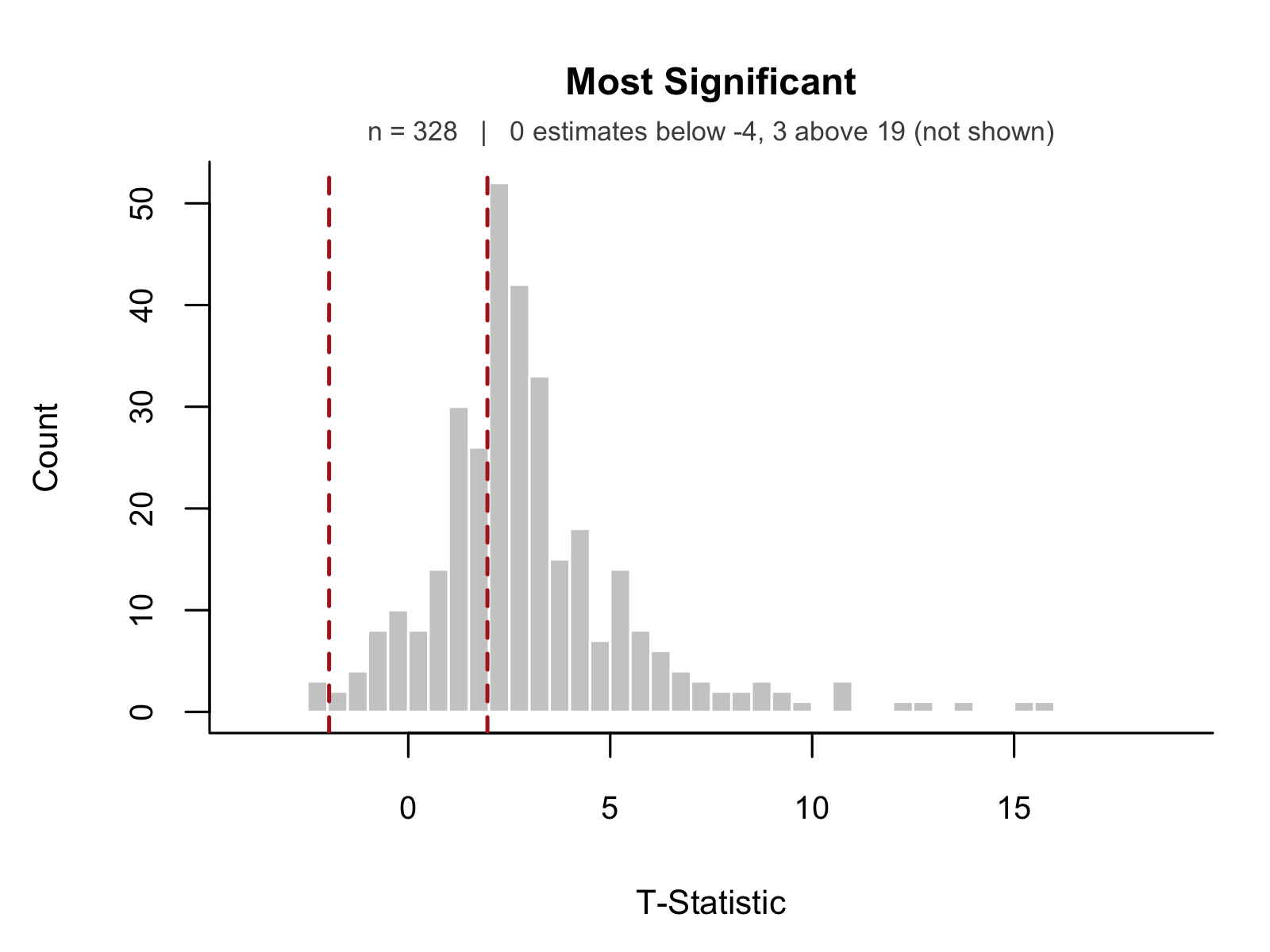}
\caption{Histograms of the t-statistics from the full published sample with multiple estimates per study, and from our preferred sample, which keeps only the most significant estimates. Red dashed lines indicate $\pm 1.96$.}\label{fig:emp_nudge}
\end{figure}

As discussed in Section \ref{sec:emp_mle}, we set $n^* = 2$ and $t_w = 4$. Because the left-tail is non-existent, the two-sided selection model is not identified and we impose the one-sided selection model. Appendix \ref{appendix:emp_robust_nudge} shows that the results below are not sensitive to the choices of $n^*, t_w$ or our choice of preferred sample.

Results are presented in Table \ref{tab:emp_nudge}. We find that negative significant and insignificant results are published at 8.3\% the rate at which positive and significant results are published. Correcting for selective publication and p-hacking, we estimate $\mathbb{E}(\Theta)$ to be 0.171. It is insignificant at the 5\% level and comparable to the upper end of the values obtained by \cite{maier2022no}. We estimate $\mathbb{E}(\gamma)$ to be $0.959$ and $\mathbb{E}(\gamma_h)$ to be $0.531$, suggesting that p-hacking has mitigated the bias from selective publication by 54\%. That the mitigation from slow p-hacking dominates the exacerbation effects in this literature is in line with our intuition from the bunching in Figure \ref{fig:emp_nudge}.

\begin{table}[htbp]
  \centering
    \begin{tabular}{ccccc}\hline\hline
    $1/\delta$ & $\mathbb{E}(\Theta)$ & $\mathbb{E}(\gamma_h)$ & $\mathbb{E}(\gamma)$ & $\mathbb{E}(\gamma) - \mathbb{E}(\gamma_h)$ \\
    \hline
    0.083 & 0.171 & 0.531 & 0.959 & 0.428 \\
    (0.031) & (0.130) & (0.028) & (0.180) & (0.178) \\
    \hline\hline
    \end{tabular}%
    \caption{MLE on the dataset of \cite{mertens2022effectiveness} allowing for selective publication and p-hacking. $1/\delta$ is the relative publication probabilities for negative significant and insignificant results compared to positive and significant results. $\mathbb{E}(\Theta)$ is the mean effect size under the true distribution of latent studies. $\mathbb{E}(\gamma_h)$ is the observed mean under both selective publication and p-hacking. $\mathbb{E}(\gamma)$ is the counterfactual mean under selective publication only. Standard errors computed using Moore-Penrose inverse are presented in round brackets.} \label{tab:emp_nudge}%
\end{table}%

\subsubsection{Effect of Development Aid on Economic Growth}\label{section:emp_aid}

\cite{doucouliagos2011ineffectiveness} conducts meta-analysis on the effect of development aid on economic growth. Using a dataset of 105 papers and 1217 estimates, the authors find that positive and significant results are selectively published, and that correcting for these results using meta-regressions leads to estimates of partial correlations\footnote{Partial correlations are obtained by standardizing regression estimates to make them more comparable across studies. They are commonly used in meta-analysis.} that are small in magnitude---between 0.02 to 0.04---and often statistically insignificant. 

We apply our method to the data of \cite{doucouliagos2011ineffectiveness}, which was updated in \cite{doucouliagos2013robust} to include a total of 113 papers and 1347 estimates. As before, we select the most significant estimate in each study as the preferred estimate. We plot the t-statistics from the full sample as well as our preferred sample in Figure \ref{fig:emp_aid}. In contrast to the literature on nudges, there is no pronounced excess mass at the significance thresholds, so that we might expect less mitigative effects from slow p-hacking. Comparing the two plots in Figure \ref{fig:emp_aid}, we see that the distribution of our preferred estimates puts substantially less mass in the insignificant region than the full sample. This follows mechanically from our selection criterion. If we believe that a paper is assessed based on its strongest results, then our preferred estimates more closely reflect the statistics that are marginal for an editor's decision, and which are therefore candidates for p-hacking. Our theoretical analysis also suggests that reducing the mass of insignificant results will make exacerbation less likely. As such, if the most significant sample is not the marginal sample, we loosely interpret the results below as a lower bound on the exacerbation effect of p-hacking in this literature. 

\begin{figure}
\centering
\includegraphics[width=0.48\linewidth]{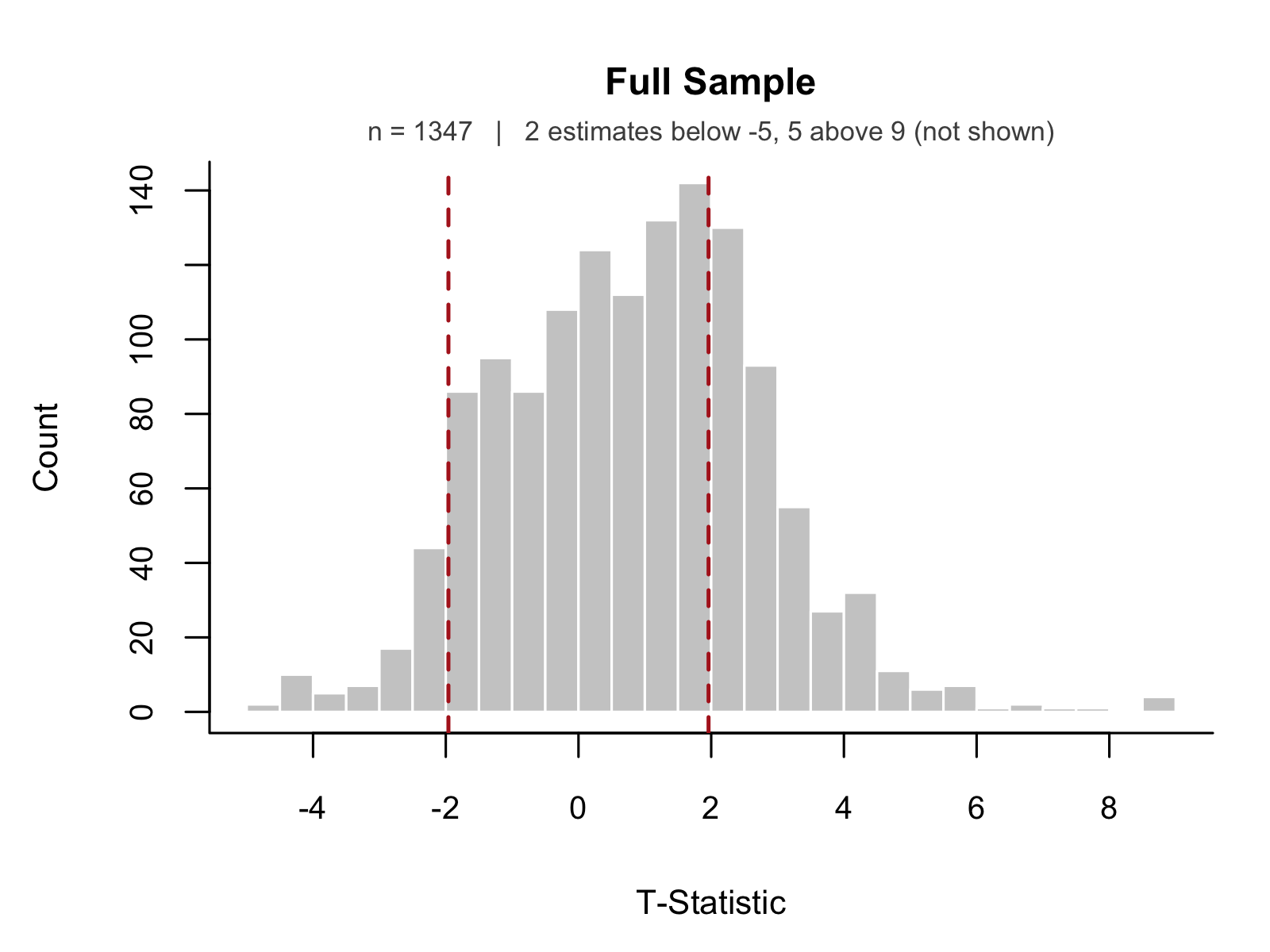}
\includegraphics[width=0.48\linewidth]{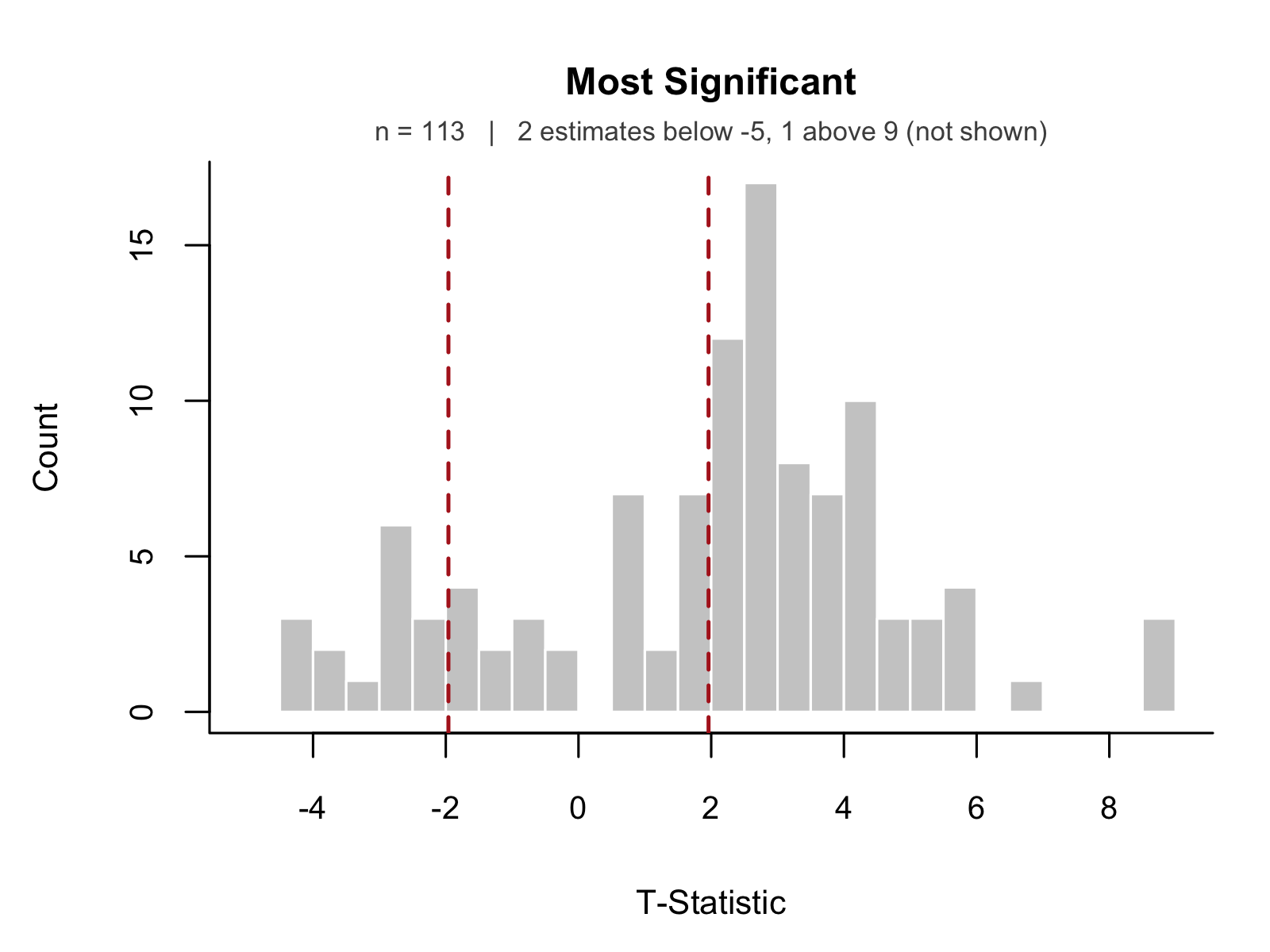}
\caption{Histograms of the t-statistics from the full published sample with multiple estimates per paper, and from our preferred sample, which keeps only the most significant estimates. Red dashed lines indicate $\pm 1.96$.}\label{fig:emp_aid}
\end{figure}

As before, we fix $n^* = 2$, $t_w = 4$. We also impose one-sided selection, since \cite{doucouliagos2011ineffectiveness} argues that the two leading sources of publication bias in this literature are (1) researcher idealism and (2) incentives for increasing development funding, both of which favor positive estimates. Appendix \ref{appendix:emp_robust_aid} shows that the results below are not sensitive to the aforementioned choices, although fitting a general model that estimates $\delta_-$ leads to estimates with substantially larger standard errors.

Results are presented in Table \ref{tab:emp_aid}. We find that negative significant and insignificant results are published at 15\% the rate at which positive and significant results are published. Correcting for selection and p-hacking, estimate for $\mathbb{E}(\Theta)$ to be $-0.113$,  which is negative but insignificant at the 5\% level. We estimate $\mathbb{E}(\gamma)$ to be $0.121$ and $\mathbb{E}(\gamma_h)$ to be $0.185$, suggesting that p-hacking has exacerbated the bias from selective publication by 27\%. \cite{doucouliagos2013robust} argues that researchers in the aid literature cherry-pick control variables in order to achieve positive and significant estimates. To the extent that this behavior is captured by our stylized model of fast p-hacking, it would explain why exacerbation dominates mitigation in this literature. 

\begin{table}[htbp]
  \centering
    \begin{tabular}{ccccc}\hline\hline
     $1/\delta$ & $\mathbb{E}(\Theta)$ & $\mathbb{E}(\gamma_h)$ & $\mathbb{E}(\gamma)$ & $\mathbb{E}(\gamma) - \mathbb{E}(\gamma_h)$ \\
    \hline
     0.149 & -0.113 & 0.185 & 0.121 & -0.064 \\
    (0.057) & (0.061) & (0.032) & (0.038) & (0.017) \\
    \hline\hline
    \end{tabular}%
    \caption{MLE on the dataset of \cite{doucouliagos2011ineffectiveness} allowing for selective publication and p-hacking. $1/\delta$ is the relative publication probabilities for negative significant and insignificant results compared to positive and significant results. $\mathbb{E}(\Theta)$ is the mean effect size under the true distribution of latent studies. $\mathbb{E}(\gamma_h)$ is the observed mean under both selective publication and p-hacking. $\mathbb{E}(\gamma)$ is the counterfactual mean under selective publication only. Standard errors computed using Moore-Penrose inverse are presented in round brackets.} \label{tab:emp_aid}%
\end{table}%

\section{Conclusion}\label{section--conclusion}

This paper studies the effects of p-hacking on the bias of published estimates in environments with selective publication. We show that under both one- and two-sided selection, fast p-hacking unambiguously exacerbates the effects of publication bias, whereas slow p-hacking can either exacerbate or mitigate it, depending on the extent of selection in the publication process. The possibility of mitigation echoes the work of \cite{kuhn1962structure}: communities of biased researchers can nonetheless produce useful work due to self-correcting dynamics in the scientific process.

Although these results are developed in a stylized setting, we show that the selection-only counterfactual mean is identified under a normality assumption. Taking the model to the data, we find suggestive evidence that p-hacking mitigates the effects of selective publication in the literature on the effects of nudges, but exacerbates it in the literature on the effects of aid on growth.

Discussions about p-hacking and selective publication often treat the two phenomena separately. Our results show that it is important to understand p-hacking in the context of the selection environment that produced it, echoing \cite{friese2020p}, which observes their interacting effects in simulations. We stress, however, that our results concern only the bias of published estimates. They say nothing about the other pernicious effects of p-hacking, such as inflating the number of false positives, eroding trust in science, or distorting the incentives of career researchers. What is clear is that p-hacking and selective publication cannot be understood in isolation. Only by studying them together can we hope to model the scientific process faithfully and design publication policies that preserve its credibility.

\newpage
\begin{singlespace}
\bibliographystyle{chicago}
\bibliography{phacking.bib}
\end{singlespace}

\clearpage
\newpage

\appendix

\section{Proofs of Main Results}

\subsection{Proof of Remark \ref{remark:uniquemin}}
\label{pf:remark:uniquemin}

If $X\geq t\Sigma$, the unique minimizer is $(\tilde{X}, \tilde{\Sigma}) = (X, \Sigma)$, i.e., it is unnecessary to undertake slow p-hacking, as the estimate is already significant. If $X\in\left[0, t\Sigma\right)$, then the optimization problem is equivalent to $\min\limits_{\tilde{\Sigma}} c(t\tilde{\Sigma} - X, \Sigma - \tilde{\Sigma})$. This objective function is strictly convex in $\tilde{\Sigma}$, since 
\[\frac{d^2 c(t\tilde{\Sigma} - X, \Sigma - \tilde{\Sigma})}{(d \tilde{\Sigma})^2} = c_{11} t^2 - 2c_{12} t + c_{22},\]
which is strictly positive because $c_{11}>0$, $c_{22}>0$, and $c_{12} = 0$. Therefore, we can infer that the optimization problem admits a unique minimizer for $X\in\left[0, t\Sigma\right)$. 

\subsection{Proof of Proposition \ref{prop:nohack}}
\label{pf:prop:nohack}

We begin with the following lemma, which will be used in several proofs in the appendix. For a given symmetric non-negative function $m(x)$, we define $K(m):=\frac{\int_{-\infty}^{\infty} x\cdot m(x)\cdot \phi(x) d x}{\int_{-\infty}^{\infty} m(x)\cdot \phi(x) d x}$ as the expected value of $x$ under the ``kernel'' $m(x)$. This definition applies to several terms in this paper. For instance, the true treatment effect $\Theta$ can be expressed as $K(m_u)$ where $m_u(x)\equiv 1$ is a uniform kernel function, and $\gamma(\delta)$ can be expressed as $K(m_n)$ where 
\[ m_n(x)=\begin{cases}
1 & \text{ if } |x| \geq t\Sigma \\
1/\delta & \text{ if } |x| < t\Sigma
        \end{cases}
\]
is a kernel function adapted from the publication rule.

\begin{lemma}
\label{lem:key}
If two symmetric non-negative functions $m_1(x)$ and $m_2(x)$ are such that $\frac{m_1(x)}{m_2(x)}$ weakly increases and is not constant in $x$ for $x \in \mathcal{X}:=\{x\geq 0 \mid m_1(x)+m_2(x)>0\}$, then we have $K(m_1) > K(m_2)$.\footnote{Slightly abusing the notation, we define the value of $\frac{m_1(x)}{m_2(x)}$ on the extended real line, so $\frac{m_1(x)}{m_2(x)} = +\infty$ when $m_1(x) > 0$ and $m_2(x) = 0$.}
\end{lemma}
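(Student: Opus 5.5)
The plan is to use the symmetry of $m_1,m_2$ to fold each integral onto $[0,\infty)$. Recall that $\phi$ here is the $\mathcal{N}(\Theta,\Sigma^2)$ density with $\Theta>0$. For a symmetric kernel $m$,
\[
\int_{-\infty}^{\infty} x\, m(x)\phi(x)\,dx=\int_0^\infty x\, m(x)\,[\phi(x)-\phi(-x)]\,dx,
\qquad
\int_{-\infty}^{\infty} m(x)\phi(x)\,dx=\int_0^\infty m(x)\,[\phi(x)+\phi(-x)]\,dx.
\]
Define on $[0,\infty)$ the function $g(x):=\frac{\phi(x)-\phi(-x)}{\phi(x)+\phi(-x)}\,x = x\tanh(\Theta x/\Sigma^2)$ and the weight $w(x):=\phi(x)+\phi(-x)>0$. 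Then $K(m)=\int_0^\infty g\, m\, w\,dx \big/ \int_0^\infty m\, w\,dx$. In words, $K(m)$ is the mean of $g$ under the probability density on $[0,\infty)$ proportional to $m w$.

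The key observation is that $g$ is strictly increasing on $[0,\infty)$ when $\Theta>0$. It is a product of two nonnegative increasing factors, with $x$ strictly increasing and $\tanh(\Theta x/\Sigma^2)>0$ for $x>0$.

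The lemma then reduces to a monotone likelihood ratio comparison. Let $P_i$ be the probability measure on $[0,\infty)$ with density proportional to $m_i w$. The hypothesis says the ratio $m_1/m_2$, hence $dP_1/dP_2$ on $\mathcal{X}$, is weakly increasing, allowing the value $+\infty$ where $m_2=0$. So $P_1$ dominates $P_2$ in the likelihood ratio order, and therefore in first-order stochastic dominance. Since $g$ is increasing, this gives $E_{P_1}g\ge E_{P_2}g$.

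To prove this directly and get strictness, I would use the standard double-integral identity
\[
K(m_1)-K(m_2)\propto \frac{1}{2}\iint_{[0,\infty)^2} [g(x)-g(y)]\,[m_1(x)m_2(y)-m_1(y)m_2(x)]\,w(x)w(y)\,dx\,dy.
\]
The proportionality constant is positive, namely $1/\big(\int m_1w\cdot\int m_2 w\big)$. Monotonicity of the ratio, including the extended-real convention, means that for $x>y$ in $\mathcal{X}$ we have $m_1(x)m_2(y)\ge m_1(y)m_2(x)$. Since $g(x)>g(y)$ for $x>y$, the integrand is nonnegative everywhere. Points outside $\mathcal{X}$ have $m_1=m_2=0$ and contribute zero.

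The main obstacle is strictness. I must show the integrand is positive on a set of positive $w\otimes w$ measure. "Not constant" gives points $y<x$ in $\mathcal{X}$ with $m_1(y)/m_2(y)<m_1(x)/m_2(x)$, so $m_1(x)m_2(y)>m_1(y)m_2(x)$ at that pair. However, I need this on a positive-measure set, not just at a single pair. I would interpret "not constant" in the almost-everywhere sense, which is appropriate since kernels are only relevant up to null sets. Then, by monotonicity, there is a cutoff $r$ such that the sets $\{x: \text{ratio}>r\}$ and $\{y: \text{ratio}<r\}$, or the analogous sets with $\le$, both have positive measure. On the product of these two sets the integrand is strictly positive, because $w>0$ and $g(x)-g(y)>0$. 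For the kernels actually used in the paper (step functions such as $m_n$ and $m_u$), this holds trivially on intervals.
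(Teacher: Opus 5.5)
Your proposal is correct and follows essentially the same route as the paper: fold onto $[0,\infty)$ using symmetry, write $K(m)$ as a weighted average of the strictly increasing function $x\tanh(\Theta x/\Sigma^2)$ with weights $m(x)[\phi(x)+\phi(-x)]$, and note that the weight ratio equals $m_1/m_2$. Your double-integral identity and your almost-everywhere reading of ``not constant'' make rigorous the monotone-likelihood-ratio step and the strictness, which the paper only asserts. The a.e.\ caveat is genuinely needed: with $m_2\equiv 1$, setting $m_1=1$ except $m_1(0)=0$ makes the ratio weakly increasing and non-constant while leaving $K$ unchanged.
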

\begin{proof}
Since $m_i(x)$ is symmetric, we have 
\begin{eqnarray*}
K(m_i) &=& \frac{\int_{0}^{\infty} x \cdot m_i(x) \cdot [\phi(x)-\phi(-x)] d x}{\int_{0}^{\infty} m_i(x) \cdot [\phi(x)+\phi(-x)] d x} \\
&=& \frac{\int_{0}^{\infty} x \cdot \rho(x) \cdot m_i(x) \cdot [\phi(x)+\phi(-x)] d x}{\int_{0}^{\infty} m_i(x) \cdot [\phi(x)+\phi(-x)] d x} \\
&=& \frac{\int_{0}^{\infty} x \cdot \rho(x) \cdot w_i(x) d x}{\int_{0}^{\infty} w_i(x) d x}
\end{eqnarray*}
where $\rho(x) := \frac{\phi(x)-\phi(-x)}{\phi(x)+\phi(-x)}$ and $w_i(x) = m_i(x) \cdot [\phi(x)+\phi(-x)]$. 

Using the PDF of a normal distribution, we can show that $\rho(x)=tanh\left(\frac{\Theta x}{\Sigma^2}\right)$ is strictly increasing in $x$ as $\Theta>0$, which further indicates that $x\cdot \rho(x)$ strictly increases in $x$. Hence, one sufficient condition for $K(m_1) > K(m_2)$ is that the corresponding weight functions are such that $\frac{w_1(x)}{w_2(x)}$ is weakly increasing and not constant in $x$ for $x \in \mathcal{X}$, i.e., the weight function $w_1(x)$ places relatively more weight on the larger values of $x$, which is associated with larger values of $x\cdot\rho(x)$. 

Finally, notice that $\frac{w_1(x)}{w_2(x)} = \frac{m_1(x)}{m_2(x)}$, which concludes the proof. 
\end{proof}

We now come back to proving Proposition~\ref{prop:nohack}, which is equivalent to $K(m_n) > K(m_u)$, as suggested by the first paragraph of the proof. By Lemma~\ref{lem:key}, it suffices to show that $\frac{m_n(x)}{m_u(x)}$ weakly increases in $x$ for $x\geq 0$, which is true for $\delta > 1$. 

\subsection{Proof of Proposition \ref{prop:slow_strat}}
\label{pf:prop:slow_strat}

First, if the original result $X_1$ is already significant (i.e., $|X_1|\geq t\Sigma$), it is strictly optimal not to p-hack. Second, if the original result is not significant, and the researcher decides to undertake slow p-hacking, then the optimal reported result is $(\xi(X_1), |\xi(X_1)|/t)$. The question boils down to whether it is worthwhile to undertake slow p-hacking, i.e., whether
\begin{equation}
V \left(1-\frac{1}{\delta}\right) > \kappa_s \cdot C(X_1), \label{eqn:Xs}
\end{equation}
where the left-hand side (LHS) is the benefit from having a significant result, and the right-hand side (RHS) is the minimal cost of doing so. Notice that the RHS is symmetric in $X_1$ around zero and strictly decreasing in $|X_1|$ for $|X_1|\in \left[0, t \Sigma\right)$. Moreover, the inequality fails when $X_1=0$ according to Assumption~\ref{assump:largekappas} and holds when $X_1=t\Sigma$ since $C(t\Sigma)=0$. By continuity, there exists a threshold $\overline{X}_s \in (0,t\Sigma]$ such that the inequality holds if and only if $|X_1|>\overline{X}_s$. 

Furthermore, the LHS is strictly increasing in $\delta$ while the RHS does not depend on $\delta$. Thus, the indifference threshold $\overline{X}_s$ is strictly decreasing in $\delta$. As $\delta$ goes to one, the LHS goes to zero, and $\overline{X}_s$ goes to $t\Sigma$.

\subsection{Proof of Theorem \ref{thm:slow}}
\label{pf:thm:slow}

Throughout Section~\ref{section:twosided}, we categorize published results by the value of the original estimate (i.e., the value of $X_n$ chosen by the researcher in the timeline). Category (a) consists of published results whose original estimate satisfies $|X_n|\geq t\Sigma$; category (b) consists of published results whose original estimate satisfies $|X_n|\in (\overline{X}_s, t\Sigma)$; category (c) consists of published results whose original estimate satisfies $|X_n| \leq \overline{X}_s$. Based on this categorization, we can express the expected published results of the four scenarios as follows. 
\begin{eqnarray}
\gamma(\delta) &=& e_a q_a + e_b q_b + e_c q_c = e_a q_a + e_{bc} q_{bc} \label{eq:nohack} \\
\gamma_s(\delta) &=& e_a^s q_a^s + e_b^s q_b^s + e_c^s q_c^s \label{eq:slow} \\
\gamma_f(\delta) &=& e_a^f q_a^f + e_b^f q_b^f + e_c^f q_c^f = e_a^f q_a^f + e_{bc}^f q_{bc}^f \label{eq:fast} \\
\gamma_h(\delta) &=& e_a^h q_a^h + e_b^h q_b^h + e_c^h q_c^h. \label{eq:both}
\end{eqnarray}
As an illustration, in (\ref{eq:nohack}), $e_a$, $e_b$, and $e_c$ are the expected published results of the three categories when p-hacking is not allowed; $q_a$, $q_b$, and $q_c$ represent the proportion of each category among all published results when p-hacking is not allowed. Similar definitions apply to the other three scenarios. Notice that in the scenarios with fast-only p-hacking and no p-hacking, categories (b) and (c) are not distinguished, so we can also combine these two categories, as indicated in (\ref{eq:nohack}) and (\ref{eq:fast}). 

\paragraph{\underline{Part 1}: p-hacking mitigates publication bias for $\delta > \overline{\delta}_s$.} For the scenario with no p-hacking, the reported result is $X_1$. Following the notation in (\ref{eq:nohack}), we know that $e_a:=\mathbbm{E}(X_1 \,|\, |X_1|\geq t \Sigma)$, $e_b:=\mathbbm{E}(X_1 \,|\, |X_1|\in (\overline{X}_s, t \Sigma))$, and $e_c:=\mathbbm{E}(X_1 \,|\, |X_1| \in [0, \overline{X}_s])$; meanwhile, letting $p_a:=\mathbbm{Pr}(|X_1|\geq t \Sigma)$, $p_b:=\mathbbm{Pr}(|X_1|\in (\overline{X}_s, t \Sigma))$, and $p_c:=\mathbbm{Pr}(|X_1| \in [0, \overline{X}_s])$, then we have $q_a = \frac{p_a}{p_a + p_b/\delta + p_c/\delta}$, $q_b = \frac{p_b/\delta}{p_a + p_b/\delta + p_c/\delta}$, and $q_c = \frac{p_c/\delta}{p_a + p_b/\delta + p_c/\delta}$. Hence, $\gamma(\delta)$ can be rewritten as follows: 
\begin{equation}
\gamma(\delta) = \frac{e_a p_a + e_b p_b/\delta + e_c p_c/\delta}{p_a + p_b/\delta + p_c/\delta}\,. \label{eq:a2}
\end{equation}
For the scenario with only slow p-hacking, we can infer that $e_a^s=e_a$ and $e_c^s = e_c$, as these results are not subject to slow p-hacking; by contrast, we have $e_b^s =\mathbbm{E}(\xi(X_1) \,|\, |X_1|\in (\overline{X}_s, t \Sigma))$. Moreover, the probabilities of the three categories are $q_a^s = \frac{p_a}{p_a + p_b + p_c/\delta}$, $q_b^s = \frac{p_b}{p_a + p_b + p_c/\delta}$, and $q_c^s = \frac{p_c/\delta}{p_a + p_b + p_c/\delta}$. Hence, following (\ref{eq:slow}), we can rewrite $\gamma_s(\delta)$ as follows:
\begin{equation}
\gamma_s(\delta) = \frac{e_a p_a + e_b^s p_b + e_c p_c/\delta}{p_a + p_b + p_c/\delta}\,.\label{eq:a3}
\end{equation}

We begin with two lemmas. %
\begin{lemma}
\label{lemma:slowhack}
The function $\xi(x)$ is (i) anti-symmetric around zero, (ii) satisfies $x < \xi(x) < t\Sigma$, $\forall x\in(\overline{X}_s,t\Sigma)$, (iii) and strictly increases in $x$ for $x\in(\overline{X}_s, t\Sigma)$.  
\end{lemma}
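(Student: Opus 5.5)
We prove the three parts of Lemma~\ref{lemma:slowhack} in turn. Each follows from the first-order characterization of the minimization in Remark~\ref{remark:uniquemin}, combined with the symmetry of $c(\cdot,\cdot)$ in absolute deviations. Throughout, fix $\Sigma$.

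For (i), take $X<0$. The problem of reaching negative significance is $\min_{\tilde X\le -t\tilde\Sigma}c(|\tilde X-X|,|\tilde\Sigma-\Sigma|)$. The substitution $\tilde X\mapsto-\tilde X$ maps it onto the problem for $-X>0$ with the same objective value. This holds because $|\tilde X-X|=|(-\tilde X)-(-X)|$ and the $\Sigma$-argument is unchanged. By uniqueness (Remark~\ref{remark:uniquemin}), the minimizer for $X$ is the reflection of the minimizer for $-X$, so $\xi(X)=-\xi(-X)$. This is exactly how $\xi$ was extended to negative $X$, so it only needs to be stated.

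For (ii) and (iii), take $x\in(\overline{X}_s,t\Sigma)\subset(0,t\Sigma)$. The optimal constraint binds, $\tilde X=t\tilde\Sigma$, as in the proof of Remark~\ref{remark:uniquemin}. We therefore minimize
\[
g(\tilde\Sigma;x)=c\bigl(|t\tilde\Sigma-x|,\,|\tilde\Sigma-\Sigma|\bigr).
\]
We first locate the minimizer.
\begin{itemize}
\item Any $\tilde\Sigma>\Sigma$ with $t\tilde\Sigma>x$ is dominated by $\tilde\Sigma=\Sigma$ when $t\Sigma>x$. The second argument drops to $0$ and the first argument decreases, and $c$ is increasing in each argument because $c_1(0,\cdot)=0$ and $c_{11}>0$.
\item Any $\tilde\Sigma\le x/t$ is dominated by $\tilde\Sigma=x/t$, since that choice lowers both arguments.
\end{itemize}
Hence the minimizer satisfies $x/t\le \tilde\Sigma^*\le\Sigma$, and on this interval $g=c(t\tilde\Sigma-x,\Sigma-\tilde\Sigma)$. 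At $\tilde\Sigma=x/t$ we get $g'=tc_1(0,\cdot)-c_2(\cdot,\Sigma-x/t)=-c_2(0,\Sigma-x/t)<0$. Here $c_{12}=0$ makes $c_2$ independent of the first argument, and $c_2(0,s)>0$ for $s>0$ follows from $c_2(0,0)=0$ and $c_{22}>0$. Similarly, at $\tilde\Sigma=\Sigma$ we get $g'=tc_1(t\Sigma-x,0)>0$. Therefore the minimizer is interior, $x/t<\tilde\Sigma^*<\Sigma$. This gives $x<\xi(x)=t\tilde\Sigma^*<t\Sigma$, which is (ii).

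For (iii), the interior first-order condition is
\[
F(\tilde\Sigma,x)=t\,c_1(t\tilde\Sigma-x,\Sigma-\tilde\Sigma)-c_2(t\tilde\Sigma-x,\Sigma-\tilde\Sigma)=0.
\]
We apply the implicit function theorem. We have $F_{\tilde\Sigma}=t^2c_{11}+c_{22}>0$ and $F_x=-tc_{11}<0$, both using $c_{12}=0$. Hence
\[
\frac{d\tilde\Sigma^*}{dx}=\frac{tc_{11}}{t^2c_{11}+c_{22}}>0,
\]
and $\xi(x)=t\tilde\Sigma^*(x)$ is strictly increasing.

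The main obstacle is regularity. The implicit function step requires $c$ to be $C^2$, which the paper implicitly assumes by writing $c_{11}$, $c_{22}$ and $c_{12}$. The case analysis also relies on the absolute values: $c$ is only given on nonnegative arguments, so we must justify that no overshooting choice ($\tilde\Sigma>\Sigma$ or $t\tilde\Sigma<x$) can be optimal. The monotonicity argument above handles this, using $c_1,c_2>0$ away from zero. If $C^2$ is unwelcome, (iii) can instead be obtained by a monotone comparative statics argument. Since $c_{12}=0$, $c$ is additively separable, $c(a,b)=c^1(a)+c^2(b)$ with each $c^i$ strictly convex. The objective $c^1(t\tilde\Sigma-x)$ then has decreasing differences in $(\tilde\Sigma,-x)$, and strict convexity gives strict monotonicity of the unique minimizer in $x$.
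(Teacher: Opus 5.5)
Your proof is correct and takes essentially the same route as the paper's. The paper also bounds $\xi(x)$ between $x$ and $t\Sigma$ by domination and rules out both endpoints because the first-order condition fails there (via $c_1(0,\cdot)=0$ and $c_2(\cdot,0)=0$, both from $c_{12}=0$); it then applies the implicit function theorem to get $\xi'(x)=c_{11}/(c_{11}+c_{22}/t^2)$, which is exactly $t\cdot d\tilde\Sigma^*/dx$ in your $\tilde\Sigma$-parametrization.

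One small sign slip in your optional comparative-statics remark: $\partial_x\partial_{\tilde\Sigma}\, c^1(t\tilde\Sigma-x)=-t\,(c^1)''<0$, so the objective has decreasing differences in $(\tilde\Sigma,x)$, not in $(\tilde\Sigma,-x)$. As you wrote it, Topkis would give a minimizer that decreases in $x$.
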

\begin{proof}
(i) The anti-symmetry of $\xi(x)$ stems from the symmetry of the publication rule and the slow p-hacking cost function.\\ 
(ii) By contradiction, if $\xi(x) > t\Sigma$ for some $x\in(\overline{X}_s,t\Sigma)$, the researcher's reported result $(\xi(x), |\xi(x)|/t)$ must have incurred a higher cost than reporting $(t\Sigma, \Sigma)$, which is already significant. Similarly, if $\xi(x) < x$, then the researcher's reported result $(\xi(x), |\xi(x)|/t)$ is strictly dominated by $(x, |\xi(x)|/t)$. 

We have left to establish that $\xi(x) \neq x$ and $\xi(x) \neq t\Sigma$.  From the first-order condition, $\xi(x)$ must satisfy 
\begin{equation}
c_1\left(\xi(x)-x, \Sigma - \frac{\xi(x)}{t}\right) - \frac{1}{t}c_2\left(\xi(x)-x, \Sigma - \frac{\xi(x)}{t}\right) = 0. \label{eq:a8}
\end{equation}
At $\xi(x)=x$, the first term is $c_1(0,\Sigma-x/t) = c_1(0,0) = 0$ since $c_{12}=0$. However, the second term is $c_2(0,\Sigma-x/t)>0$ since $c_{22}>0$. Thus, the FOC cannot hold and $\xi(x) \neq x$. A similar argument shows that $\xi(x) \neq t\Sigma$.\\
(iii) We apply the implicit function theorem to (\ref{eq:a8}) and get 
\[
c_{11} \cdot (\xi'(x) - 1) - \frac{c_{12}}{t} \cdot \xi'(x) - \frac{1}{t} \left[ c_{21} \cdot (\xi'(x) - 1) - \frac{c_{22}}{t} \cdot \xi'(x) \right] = 0.
\]
Rearranging the terms to isolate $\xi'(x)$ yields
\begin{equation}
\xi'(x) = \frac{c_{11} - c_{12}/t}{c_{11} - 2c_{12}/t + c_{22}/t^2}. \label{eq:a9}
\end{equation}
By assumption, we know that $c_{11}>0$, $c_{22}>0$, and $c_{12}=0$. Therefore, we know from (\ref{eq:a9}) that $\xi'(x) > 0$. This concludes the proof. 

\end{proof}

\begin{lemma}
\label{lem:keyinequality}
Let $\mathcal{X}\subseteq [0,\infty)$ with $\underline{x}:=\inf\{\mathcal{X}\}$ and $\overline{x}:=\sup\{\mathcal{X}\}$ being its exact lower and upper bounds. Suppose a function $f(x)$ satisfies $0 < M_1\leq f(x) \leq M_2$, $\forall x\in \mathcal{X}$, for some constants $M_1, M_2$, then we have
\begin{equation}
    M_1 \cdot \frac{\phi(\underline{x})-\phi(-\underline{x})}{\phi(\underline{x})+\phi(-\underline{x})}\leq\frac{\int_{x\in\mathcal{X}} f(x) [\phi(x)-\phi(-x)]dx}{\int_{x\in\mathcal{X}}[\phi(x)+\phi(-x)]dx} \leq M_2 \cdot \frac{\phi(\overline{x})-\phi(-\overline{x})}{\phi(\overline{x})+\phi(-\overline{x})}. \label{eq:keyinequality}
\end{equation}
The inequalities are strict if $\mathcal{X}$ is non-degenerate. 
\end{lemma}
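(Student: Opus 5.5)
The proof reduces to bounding a ratio of integrals pointwise, using the monotonicity of $\rho(x):=\frac{\phi(x)-\phi(-x)}{\phi(x)+\phi(-x)}$ established in the proof of Lemma~\ref{lem:key}. Recall that there $\rho(x)=\tanh(\Theta x/\Sigma^2)$, which for $\Theta>0$ is nonnegative and strictly increasing on $[0,\infty)$.

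First rewrite the numerator by multiplying and dividing by $\phi(x)+\phi(-x)$:
\[
\frac{\int_{\mathcal{X}} f(x)[\phi(x)-\phi(-x)]dx}{\int_{\mathcal{X}}[\phi(x)+\phi(-x)]dx}
=\frac{\int_{\mathcal{X}} f(x)\rho(x)\,w(x)\,dx}{\int_{\mathcal{X}} w(x)\,dx},
\]
where $w(x):=\phi(x)+\phi(-x)>0$. The middle term of (\ref{eq:keyinequality}) is therefore a weighted average of $f(x)\rho(x)$ over $\mathcal{X}$ with strictly positive weights.

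Next bound the integrand pointwise. For every $x\in\mathcal{X}\subseteq[0,\infty)$ we have $\underline{x}\le x\le \overline{x}$, so $0\le\rho(\underline{x})\le\rho(x)\le\rho(\overline{x})$. Combining this with $M_1\le f(x)\le M_2$, and using that all these quantities are nonnegative, gives
\[
M_1\rho(\underline{x})\le f(x)\rho(x)\le M_2\rho(\overline{x}).
\]
A weighted average of a function lies between its lower and upper bounds, which yields both weak inequalities. If $\overline{x}=\infty$, interpret $\rho(\infty)=1$ as the limit; the upper bound still holds.

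For strictness when $\mathcal{X}$ is non-degenerate (taken to mean positive Lebesgue measure), observe that the set of $x\in\mathcal{X}$ with $x=\underline{x}$ or $x=\overline{x}$ has measure zero. For all other $x$, strict monotonicity of $\rho$ gives $\rho(\underline{x})<\rho(x)<\rho(\overline{x})$. The strict lower bound additionally needs $M_1>0$, which is assumed, while the strict upper bound uses $M_2\ge f(x)>0$. Hence the strict pointwise inequalities hold almost everywhere on a set of positive $w$-mass, and integrating makes the averaged inequalities strict.

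The only delicate point is this strictness step. It requires the measure-theoretic reading of "non-degenerate" and the positivity of $M_1$; for instance, at $\underline{x}=0$ we have $\rho(0)=0$, yet the strict lower bound still holds because $\rho(x)>0$ almost everywhere on $\mathcal{X}$.
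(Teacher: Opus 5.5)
Your proof is correct and follows essentially the same route as the paper: you write the ratio as a $w$-weighted average with $w(x)=\phi(x)+\phi(-x)$, use $M_1\le f\le M_2$ together with $\rho(x)=\tanh(\Theta x/\Sigma^2)\ge 0$, and bound the average of the strictly increasing $\rho$ by its endpoint values. The paper applies the two bounds in sequence (first replacing $f$ by $M_1$ or $M_2$, then bounding the average of $\rho$) rather than in one pointwise step, and your measure-zero treatment of the endpoints makes the strictness claim slightly more careful than the paper's.
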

\begin{proof}
Using the PDF of normal distributions, we know that $\rho(x):=\frac{\phi(x)-\phi(-x)}{\phi(x)+\phi(-x)}=tanh\left(\frac{\Theta x}{\Sigma^2}\right)$ is strictly increasing in $x$ when $\Theta>0$. Hence, the first inequality holds because 
\begin{eqnarray}
\frac{\int_{x\in\mathcal{X}} f(x) [\phi(x)-\phi(-x)]dx}{\int_{x\in\mathcal{X}}[\phi(x)+\phi(-x)]dx} &\geq & \frac{\int_{x\in\mathcal{X}} M_1 [\phi(x)-\phi(-x)]dx}{\int_{x\in\mathcal{X}}[\phi(x)+\phi(-x)]dx} \nonumber \\
& = & M_1 \cdot \frac{\int_{x\in\mathcal{X}} \rho(x) [\phi(x)+\phi(-x)]dx}{\int_{x\in\mathcal{X}}[\phi(x)+\phi(-x)]dx} \nonumber \\
&\geq& M_1 \cdot \rho(\underline{x}) \nonumber \\
&=& M_1 \cdot \frac{\phi(\underline{x})-\phi(-\underline{x})}{\phi(\underline{x})+\phi(-\underline{x})}. \nonumber
\end{eqnarray}
The first inequality holds because $f(x)\geq M_1$ and $\phi(x)>\phi(-x)$ for any $x\in\mathcal{X}$. The first equality uses the definition of $\rho(x)$. The second inequality holds because $\rho(x)$ strictly increases in $x$ while the term $\frac{\int_{x\in\mathcal{X}} \rho(x) [\phi(x)+\phi(-x)]dx}{\int_{x\in\mathcal{X}}[\phi(x)+\phi(-x)]dx}$ can be regarded as a weighted average of $\rho(x)$ across $x\in \mathcal{X}$; moreover, it holds strictly if $\mathcal{X}$ is non-degenerate. The second equality uses the definition of $\rho(\underline{x})$. We thus proved the first inequality in (\ref{eq:keyinequality}); the second inequality follows by similar reasoning. 
\end{proof}

Now, we are ready to show Part 1 of the theorem. Since $\gamma_s(\cdot)$ and $\gamma(\cdot)$ are continuous, it suffices to show that 
\begin{equation}
\lim\limits_{\delta\rightarrow \infty} \gamma(\delta) > \lim\limits_{\delta\rightarrow\infty}\gamma_s(\delta) > \Theta. \label{eqn1}
\end{equation}
The first inequality of (\ref{eqn1}) boils down to $e_a > e_b^s$ when $\delta\rightarrow\infty$, which, by the anti-symmetry of $\xi(x)$ as indicated by Lemma~\ref{lemma:slowhack}, is equivalent to
\begin{eqnarray}
\frac{\int_{t\Sigma}^{\infty} x [\phi(x)-\phi(-x)]dx}{\int_{t\Sigma}^{\infty}[\phi(x)+\phi(-x)]dx} > \frac{\int_{\overline{X}_s}^{t\Sigma} \xi(x) [\phi(x)-\phi(-x)]dx}{\int_{\overline{X}_s}^{t\Sigma}[\phi(x)+\phi(-x)]dx}\,. \nonumber %
\end{eqnarray}
Using Lemma~\ref{lem:keyinequality} twice, together with the finding that $\xi(x)< t\Sigma$ from Lemma~\ref{lemma:slowhack}, we have 
\begin{eqnarray}
\frac{\int_{t\Sigma}^{\infty} x [\phi(x)-\phi(-x)]dx}{\int_{t\Sigma}^{\infty}[\phi(x)+\phi(-x)]dx} \geq t\Sigma \cdot \frac{\phi(t\Sigma)-\phi(-t\Sigma)}{\phi(t\Sigma)+\phi(-t\Sigma)} > \frac{\int_{\overline{X}_s}^{t\Sigma} \xi(x) [\phi(x)-\phi(-x)]dx}{\int_{\overline{X}_s}^{t\Sigma}[\phi(x)+\phi(-x)]dx}. \nonumber 
\end{eqnarray}

The second inequality of (\ref{eqn1}) holds because 
\begin{equation}
\lim\limits_{\delta\rightarrow\infty}\gamma_s(\delta) = \frac{e_a p_a + e_b^s p_b}{p_a + p_b} > \frac{e_a p_a + e_b p_b}{p_a + p_b}  > \Theta, \nonumber
\end{equation}
where the first inequality holds because $\xi(x)>x$ according to Lemma~\ref{lemma:slowhack}, and the second inequality follows directly from Proposition~\ref{prop:nohack}, which suggests that selective publication introduces a bias. 

\paragraph{\underline{Part 2}: p-hacking exacerbates publication bias for $\delta < \underline{\delta}_s$.} From (\ref{eq:a2}) and (\ref{eq:a3}), we have 
$\gamma_s(\delta)\cdot(p_a+p_b+p_c/\delta) - \gamma(\delta)\cdot(p_a+p_b/\delta+p_c/\delta) = (e_b^s-e_b/\delta)p_b$. Thus, 
$$\gamma_s(\delta)-\gamma(\delta) = \frac{p_b}{p_a+p_b+p_c/\delta} \cdot \Lambda(\delta)\,,$$
where $\Lambda(\delta):=(e_b^s-e_b/\delta) - (1-1/\delta)\gamma(\delta)$. Since $\frac{p_b}{p_a+p_b+p_c/\delta}$ is strictly positive, we want to show that $\Lambda(\delta) > 0$ when $\delta$ is sufficiently close to $1$. Further, since $\lim\limits_{\delta\rightarrow 1}\Lambda(\delta)=0$, as p-hacking becomes meaningless when selective publication vanishes, it suffices to show that $\lim\limits_{\delta\rightarrow 1}\Lambda'(\delta) > 0$, by continuity of $\Lambda(\cdot)$. Equivalently, we let $\lambda:=1/\delta \in (0,1)$ and rewrite $\Lambda(\delta)$ as $\tilde{\Lambda}(\lambda) := (e_b^s- \lambda e_b) - (1-\lambda) \tilde{\gamma}(\lambda)$ where $\tilde{\gamma}(\lambda):= \frac{e_a p_a + \lambda e_b p_b + \lambda e_c p_c}{p_a + \lambda p_b +\lambda p_c}$. We want to show that $\lim\limits_{\lambda\rightarrow 1}\tilde{\Lambda}'(\lambda) < 0$. Notice that 
\begin{eqnarray}
\lim\limits_{\lambda\rightarrow 1}\tilde{\Lambda}'(\lambda) &=& \lim\limits_{\lambda\rightarrow 1}\frac{\partial e_b^s}{\partial \lambda} - \lim\limits_{\lambda\rightarrow 1} \left(\lambda \frac{\partial e_b}{\partial \lambda}\right)  - \lim\limits_{\lambda\rightarrow 1} e_b + \lim\limits_{\lambda\rightarrow 1} \tilde{\gamma}(\lambda) - \lim\limits_{\lambda\rightarrow 1} (1-\lambda)\tilde{\gamma}'(\lambda) \nonumber \\
&=& \lim\limits_{\lambda\rightarrow 1}\frac{\partial (e_b^s-e_b)}{\partial \lambda} - \frac{\phi(t\Sigma) - \phi(-t\Sigma)}{\phi(t\Sigma) + \phi(-t\Sigma)} \cdot t\Sigma + \Theta - 0. \label{eqn4} 
\end{eqnarray}
\begin{lemma}
\label{lemma:tech}
We have $\lim\limits_{\lambda\rightarrow 1}\frac{\partial (e_b^s-e_b)}{\partial \lambda}=-\infty$. 
\end{lemma}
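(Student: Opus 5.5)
The plan is to reduce the derivative to a product of two factors: one tends to a strictly negative finite limit and the other blows up. The first step is to note that $e_b^s-e_b$ depends on $\lambda$ only through the slow threshold $a:=\overline{X}_s(\lambda)$. By the anti-symmetry of $\xi$ (Lemma~\ref{lemma:slowhack}) and the folding used in Lemma~\ref{lem:key}, with $g(x):=\xi(x)-x$,
\[
e_b^s-e_b \;=\; R(a)\;:=\;\frac{\int_{a}^{t\Sigma} g(x)\,[\phi(x)-\phi(-x)]\,dx}{\int_{a}^{t\Sigma}[\phi(x)+\phi(-x)]\,dx}.
\]
By the chain rule, $\frac{\partial (e_b^s-e_b)}{\partial\lambda}=R'(a)\,a'(\lambda)$. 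From the proof of Proposition~\ref{prop:slow_strat}, $a\to t\Sigma$ as $\lambda\to 1$. I would therefore show separately that $\lim_{a\uparrow t\Sigma}R'(a)<0$ is finite and that $a'(\lambda)\to+\infty$.

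\emph{Step 1: the limit of $R'(a)$.} Write $w_\pm(x)=\phi(x)\pm\phi(-x)$, $N(a)=\int_a^{t\Sigma} g\,w_-$ and $W(a)=\int_a^{t\Sigma}w_+$. Then
\[
R'(a)=\frac{w_+(a)N(a)-g(a)w_-(a)W(a)}{W(a)^2}.
\]
Set $\ell=t\Sigma-a$ and Taylor-expand around $t\Sigma$. Use that $g$ extends continuously with $g(t\Sigma)=0$, since $\xi(x)\to t\Sigma$: the first-order condition (\ref{eq:a8}) has solution $\xi=t\Sigma$ at $x=t\Sigma$. Use also that $g'=\xi'-1$, which by (\ref{eq:a9}) equals $-\frac{c_{22}/t^2}{c_{11}+c_{22}/t^2}<0$. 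The leading terms then give
\[
\lim_{a\uparrow t\Sigma}R'(a)=\tfrac12\,g'(t\Sigma)\,\rho(t\Sigma)=-\tfrac12\bigl(1-\xi'(t\Sigma)\bigr)\rho(t\Sigma)<0,
\]
where $\rho(t\Sigma)=\tanh(\Theta t\Sigma/\Sigma^2)>0$ because $\Theta>0$.

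Intuitively, as the window shrinks toward $t\Sigma$, the average manipulation $g$ over the window falls to $0$ at a linear rate. The numerator is $O(\ell^2)$ and the denominator is $O(\ell^2)$, so the ratio stays finite. I expect this expansion to be the main technical obstacle. It requires the second-order terms of $N$ and $W$ to be computed carefully, and it requires enough smoothness of $c$ for $\xi'$ to be continuous up to $t\Sigma$. I would make the expansion rigorous by the mean value theorem applied to $g$ and to $w_\pm$ on $[a,t\Sigma]$.

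\emph{Step 2: the threshold moves infinitely fast.} The threshold is defined by $V(1-\lambda)=\kappa_s C(a)$. Implicit differentiation gives
\[
a'(\lambda)=-\frac{V}{\kappa_s C'(a)}.
\]
By the envelope theorem applied to $C(x)=\min c(\cdot,\cdot)$, $C'(x)=-c_1\bigl(\xi(x)-x,\Sigma-\xi(x)/t\bigr)$. This is strictly negative for $x<t\Sigma$, because $c_{11}>0$ and $\xi(x)>x$. As $x\to t\Sigma$, both arguments tend to $0$, so $C'(a)\to -c_1(0,0)=0$. Hence $a'(\lambda)\to+\infty$ as $\lambda\to1$.

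Combining the two steps, the derivative is a strictly negative finite limit times $+\infty$, so $\lim_{\lambda\to1}\partial(e_b^s-e_b)/\partial\lambda=-\infty$. This proves Lemma~\ref{lemma:tech}, and through (\ref{eqn4}) it yields $\lim_{\lambda\to1}\tilde\Lambda'(\lambda)<0$.
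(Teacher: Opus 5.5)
Your proposal is correct and follows essentially the same route as the paper. You apply the chain rule through $\overline{X}_s$, obtain the finite negative limit $\tfrac12(\xi'(t\Sigma)-1)\rho(t\Sigma)$ for the bracketed factor, and show $\partial\overline{X}_s/\partial\lambda\to+\infty$ via the envelope theorem and $c_1(0,0)=0$; the only cosmetic difference is that you evaluate the bracket with a first-order expansion in $\ell=t\Sigma-\overline{X}_s$, whereas the paper uses L'H\^opital's rule.
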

\begin{proof}
Notice that $e_b^s-e_b = \frac{\int_{\overline{X}_s}^{t\Sigma}(\xi(x)-x)(\phi(x)-\phi(-x))dx}{\int_{\overline{X}_s}^{t\Sigma}(\phi(x)+\phi(-x))dx}$. We let $N(\lambda)$ denote its numerator and ${D(\lambda)}$ its denominator, so 
\begin{eqnarray}
\frac{\partial (e_b^s-e_b)}{\partial \lambda}  
&=& \frac{N'(\lambda)}{D(\lambda)} - \frac{N(\lambda)\cdot D'(\lambda)}{[D(\lambda)]^2} \nonumber \\
&=& \frac{\partial \overline{X}_s}{\partial \lambda} \cdot \left\{\frac{[\overline{X}_s-\xi(\overline{X}_s)][\phi(\overline{X}_s)-\phi(-\overline{X}_s)]}{\int_{\overline{X}_s}^{t\Sigma}(\phi(x)+\phi(-x))dx} \right. \nonumber \\ 
&& \hspace{2cm}\left. + \frac{[\phi(\overline{X}_s)+\phi(-\overline{X}_s)]\int_{\overline{X}_s}^{t\Sigma}(\xi(x)-x)(\phi(x)-\phi(-x))dx}{[\int_{\overline{X}_s}^{t\Sigma}(\phi(x)+\phi(-x))dx]^2}  \right\}. \label{eqn5}
\end{eqnarray}
Since $\overline{X}_s\rightarrow t\Sigma$ as $\lambda\rightarrow 1$, we apply the L'Hospital rule to the terms in the curly bracket of (\ref{eqn5}) when $\overline{X}_s\rightarrow t\Sigma$. It ends up being equal to
\begin{eqnarray}
&& \frac{[\phi(t\Sigma)-\phi(-t\Sigma)]\cdot \left(\lim\limits_{x\rightarrow t\Sigma}\xi'(x) - 1\right)}{\phi(t\Sigma)+\phi(-t\Sigma)} + \frac{[\phi(t\Sigma)-\phi(-t\Sigma)]\cdot \left(\lim\limits_{x\rightarrow t\Sigma}\xi'(x) - 1\right)}{-2[\phi(t\Sigma)+\phi(-t\Sigma)]} \nonumber \\
 &=& \lim\limits_{x\rightarrow t\Sigma}(\xi'(x) - 1) \cdot \frac{\phi(t\Sigma) - \phi(-t\Sigma)}{2[\phi(t\Sigma)+\phi(-t\Sigma)]}. \nonumber
\end{eqnarray}
Hence, we have 
\begin{equation}
\lim\limits_{\lambda\rightarrow 1}\frac{\partial (e_b^s-e_b)}{\partial \lambda} = \lim\limits_{\lambda\rightarrow 1} \frac{\partial \overline{X}_s}{\partial \lambda} \cdot \left(\lim\limits_{x\rightarrow t\Sigma}\xi'(x) - 1\right) \cdot \frac{\phi(t\Sigma) - \phi(-t\Sigma)}{2[\phi(t\Sigma)+\phi(-t\Sigma)]}. \label{eqn6}
\end{equation}
Next, we show that the first term on the RHS of (\ref{eqn6}) equals $+\infty$. From Proposition~\ref{prop:slow_strat}, we know that 
\begin{equation}
V \left(1-\lambda \right) = \kappa_s \cdot c(\xi(\overline{X}_s) - \overline{X}_s, \Sigma - \xi(\overline{X}_s)/t ). \label{eqn7}
\end{equation}
Applying the implicit function theorem and the envelope theorem to (\ref{eqn7}), we have 
\begin{equation}
V = \kappa_s \cdot c_1(\xi(\overline{X}_s) - \overline{X}_s, \Sigma - \xi(\overline{X}_s)/t ) \cdot \frac{\partial \overline{X}_s}{\partial \lambda}. \nonumber
\end{equation}
As $\lambda\rightarrow 1$, we have $c_1(\xi(\overline{X}_s) - \overline{X}_s, \Sigma - \xi(\overline{X}_s)/t )\rightarrow c_1(0,0) = 0$, and therefore, 
\begin{equation}
\lim\limits_{\lambda\rightarrow 1}\frac{\partial \overline{X}_s}{\partial \lambda} = \frac{V}{\kappa_s \cdot c_1(0,0)} = +\infty. \label{eqn8}
\end{equation}
We can also show that the second term on the RHS of (\ref{eqn6}) is negative. From (\ref{eq:a9}), we have 
\begin{equation}
\xi'(x) = \frac{c_{11}-c_{12}/t}{c_{11}-2c_{12}/t+c_{22}/(t^2)} = \frac{c_{11}}{c_{11}+c_{22}/(t^2)} < 1, \label{eqn9}
\end{equation}
as $c_{11}>0$, $c_{12}=0$, and $c_{22}>0$.

Combining (\ref{eqn8}), (\ref{eqn9}), and the fact that the third term on the RHS of (\ref{eqn6}) is a positive constant, we finally conclude that $\lim\limits_{\lambda\rightarrow 1}\frac{\partial (e_b^s-e_b)}{\partial \lambda} = -\infty$. 
\end{proof}

Applying Lemma~\ref{lemma:tech} to (\ref{eqn4}), we have $\lim\limits_{\lambda\rightarrow 1}\tilde{\Lambda}'(\lambda) =-\infty <0$. This completes the proof of Part 2 of the theorem. 

\subsection{Proof of Proposition \ref{prop:fast_strat}}
\label{pf:prop:fast_strat}

It is straightforward that the researcher optimally stops drawing new estimates once he obtains a significant result. If all his current results are insignificant, his decision of whether to draw a new estimate depends on the cost of redrawing. By the definition of $n^*_f$, the researcher will stop drawing new estimates after $n^*_f$ insignificant results, as the gain from drawing an additional estimate cannot offset the cost for any future draw. However, he will still find it optimal to draw a new estimate after $n^*_f-1$ insignificant results, as the gain from one additional draw is higher than the cost. By backward induction, we can infer that he is also willing to draw a new estimate if the current number of insignificant results is smaller than $n^*_f-1$. Finally, the argument that $n^*_f$ weakly increases in $\delta$ follows directly from the fact that $\Delta_f(\delta)$ strictly increases in $\delta$. 

\subsection{Proof of Theorem \ref{thm:fast}}
\label{pf:thm:fast}

Let $\overline{\delta}_f:= \Delta_f^{-1}(\kappa_f c_2)$, we know that $n^*_f = 1$  for $\delta < \overline{\delta}_f$. In other words, it is not worthwhile to undertake fast p-hacking if the extent of selective publication is low. Therefore, the opportunity of fast p-hacking does not affect publication bias if $\delta < \overline{\delta}_f$. 

Next, we show that fast p-hacking, if ever performed (i.e., $n^*_f \geq 2$), exacerbates publication bias. Let $p_{\text{0}}:=\text{Pr}_{X\sim\mathcal{N}(\Theta,\Sigma^2)}(|X| < t\Sigma)$ be the probability of an estimate being insignificant. Following the notation in (\ref{eq:fast}), we infer that $e_a^f = e_a$, since the researcher will optimally stop drawing new estimates once a significant one is obtained. Fix $\delta$ and suppose the researcher draws up to $n$ estimates. We know that $q_a^f = \frac{1-(p_{0})^{n}}{(p_{0})^{n} / \delta + 1-(p_{0})^{n}}$ and $q_{bc}^f = 1-q_a^f$, and therefore, the expected published result is $\hat{\gamma}_f(\delta, n) = \frac{e_{bc}^f \cdot (p_{0})^{n} / \delta + e_a \cdot (1-(p_{0})^{n})}{(p_{0})^{n} / \delta + 1-(p_{0})^{n}}$. We also know that $\gamma_f(\delta) = \hat{\gamma}_f(\delta, n^*_f)$ and $\gamma(\delta) = \hat{\gamma}_f(\delta, 1)$. Hence, it suffices to show that $\hat{\gamma}_f(\delta, n)$ increases in $n$ for any $\delta$. Notice that $e_a$ is constant in $n$. Thus, it suffices to show that: (i) $e_a > e_{bc}^f$ for any $n$; (ii) $e_{bc}^f$ strictly increases in $n$. 

We can use Lemma~\ref{lem:key} to show $e_a > e_{bc}^f$ for any $n$. Specifically, $e_a = K(m_a)$ with the kernel function $m_a(x) = \mathbbm{1}(|x|\geq t\Sigma)$ and $e_{bc}^f = K(m_b)$ with the kernel function $m_b(x)$ only placing weight on the insignificant results. Hence, $\frac{m_a(x)}{m_b(x)}$ is a weakly increasing step function that switches from zero to infinity as $x$ crosses $t\Sigma$. 

We can also use Lemma~\ref{lem:key} to show that $e_{bc}^f$ strictly increases in $n$. Since the researcher reports the result with the lowest p-value,  if $n>n'$, the distribution of the absolute value of the reported result with $n$ first-order stochastically dominates that with $n'$, i.e., $\frac{m_b(x, n)}{m_b(x, n')}$ strictly increases in $x \geq 0$.\footnote{\label{footnote:tiebreak1}In fact, Theorem \ref{thm:fast} also holds when the researcher adopts an alternative tie-breaking rule when facing multiple insignificant estimates, such as (i) reporting an estimate drawn uniformly at random, (ii) reporting the initial estimate $X_1$, or (iii) reporting the latest estimate $X_{n^*_f}$. Under these alternative tie-breaking rules, we have $e_{bc}^f = e_{bc}$ for any $n$, so Theorem~\ref{thm:fast} remains true.} 

Finally, we have $\lim\limits_{\delta\rightarrow\infty}\gamma_f(\delta) = \lim\limits_{\delta\rightarrow\infty}\gamma(\delta) = e_a$, which proves the last argument of the theorem. 

\subsection{Proof of Proposition \ref{prop:combined_strat}}
\label{pf:prop:combined_strat}

We characterize the optimal p-hacking strategy by backward induction. 

\paragraph{Step 1: Optimal slow p-hacking.} 

When the researcher decides to undertake slow p-hacking, he must have stopped drawing new estimates. Hence, his slow p-hacking decision is identical to that in Proposition~\ref{prop:slow_strat}. Suppose he has already drawn $k$ estimates, he slow p-hacks if and only if the current best estimate $|X^*_{k}| \in (\overline{X}_s, t\Sigma)$. Based on his slow p-hacking strategy, we define his continuation value upon stopping drawing new estimates as 
\[U_{s}(x) = \frac{V}{\delta}+V\left(1-\frac{1}{\delta}\right)\mathbbm{1}\{|x| >\overline{X}_s\}- \kappa_s C(x)\mathbbm{1}\{\overline{X}_s< |x| < t\Sigma\}\]
when the value of his best current estimate is $x$. 

\paragraph{Step 2: Optimal fast p-hacking.} 

~\\
\noindent \underline{Step 2.1: Existence of $n^*$.} Next, we work backward to determine the researcher's optimal fast p-hacking strategy. To begin with, there is an upper bound on the number of estimates that the researcher is willing to draw. This is because the cost of drawing a new estimate is eventually strictly greater than $V$, and thus strictly greater than any benefit the researcher might get. Let $n^*$ be the maximal number of estimates that the researcher ever draws under the optimal p-hacking strategy. We will analyze the researcher's fast p-hacking strategy given $n^*$, and come back to determine the unique value of $n^*$ at the end of the proof. 

We introduce two functions. Let $U_n(x)$ denote the researcher's continuation value under his optimal strategy when he has drawn $n$ estimates, with the best current estimate being $x$. By the definition of $n^*$, we have $U_{n^*}(x) = U_s(x)$, as the researcher stops drawing new estimates when he has already made $n^*$ draws. Let $V_n(x)$ denote the researcher's continuation value if he has drawn $n$ estimates, with the best current estimate being $x$ \textit{and} decides to draw one more estimate. By definition, we have 
\begin{equation}
V_n(x) = \mathbb{E}_{X_{n+1}\sim\mathcal{N}(0,\Sigma^2)}[U_{n+1}(\max\{|x|, |X_{n+1}|\})] - \kappa_f c_{n+1}. \nonumber
\end{equation}
Also, since the researcher's continuation value of not drawing a new estimate is $U_s(x)$, we know that 
\begin{equation}
U_n(x) = \max\{U_s(x), V_n(x)\}. \nonumber
\end{equation}
Moreover, we let $\Delta_n(x):=\left[V_n(x)+\kappa_f  c_{n+1} \right] - U_s(x)$ denote the benefit of drawing one more estimate (without taking into account the cost). Note that $V_n(x) > U_s(x)$ is equivalent to $\Delta_n(x) > \kappa_f  c_{n+1}$. Also, from the symmetry of the publication rule and the researcher's cost functions, we can infer that $U_n(x)$, $V_n(x)$, and $\Delta_n(x)$ are all symmetric around zero. It remains to show two statements: \\
(a) For any $n\leq n^*-1$, $\exists \overline{X}_{f}^{n} \in (\overline{X}_s, t\Sigma)$ such that $V_n(x) > U_s(x)$ if and only if $|x|<\overline{X}_{f}^{n}$. \\
(b) For $n\leq  n^*-1$, the threshold $\overline{X}_{f}^{n}$ strictly decreases in $n$. 

\noindent \underline{Step 2.2: Optimal fast p-hacking for $n = n^*-1$.} 

When $n=n^*-1$, the researcher draws another estimate if and only if $V_{n^*-1}(x) > U_s(x)$, which is equivalent to $\Delta_{n^*-1}(x) > \kappa_f c_{n^*}$. The closed form of $\Delta_{n^*-1}(x)$ is
\begin{align*}
\Delta_{n^*-1}(x) = \begin{cases} 0 \quad&\text{if } |x| \geq t\Sigma\\
\kappa_s \int_{|x|}^{\infty}[C(x)-C(\tilde{x})] \left[\frac{1}{\Sigma}\phi\left(\frac{\tilde{x}}{\Sigma}\right)+\frac{1}{\Sigma}\phi\left(\frac{-\tilde{x}}{\Sigma}\right)\right]d\tilde{x} \quad&\text{if } |x| \in (\overline{X}_s, t\Sigma)\\
\int_{\overline{X}_s}^{\infty} \left[V\left(1-\frac{1}{\delta}\right)-\kappa_s C(\tilde{x})\right] \left[\frac{1}{\Sigma}\phi\left(\frac{\tilde{x}}{\Sigma}\right)+\frac{1}{\Sigma}\phi\left(\frac{-\tilde{x}}{\Sigma}\right)\right]d\tilde{x}\quad&\text{if } |x| \leq\overline{X}_s.
\end{cases}
\end{align*}
From its closed form, we can infer that: $\Delta_{n^*-1}(x)$ is constant for $x\in [0,\overline{X}_s]$, strictly decreases for $x\in (\overline{X}_s, t\Sigma)$, and becomes constant again at zero for $x\geq t\Sigma$. Since $\Delta_{n^*-1}(t\Sigma) = 0 < \kappa_f  c_{n^*} < \Delta_{n^*-1}(\overline{X}_s)$, where the last inequality holds as the researcher may draw one more estimate with $n=n^*-1$, we know that there exists $\overline{X}_{f}^{n^*-1} \in (\overline{X}_s, t\Sigma)$ such that $V_{n^*-1}(x) > U_s(x)$ if and only if $|x|<\overline{X}_{f}^{n^*-1}$. Accordingly, 
\begin{align*}
U_{n^*-1}(x) = \begin{cases} U_s(x) \quad&\text{if } |x| \geq \overline{X}_{f}^{n^*-1}\\
V_{n^*-1}(x) \quad&\text{if } |x| < \overline{X}_{f}^{n^*-1}.
\end{cases}
\end{align*}

\noindent \underline{Step 2.3: Optimal fast p-hacking for $n<n^*-1$.} 

We now use induction to prove the two statements identified in Step 2.1 for any $n<n^*-1$. 

\begin{lemma}
\label{lemma:delta}
Given $n<n^*-1$. Suppose $V_{n+1}(x)$, $\Delta_{n+1}(x)$, and $c_{n+2}$ satisfy the following. \\
(i) $V_{n+1}(x)$ is constant for $x\in[0, \overline{X}_s]$, strictly increases for $x\in (\overline{X}_s, t\Sigma)$, and becomes constant again for $x\geq t\Sigma$; \\
(ii)  $\Delta_{n+1}(x)$ is constant for $x\in[0, \overline{X}_s]$, strictly decreases for $x\in (\overline{X}_s, t\Sigma)$, and becomes constant again at zero for $x\geq t\Sigma$; \\
(iii) $\Delta_{n+1}(\overline{X}_s) > \kappa_f c_{n+2}$.\\
Then $V_{n}(x)$, $\Delta_n(x)$, and $c_{n+1}$ also satisfy these conditions. 
\end{lemma}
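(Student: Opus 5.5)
The plan is a direct computation from the recursion $V_n(x)=\mathbb{E}[U_{n+1}(\max\{|x|,|X|\})]-\kappa_f c_{n+1}$, with $X\sim\mathcal{N}(0,\Sigma^2)$ and $U_{n+1}=\max\{U_s,V_{n+1}\}$. Let $G(y):=\Pr(|X|<y)$ and let $g$ be its density.

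\textbf{Structure of $U_{n+1}$.} Hypotheses (ii) and (iii), together with $\Delta_{n+1}=0$ on $[t\Sigma,\infty)$ and continuity, give a threshold $\overline{X}_f^{n+1}\in(\overline{X}_s,t\Sigma)$. On it we have
\begin{itemize}
\item $U_{n+1}=V_{n+1}=U_s+\Delta_{n+1}-\kappa_f c_{n+2}$ for $|x|<\overline{X}_f^{n+1}$;
\item $U_{n+1}=U_s$ for $|x|\geq\overline{X}_f^{n+1}$.
\end{itemize}
Recall the shape of $U_s$. It equals $V/\delta$ on $[0,\overline{X}_s]$, equals $V-\kappa_s C(x)$ on $(\overline{X}_s,t\Sigma)$, and equals $V$ beyond $t\Sigma$. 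This function is continuous: continuity at $\overline{X}_s$ follows from the indifference condition defining $\overline{X}_s$, and continuity at $t\Sigma$ from $C(t\Sigma)=0$. It is strictly increasing on $(\overline{X}_s,t\Sigma)$ because $C$ decreases there. Combining this with (i), $U_{n+1}$ is continuous and piecewise differentiable in $|x|$. It is constant on $[0,\overline{X}_s]$ and on $[t\Sigma,\infty)$, and strictly increasing in between.

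\textbf{Properties (i) and (ii).} Write
\[
V_n(x)+\kappa_f c_{n+1}=U_{n+1}(|x|)\,G(|x|)+\int_{|x|}^\infty U_{n+1}(y)g(y)\,dy .
\]
Differentiating in $|x|$, the boundary terms cancel, so $\partial V_n/\partial|x|=U_{n+1}'(|x|)\,G(|x|)$. Hence $V_n$ inherits the shape of $U_{n+1}$: it is constant, then strictly increasing, then constant. This gives (i).

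For (ii), use $\Delta_n=V_n+\kappa_f c_{n+1}-U_s$. Its derivative on $(\overline{X}_s,t\Sigma)$ is $U_{n+1}'G-U_s'$, and I split into two cases.
\begin{itemize}
\item Above $\overline{X}_f^{n+1}$: here $U_{n+1}'=U_s'$, so the derivative is $U_s'(G-1)<0$.
\item Below $\overline{X}_f^{n+1}$: here $U_{n+1}'=U_s'+\Delta_{n+1}'$, so the derivative is $U_s'(G-1)+\Delta_{n+1}'G<0$, using $\Delta_{n+1}'<0$ from hypothesis (ii).
\end{itemize}
On $[0,\overline{X}_s]$ both $V_n$ and $U_s$ are constant, so $\Delta_n$ is constant. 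On $[t\Sigma,\infty)$ we have $U_{n+1}(\max\{|x|,|X|\})=V=U_s(x)$, so $\Delta_n=0$.

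\textbf{Property (iii), the main obstacle.} I need $\Delta_n(\overline{X}_s)>\kappa_f c_{n+1}$. Since $c_{n+1}<c_{n+2}$ and $\Delta_{n+1}(\overline{X}_s)>\kappa_f c_{n+2}$ by (iii), it suffices to show $\Delta_n(\overline{X}_s)\geq\Delta_{n+1}(\overline{X}_s)$. Both quantities subtract the same $U_s(\overline{X}_s)=V/\delta$, so this reduces to the pointwise monotonicity $U_{n+1}\geq U_{n+2}$.

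I would establish this monotonicity by a backward induction carried alongside the lemma. The base case is $U_{n^*-1}\geq U_s=U_{n^*}$. For the inductive step, suppose $U_{k+1}\geq U_{k+2}$. Then
\[
V_k=\mathbb{E}[U_{k+1}(\cdot)]-\kappa_f c_{k+1}\geq \mathbb{E}[U_{k+2}(\cdot)]-\kappa_f c_{k+2}=V_{k+1},
\]
using $c_{k+1}<c_{k+2}$. Taking the maximum with $U_s$ gives $U_k\geq U_{k+1}$. Intuitively, a researcher who has drawn fewer estimates faces cheaper future draws and can mimic any continuation.

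If folding this claim into the induction is awkward, I would instead prove $U_n\geq U_{n+1}$ as a separate preliminary claim before the lemma.
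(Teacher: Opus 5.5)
Your proposal is correct and follows the paper's proof. Part (i) pushes the shape of $U_{n+1}=\max\{U_s,V_{n+1}\}$ through the expectation. Part (ii) combines the fact that $U_{n+1}-U_s$ is weakly decreasing on $(\overline{X}_s,t\Sigma)$ with the damping of increments by the new draw, which is your factor $G<1$. Part (iii) uses $U_{n+1}\geq U_{n+2}$ together with $c_{n+1}<c_{n+2}$.

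There are two differences from the paper. First, you argue (ii) with derivatives, whereas the paper uses finite differences at $x'<x''$; this lets the paper avoid the differentiability of $V_{n+1}$ and $\Delta_{n+1}$, which the hypotheses do not supply (they hold in the model, and only $\Delta_{n+1}'\leq 0$ is needed). Second, your backward induction for $U_k\geq U_{k+1}$ makes rigorous a step the paper only justifies intuitively.
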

\begin{proof}
\textit{\underline{Part (i).}} Since both $V_{n+1}(x)$ and $U_s(x)$ satisfy condition (i), we know that $U_{n+1}(x) = \max\{U_s(x),V_{n+1}(x) \}$ also satisfies condition (i). As $V_n(x) = \mathbb{E}_{X_{n+1}\sim\mathcal{N}(0,\Sigma^2)}[U_{n+1}(\max\{|x|, |X_{n+1}|\})] -\kappa_f  c_{n+1}$, we can further infer that $V_n(x)$ is also constant for $x\in[0, \overline{X}_s]$, strictly increases for $x\in (\overline{X}_s, t\Sigma)$, and becomes constant again for $x\geq t\Sigma$, proving condition (i) for $V_n(x)$. 

\noindent \textit{\underline{Part (ii).}}  The above result, together with the fact that $U_s(x)$ satisfies condition (i), immediately indicates that $\Delta_n(x) = V_n(x) - U_s(x) +\kappa_f  c_{n+1}$ is constant for $x\in[0, \overline{X}_s]$. We also know that $\Delta_n(x) = 0$ for $x \geq t\Sigma$, as the researcher gains nothing from one more draw if his best current estimate is already significant. It remains to show that $\Delta_n(x)$ strictly decreases for $x\in (\overline{X}_s, t\Sigma)$. To show this, let $\overline{X}_s < x' < x'' < t\Sigma$, and it suffices to show $\Delta_n(x') > \Delta_n(x'')$, which is equivalent to 
\[ \mathbb{E}_{X_{n+1}\sim\mathcal{N}(0,\Sigma^2)}[U_{n+1}(\max\{|x'|, |X_{n+1}|\})] - U_s(x') > \mathbb{E}_{X_{n+1}\sim\mathcal{N}(0,\Sigma^2)}[U_{n+1}(\max\{|x''|, |X_{n+1}|\})] - U_s(x'') \]
by definition, which, with terms rearranged, is 
\begin{equation}
\mathbb{E}_{X_{n+1}\sim\mathcal{N}(0,\Sigma^2)}[U_{n+1}(\max\{|x''|, |X_{n+1}|\}) - U_{n+1}(\max\{|x'|, |X_{n+1}|\})] < U_s(x'') - U_s(x'). \label{eq:a3}
\end{equation}
Notice that $U_{n+1}(\max\{|x''|, |X_{n+1}|\}) - U_{n+1}(\max\{|x'|, |X_{n+1}|\}) \leq U_{n+1}(|x''|) - U_{n+1}(|x'|)$, with the inequality being strict if $|X_{n+1}| \in (x', x'')$. Hence, the LHS of (\ref{eq:a3}) satisfies 
\begin{equation}
\mathbb{E}_{X_{n+1}\sim\mathcal{N}(0,\Sigma^2)}[U_{n+1}(\max\{|x''|, |X_{n+1}|\}) - U_{n+1}(\max\{|x'|, |X_{n+1}|\})] < U_{n+1}(x'') - U_{n+1}(x'). \label{eq:a4}
\end{equation}
Meanwhile, we have 
\begin{align*}
U_{n+1}(x) - U_s(x) = \begin{cases} 0 \quad&\text{if } |x| \geq \overline{X}_{f}^{n+1}\\
V_{n+1}(x)-  U_s(x) = \Delta_{n+1}(x) - \kappa_f c_{n+2} \quad&\text{if } |x| < \overline{X}_{f}^{n+1}.
\end{cases}
\end{align*}
Since $\Delta_{n+1}(x)$ is strictly decreasing for $x\in (\overline{X}_s, t\Sigma)$ according to the premise of this lemma, we know that $U_{n+1}(x) - U_s(x)$ is weakly decreasing for $x\in (\overline{X}_s, t\Sigma)$, and therefore, 
\[U_{n+1}(x') - U_s(x') \geq U_{n+1}(x'') - U_s(x'') \]
which is equivalent to 
\begin{equation}
U_s(x'') - U_s(x') \geq U_{n+1}(x'') - U_{n+1}(x'). \label{eq:a5}
\end{equation}
Combining (\ref{eq:a4}) and (\ref{eq:a5}), we finally show (\ref{eq:a3}) and thus conclude the proof of (ii) for $\Delta_n(x)$. 

\noindent \textit{\underline{Part (iii).}} We prove this statement by showing 
\begin{equation}
\Delta_n(x) \geq \Delta_{n+1}(x) > \kappa_f c_{n+2} > \kappa_f c_{n+1}. \nonumber
\end{equation}
The first inequality holds because 
\begin{eqnarray*}
\Delta_n(x) &=& \mathbb{E}_{X_{n+1}\sim\mathcal{N}(0,\Sigma^2)}[U_{n+1}(\max\{|x|, |X_{n+1}|\})] - U_s(x) \\
\Delta_{n+1}(x) &=& \mathbb{E}_{X_{n+2}\sim\mathcal{N}(0,\Sigma^2)}[U_{n+2}(\max\{|x|, |X_{n+2}|\})] - U_s(x),
\end{eqnarray*}
while $U_{n+1}(x) \geq U_{n+2}(x)$ for any $x$, which intuitively holds because, fixing the researcher's best current estimate, he is better off when he has drawn fewer estimates given the increasing cost of drawing new estimates. The second inequality is the premise of this lemma. The third inequality follows from the increasing cost of drawing new estimates. 
\end{proof}

Combining statements (ii) and (iii) of Lemma~\ref{lemma:delta} and the symmetry of $\Delta_n(x)$, we conclude that there exists $\overline{X}_{f}^{n} \in (\overline{X}_s, t\Sigma)$ such that $V_{n}(x) > U_s(x)$ if and only if $|x|<\overline{X}_{f}^{n}$. This proves statement (a) specified in Step 2.1 for any $n < n^*-1$. 

Finally, we prove statement (b) in Step 2.1. For any $n< n^*-1$, we have  
\begin{equation}
U_s(\overline{X}_{f}^{n}) = V_{n}(\overline{X}_{f}^{n}) > V_{n+1}(\overline{X}_{f}^{n}), \label{eq:a6}
\end{equation}
where the equality follows from the definition of $\overline{X}_{f}^{n}$ and the inequality holds because $V_{n}(x) > V_{n+1}(x)$ for any $x\in (\overline{X}_s, t\Sigma)$, as previously proved. Hence, (\ref{eq:a6}) implies that $\overline{X}_{f}^{n} > \overline{X}_{f}^{n+1}$ by the definition of $\overline{X}_{f}^{n+1}$. Therefore, we conclude that for $n\leq  n^*-1$, the threshold $\overline{X}_{f}^{n}$ strictly decreases in $n$. 

\noindent \underline{Step 2.4: Characterization of $n^*$.} 

The analysis in Steps 2.1 to 2.3 takes $n^*$ as given. We wrap up Step 2 of the proof by characterizing the value of $n^*$. Notice that the function $\Delta_{n^*-1}(x)$ indeed does not depend on the value of $n^*$. Hence, $\Delta_h:= \Delta_{n^*-1}(0)$ does not depend on $n^*$ as well, and therefore, $n^*=\max\{n\mid \kappa_f c_{n}<\Delta_h\}$ uniquely exists. This is the maximal number of draws, since a best current estimate of $0$ is most likely to induce fast p-hacking. 

\subsection{Proof of Theorem \ref{thm:combined}}
\label{pf:thm:combined}

When $\delta$ is sufficiently small, the researcher never undertakes fast p-hacking, and therefore, the result that p-hacking exacerbates publication bias immediately follows from Theorem \ref{thm:slow}. 

To show that p-hacking mitigates publication bias when $\delta$ is sufficiently large, we focus on the limit $\delta\longrightarrow \infty$. In that limit, following the notation in (\ref{eq:nohack}), we have $\lim\limits_{\delta\longrightarrow \infty} \gamma(\delta) = e_a = \mathbbm{E}_{X\sim\mathcal{N}(\Theta, \Sigma^2)}(X \,|\, |X|\geq t \Sigma)$. Under p-hacking, using the notation in (\ref{eq:both}), we can infer that $e_a^h = e_a$, as the researcher optimally stops drawing new estimates once a significant estimate is obtained. Hence, $\lim\limits_{\delta\longrightarrow \infty}\gamma_h(\delta)$ is the weighted average of $e_a$ and $e_b^h$. To show that $\lim\limits_{\delta\longrightarrow \infty} \gamma(\delta)>\lim\limits_{\delta\longrightarrow \infty} \gamma_h(\delta)$, it suffices to show that $e_b^h < e_a$.

Suppose the researcher ends up drawing $k$ estimates under his optimal p-hacking strategy in a realization of $\{X_k\}_{k\leq n^*}$, so $X_k^*$ is the original estimate that the researcher adopts for submission. We let $\Psi(x):=\text{Pr}(X_k^*\leq x)$ denote the CDF of $X_k^*$ when both forms of p-hacking are allowed, with $\psi(x)$ being its PDF. 

\begin{lemma}
\label{lemma:eta}
The function $\eta(x):=\frac{\psi(x)}{\phi(x)}$ is symmetric around zero. 
\end{lemma}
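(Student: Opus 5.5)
We need to show that $\psi(x)/\phi(x)$ is symmetric, where $\psi$ is the density of the adopted original estimate $X_k^*$ under the optimal strategy of Proposition~\ref{prop:combined_strat}. The plan is to write $\psi$ explicitly as a sum over the stopping index $k$, and to observe that every event determining whether the researcher stops at $k$ with $X_k^*=x$ depends on the draws only through their absolute values. The true density of each draw is $\phi$, the $\mathcal{N}(\Theta,\Sigma^2)$ density. The stopping rule, however, is governed by the thresholds $\overline{X}_f^n$ and the best-estimate rule, and these depend only on $|X_1|,\dots,|X_k|$ because the researcher believes $\Theta=0$ and all thresholds are symmetric.

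\textbf{Step 1.} Fix $x$ and decompose
\[
\psi(x)=\sum_{k=1}^{n^*}\psi_k(x),
\]
where $\psi_k$ is the density of the event that the researcher stops after exactly $k$ draws and $X_k^*=x$. By Proposition~\ref{prop:combined_strat}, stopping after exactly $k$ draws with best estimate $x$ requires three things. First, the best estimates satisfy $|X_j^*|<\overline{X}_f^j$ for all $j<k$. Second, either $k=n^*$ or $|x|\geq\overline{X}_f^k$. Third, the draw attaining the maximum absolute value is some $X_i=x$ with $i\le k$, and every other draw satisfies $|X_l|<|x|$.

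\textbf{Step 2.} Condition on which index $i$ attains the maximum. Then
\[
\psi_k(x)=\phi(x)\sum_{i\le k}\Pr\bigl(A_{k,i}(|x|)\bigr),
\]
where $A_{k,i}(|x|)$ is the event, concerning the other $k-1$ independent draws, that each has absolute value below $|x|$ and that the running maxima $\max_{l\le j}|X_l|$ (with $|X_i|=|x|$ inserted at position $i$) stay below $\overline{X}_f^j$ for $j<k$. The stopping condition at $k$ is also a function of $|x|$ alone. Each of these conditions is expressed through the absolute values of the other draws together with $|x|$. The law of $|X_l|$ is a fixed distribution that does not depend on the sign of $x$, so $\Pr(A_{k,i}(|x|))$ is a function of $|x|$ only.

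\textbf{Step 3.} Dividing by $\phi(x)$ gives
\[
\eta(x)=\sum_k\mathbbm{1}\{\text{stop at }k\text{ given }|x|\}\sum_i\Pr\bigl(A_{k,i}(|x|)\bigr),
\]
which depends only on $|x|$. Hence $\eta$ is symmetric around zero.

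The main obstacle is making the density decomposition in Step 2 rigorous. The key bookkeeping point is that the factor $\phi(x)$ must be pulled out for the specific draw equal to $x$, while the remaining draws enter only through probabilities of absolute-value events. Ties occur with probability zero, consistent with the genericity restriction stated earlier in the paper.
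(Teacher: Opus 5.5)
Your proof is correct and is essentially the paper's argument. The paper also factors $\psi(x)=\phi(x)\,g(|x|)$ by pulling out the density of the draw that attains the maximum. The difference is that it evaluates $g$ in closed form using the index $k(x)$ defined by $|x|\in[\overline{X}_f^{k(x)},\overline{X}_f^{k(x)-1})$: a term $k(x)[\Phi(|x|)-\Phi(-|x|)]^{k(x)-1}$ covers the case where the maximizing draw arrives by index $k(x)$, and a term $\sum_{l=k(x)}^{n^*-1}[\Phi(\overline{X}_f^{l})-\Phi(-\overline{X}_f^{l})]^{l}$ covers later arrivals, whereas you leave $g$ in structural form.
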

\begin{proof}
Consider $x > 0$. Let $k(x)$ satisfy $x\in [\overline{X}_f^{k(x)}, \overline{X}_f^{k(x)-1})$, then by definition, the researcher must have drawn at least $k(x)$ estimates before he is willing to use $X^*_{k(x)} = x$ for slow p-hacking and reporting. Therefore, we can infer that 
\[\psi(x) = \binom{k(x)}{1} \phi(x) \left[\Phi(x)-\Phi(-x)\right]^{k(x)-1} + \sum_{l = k(x)}^{n^*-1} \phi(x) \left[\Phi(\overline{X}_f^{l})-\Phi(-\overline{X}_f^{l})\right]^{l},\]
and similarly, 
\[\psi(-x) = \binom{k(x)}{1} \phi(-x) \left[\Phi(x)-\Phi(-x)\right]^{k(x)-1} + \sum_{l = k(x)}^{n^*-1} \phi(-x) \left[\Phi(\overline{X}_f^{l})-\Phi(-\overline{X}_f^{l})\right]^{l}.\]
We can thus confirm that $\eta(x) = \eta(-x)$. 
\end{proof}

Following similar reasoning to Lemma~\ref{lem:keyinequality}, we have 
\begin{eqnarray}
e_b^h &=& \frac{\int_{\overline{X}_s}^{t\Sigma} \xi(x) \eta(x) [\phi(x)-\phi(-x)]dx}{\int_{\overline{X}_s}^{t\Sigma} \eta(x) [\phi(x)+\phi(-x)]dx} \nonumber \\
&=& \frac{\int_{\overline{X}_s}^{t\Sigma} \xi(x) \tanh\left(\frac{\Theta x}{\Sigma^2}\right) \eta(x) [\phi(x)+\phi(-x)]dx}{\int_{\overline{X}_s}^{t\Sigma} \eta(x) [\phi(x)+\phi(-x)]dx} \nonumber \\
&\leq & t\Sigma \cdot \frac{\int_{\overline{X}_s}^{t\Sigma} \tanh\left(\frac{\Theta x}{\Sigma^2}\right) \eta(x) [\phi(x)+\phi(-x)]dx}{\int_{\overline{X}_s}^{t\Sigma} \eta(x) [\phi(x)+\phi(-x)]dx} \nonumber \\
&< & t\Sigma \cdot \tanh\left(\frac{\Theta\cdot t\Sigma}{\Sigma^2}\right), \label{eq:a7}
\end{eqnarray}
where the first equality holds by definition given the symmetry of $\eta(x)$ (see Lemma~\ref{lemma:eta}) and the anti-symmetry of $\xi(x)$ (see Lemma~\ref{lemma:slowhack}), the second equality exploits the PDF of a normal distribution, the first inequality uses Lemma~\ref{lemma:slowhack}, and the second inequality is based on the fact that $\tanh\left(\frac{\Theta x}{\Sigma^2}\right) = \frac{\phi(x)-\phi(-x)}{\phi(x)+\phi(-x)}$ strictly increases in $x$ given that $\Theta>0$. Meanwhile, from Lemma~\ref{lem:keyinequality}, we already know that $e_a > t\Sigma \cdot \tanh\left(\frac{\Theta\cdot t\Sigma}{\Sigma^2}\right)$. Combining this with (\ref{eq:a7}), we can conclude that $e_b^h < e_a$, and thus $\lim\limits_{\delta\longrightarrow \infty} \gamma(\delta)>\lim\limits_{\delta\longrightarrow \infty} \gamma_h(\delta)$. 

To complete the proof, we also need to show that $\lim\limits_{\delta\longrightarrow \infty} \gamma_h(\delta) \geq \Theta$. This is true because we have $\lim\limits_{\delta\longrightarrow \infty} \gamma_h(\delta) \geq \lim\limits_{\delta\longrightarrow \infty} \gamma_s(\delta) \geq \Theta$, where the first inequality is proved in Appendix~\ref{pf:thm:fastandslow} and the second one in Appendix~\ref{pf:thm:slow}. 

\subsection{Proof of Theorem \ref{thm:fastandslow}} 
\label{pf:thm:fastandslow}

\paragraph{Proof of (a).} If the researcher never finds it optimal to undertake fast p-hacking (i.e., $n^*=1$), then $\gamma_h(\delta) = \gamma_s(\delta)$ and the claim holds trivially. From now on, suppose $n^*>1$. 

We use the notations from (\ref{eq:slow}) and (\ref{eq:both}). In the previous appendices, we already showed that $e_a^h=e_a^s=e_a$ and $e_c^s = e_c$. Hence, we have 
\[\gamma_h(\delta) =q_a^h e_a + q_b^h e_b^h + q_c^h e_c^h \quad\text{and}\quad \gamma_s(\delta) = q_a^s e_a + q_b^s e_b^s + q_c^s e_c.\]
To show that $\gamma_h(\delta) > \gamma_s(\delta)$ for any $\delta$, we begin with two lemmas. 

\begin{lemma}
\label{lemma:thm4_1}
We have $e_b^h > e_b^s$ and $e_c^h > e_c$. 
\end{lemma}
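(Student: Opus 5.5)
We reduce both inequalities to a statement about the distribution of the adopted original estimate $X_k^*$ within a given category. Under the combined strategy, $X_k^*$ has density $\psi$. Under slow-only p-hacking, the adopted estimate is simply $X_1$, with density $\phi$.

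For category (b), $e_b^h$ can be written as in (\ref{eq:a7}):
\[
e_b^h=\frac{\int_{\overline{X}_s}^{t\Sigma}\xi(x)\,\eta(x)[\phi(x)-\phi(-x)]dx}{\int_{\overline{X}_s}^{t\Sigma}\eta(x)[\phi(x)+\phi(-x)]dx},
\]
and $e_b^s$ is the same expression with $\eta\equiv 1$. Category (c) works the same way on $[0,\overline{X}_s]$, with $x$ in place of $\xi(x)$ and $e_c$ corresponding to $\eta\equiv1$. Both ratios have the form of Lemma~\ref{lem:key}. The relevant integrand $g(x)\rho(x)$, with $g(x)=\xi(x)$ on $(\overline{X}_s,t\Sigma)$ and $g(x)=x$ on $[0,\overline{X}_s]$, is strictly increasing in $x\ge 0$. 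This follows from Lemma~\ref{lemma:slowhack}(iii) together with $\rho(x)=\tanh(\Theta x/\Sigma^2)$ being increasing, using $\xi>0$ there. The argument of Lemma~\ref{lem:key} therefore applies verbatim with kernels $m_1=\eta\cdot\mathbbm{1}_{\text{category}}$ and $m_2=\mathbbm{1}_{\text{category}}$. It suffices to show that $\eta(x)=\psi(x)/\phi(x)$, which is symmetric by Lemma~\ref{lemma:eta}, is weakly increasing and non-constant in $|x|$ on each of the intervals $(\overline{X}_s,t\Sigma)$ and $[0,\overline{X}_s]$.

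Next we compute $\eta$ explicitly from the formula in the proof of Lemma~\ref{lemma:eta}. Write $F(y):=\Phi(y)-\Phi(-y)$. For $x\in[\overline{X}_f^{k},\overline{X}_f^{k-1})$ with $k=k(x)$,
\[
\eta(x)=k\,F(|x|)^{k-1}+\sum_{l=k}^{n^*-1}F(\overline{X}_f^{l})^{l}.
\]
\begin{itemize}
\item On $[0,\overline{X}_s]$, we are below every fast threshold, so $k(x)=n^*$ and $\eta(x)=n^*F(|x|)^{n^*-1}$. Since $n^*>1$, this is strictly increasing in $|x|$. This gives $e_c^h>e_c$.
\item On $(\overline{X}_s,t\Sigma)$, within each interval $[\overline{X}_f^{k},\overline{X}_f^{k-1})$ the term $kF(|x|)^{k-1}$ is weakly increasing, and strictly so when $k\ge2$.
\end{itemize}

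The main obstacle is checking that $\eta$ does not drop at the thresholds $\overline{X}_f^{k}$, where $k(x)$ falls from $k+1$ to $k$ as $|x|$ crosses $\overline{X}_f^{k}$ upward. To the left of the threshold, $\eta$ equals $(k+1)F(\overline{X}_f^k)^k+\sum_{l\ge k+1}F(\overline{X}_f^l)^l$. To the right, it equals $kF(\overline{X}_f^k)^{k-1}+F(\overline{X}_f^k)^k+\sum_{l\ge k+1}F(\overline{X}_f^l)^l$. The jump is therefore
\[
kF(\overline{X}_f^k)^{k-1}\bigl(1-F(\overline{X}_f^k)\bigr)\ \ge 0.
\]
This is the probabilistic fact that reaching a best estimate just above a stopping threshold can occur either by stopping or at the current draw, while just below it the researcher must continue. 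If the displayed formula for $\psi$ turns out to need minor correction, I would instead derive monotonicity directly: the event that $|X_k^*|=x$ is adopted is a union, over stopping times, of events of the form "all earlier draws have absolute value below $x$". The probability of each such event is increasing in $|x|$ per unit of $\phi$-density.

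In both categories the ratio $\eta$ is non-constant, because $n^*>1$ gives $F^{n^*-1}$ strict growth on $[0,\overline{X}_s]$, and on $(\overline{X}_s,\overline{X}_f^{n^*-1})$ the index is $k=n^*\ge2$. The strict conclusion of Lemma~\ref{lem:key} then yields $e_b^h>e_b^s$ and $e_c^h>e_c$.
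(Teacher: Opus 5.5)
Your argument is correct and rests on the same basic idea as the paper's. Keeping the lowest-p-value draw shifts the within-category distribution of the adopted original estimate toward larger $|x|$, and the reported value is increasing in $|x|$.

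The execution differs, though. The paper's proof is two sentences: it asserts that within categories (b) and (c) the distribution with the fast p-hacking option first-order stochastically dominates the one without it, and then invokes the monotonicity of $\xi$ from Lemma~\ref{lemma:slowhack}. You instead prove the stronger likelihood-ratio property directly. From the density in Lemma~\ref{lemma:eta} you show that $\eta=\psi/\phi$ equals $n^*F(|x|)^{n^*-1}$ on $[0,\overline{X}_s]$, is weakly increasing within each fast-threshold interval, and jumps up by $kF(\overline{X}_f^k)^{k-1}(1-F(\overline{X}_f^k))$ at each $\overline{X}_f^k$. You then rerun Lemma~\ref{lem:key} with the strictly increasing integrand $g(x)\rho(x)$. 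The formula for $\psi$ you borrow is right, so your fallback is not needed: stopping at draw $j=k(x)$ contributes $k(x)F(|x|)^{k(x)-1}\phi(x)$, each $j>k(x)$ contributes $F(\overline{X}_f^{j-1})^{j-1}\phi(x)$, and $j<k(x)$ contributes nothing.

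What your route buys is precision about where the dominance lives: it is dominance of $|X_k^*|$, not of the signed estimate. Read literally in signed terms, the paper's FOSD claim fails. In category (c), $\eta$ is symmetric and increasing in $|x|$, so it puts relatively more mass near both $\pm\overline{X}_s$. The passage from dominance in $|x|$ to a larger signed mean goes through $\mathbb{E}[g(X)\mid |X|=y]=g(y)\rho(y)$ with $\rho(y)=\tanh(\Theta y/\Sigma^2)$ increasing. That identity holds in both scenarios because $\eta$ is symmetric and $g$ is odd. Your appeal to the kernel lemma supplies exactly this step, which the paper leaves implicit. The paper's version is shorter and avoids the threshold bookkeeping, at the cost of that unstated step.
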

\begin{proof}
Notice that the results in category (c) are not slow p-hacked, regardless of whether fast p-hacking is allowed. The only difference between the two scenarios is that, when fast p-hacking is allowed, a researcher may have several insignificant estimates in category (c) to choose from and is indifferent about which one to report. Per Assumption~\ref{assump:tiebreak}, in this case, the researcher reports the estimate with the lowest p-value.\footnote{\label{footnote:tiebreak2}In fact, Theorem \ref{thm:fastandslow} also holds when the researcher adopts an alternative tie-breaking rule in this case, such as (i) reporting an estimate drawn uniformly at random, (ii) reporting the initial estimate $X_1$, or (iii) reporting the latest estimate $X_{n^*}$. Under these alternative tie-breaking rules, we have $e_c^h = e_c$, so Theorem~\ref{thm:fastandslow} remains valid.} As a consequence, the distribution of the published result conditional on its belonging to category (c) with the fast p-hacking option first-order stochastically dominates that without the option, leading to $e_c^h > e_c$. 

By similar reasoning, we know that the distribution of the original estimate for category (b) results with the fast p-hacking option first-order stochastically dominates that without the option. By Lemma~\ref{lemma:slowhack}, we know that $\xi(x)$ strictly increases in $x$ for $x\in(\overline{X}_s, t\Sigma)$, which implies that $e_b^h > e_b^s$. 
\end{proof}

\begin{lemma}
\label{lemma:thm4_2}
We have $e_a > e_b^s > e_c > 0$, $q_a^h > q_a^s$, and $q_c^h < q_c^s$. 
\end{lemma}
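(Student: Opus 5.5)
The plan is to prove the three orderings of conditional means first, since they do not involve fast p-hacking. Then I would handle the two statements about category weights by comparing the probabilities of the categories of the original estimate the researcher adopts, with and without the fast p-hacking option.

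\textbf{Conditional means.}
\begin{itemize}
\item For $e_a > e_b^s$, I would reuse the argument of Part~1 of the proof of Theorem~\ref{thm:slow}. There, Lemma~\ref{lem:keyinequality} applied twice, together with $\xi(x)<t\Sigma$ from Lemma~\ref{lemma:slowhack}, gives
\[e_a \geq t\Sigma\tanh(\Theta t/\Sigma) > e_b^s.\]
That argument never used $\delta\to\infty$, only that $\overline{X}_s<t\Sigma$, so it holds for every $\delta$.
\item For $e_b^s > e_c$, I would first note that $e_b^s > e_b$, because $\xi(x)>x$ on $(\overline{X}_s,t\Sigma)$ and $\phi(x)>\phi(-x)$ for $x>0$.
\item Next, $e_b > e_c$ follows from Lemma~\ref{lem:key} with kernels $m_1=\mathbbm{1}(|x|\in(\overline{X}_s,t\Sigma))$ and $m_2=\mathbbm{1}(|x|\leq \overline{X}_s)$. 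Their ratio jumps from $0$ to $+\infty$ at $\overline{X}_s$, so it is weakly increasing and not constant.
\item Finally, $e_c>0$ because, by symmetry of the range,
\[e_c = \frac{\int_0^{\overline{X}_s} x[\phi(x)-\phi(-x)]\,dx}{\int_0^{\overline{X}_s}[\phi(x)+\phi(-x)]\,dx},\]
and the integrand in the numerator is strictly positive for $x>0$ when $\Theta>0$.
\end{itemize}

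\textbf{Category weights.} In either scenario, let $P_a,P_b,P_c$ (resp.\ $P_a^h,P_b^h,P_c^h$) denote the probabilities that the adopted original estimate falls in categories (a), (b), (c). Categories (a) and (b) are published with probability one and (c) with probability $1/\delta$. Since $P_a+P_b+P_c=1$, I would rewrite
\[q_a=\frac{P_a}{1-P_c(1-1/\delta)},\qquad q_c=\frac{P_c/\delta}{1-P_c(1-1/\delta)},\]
and similarly with superscript $h$. The weight $q_a$ is increasing in $P_a$ and increasing in $P_c$. The weight $q_c$ is increasing in $P_c$: its derivative in $P_c$ has numerator $(1/\delta)\cdot 1>0$. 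Hence it suffices to show $P_a^h>P_a$ and $P_c^h<P_c$, and then to observe that the gain in $q_a$ from $P_a$ dominates its loss from lowering $P_c$.

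\textbf{Probabilities under fast p-hacking.} I would use Proposition~\ref{prop:combined_strat} with $n^*>1$.
\begin{itemize}
\item Category (c) obtains under fast p-hacking only if every draw has $|X|\leq\overline{X}_s$. This is because any draw above $\overline{X}_s$ raises $|X_k^*|$ above $\overline{X}_s$. Also, since $\overline{X}_s<\overline{X}_f^{n}$ for all $n$, the researcher always keeps drawing while in region (c), up to $n^*$ draws. So $P_c^h=P_c^{n^*}<P_c$.
\item For (a), the researcher stops once significant and otherwise redraws with positive probability (whenever $|X_1|<\overline{X}_f^1$). So $P_a^h=P_a+\Pr(\text{insignificant first, later significant})>P_a$.
\end{itemize}

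\textbf{Main obstacle.} The hard step is $q_a^h>q_a^s$, since lowering $P_c$ enlarges the denominator $1-P_c(1-1/\delta)$ and so pulls $q_a$ down. I would resolve it by writing the denominator as $P_a+P_b+P_c/\delta$. Then
\[q_a=\left(1+\frac{P_b+P_c/\delta}{P_a}\right)^{-1}.\]
Under fast p-hacking, mass only moves from (b)$\cup$(c) into (a), or from (c) into (b). Hence $P_b+P_c/\delta$ weakly decreases only if the inflow into (b) is controlled. I expect this to need care, because $P_b^h$ may exceed $P_b$. I would try the bound $P_b^h-P_b\leq P_c-P_c^h$, which says the net inflow to (b) comes from (c), together with $P_a^h>P_a$. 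This yields $P_b^h+P_c^h/\delta \leq P_b+P_c-P_c^h(1-1/\delta)$, which I then need to compare with $P_b+P_c/\delta$. If that comparison fails in general, I would fall back on the exact mixture formula for each stopping path.
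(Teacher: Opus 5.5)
Your arguments for $e_a>e_b^s>e_c>0$ and for $q_c^h<q_c^s$ are sound and essentially the paper's. The gap is the step you flag yourself, $q_a^h>q_a^s$, and the route you propose cannot close it.

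First, the bound $P_b^h-P_b\le P_c-P_c^h$ is too weak by construction. The final comparison $P_b+P_c-P_c^h(1-1/\delta)\le P_b+P_c/\delta$ is equivalent to $P_c\le P_c^h$, which is the reverse of what you proved. So it always fails, not just ``in general.''

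More fundamentally, the intermediate target is false: it is not true that $P_b+P_c/\delta$ fails to increase while $P_a$ increases. Since both triples sum to one,
\[
(P_b+P_c/\delta)-(P_b^h+P_c^h/\delta)=(P_a^h-P_a)-(1-1/\delta)(P_c-P_c^h).
\]
Take $n^*=2$ and set $P_{b_1}:=\Pr(\overline{X}_s<|X|<\overline{X}_f^1)$. Then this difference equals $P_aP_{b_1}+P_c\,[P_a/\delta-(1-1/\delta)P_b]$. It is negative whenever $\overline{X}_f^1$ is close to $\overline{X}_s$ and $(\delta-1)P_b>P_a$. So the combined weight $P_b+P_c/\delta$ of categories (b) and (c) can rise, and $q_a^h>q_a^s$ survives only because $P_a^h$ grows proportionally more. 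Indeed, formally setting $1/\delta=0$ and $P_{b_1}=0$ in the $n^*=2$ formulas gives $q_a^h=q_a^s=P_a/(P_a+P_b)$. The strict inequality therefore rests on $\overline{X}_f^n>\overline{X}_s$, which your argument never uses quantitatively.

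The missing idea is the multiplicative structure of $P_a^h$. A path ends in (a) exactly when the researcher reaches some draw $n+1$ and that draw is significant. Because the fast thresholds decrease, reaching draw $n+1$ has probability $\Pr(|X|<\overline{X}_f^n)^n$. Hence
\[
P_a^h=P_a\left[1+\sum_{n=1}^{n^*-1}\Pr(|X|<\overline{X}_f^n)^n\right]>P_a\sum_{n=0}^{n^*-1}P_c^{n}=P_a\,\frac{1-P_c^{n^*}}{1-P_c},
\]
using $\overline{X}_f^n>\overline{X}_s$. Insert this into $q_a^h=P_a^h/[1-(1-1/\delta)P_c^{n^*}]$ and cross-multiply against $q_a^s=P_a/[1-(1-1/\delta)P_c]$. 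The required inequality then reduces exactly to $(P_c-P_c^{n^*})/\delta\ge 0$. The algebra closes because the geometric lower bound is built from the same $P_c$ that gives $P_c^h=P_c^{n^*}$. This is the paper's argument, and it is the concrete content your ``exact mixture formula'' fallback would have to supply.
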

\begin{proof}
The inequality $e_a > e_b^s$ is already proved in Appendix~\ref{pf:thm:slow}, and $e_b^s > e_c > 0$ can be proved by the same reasoning using Lemma \ref{lem:keyinequality}. 

Next, we show $q_c^h < q_c^s$. Recall the notations $p_a:=\mathbbm{Pr}_{X\sim\mathcal{N}(\Theta,\Sigma^2)}(|X|\geq t \Sigma)$, $p_b:=\mathbbm{Pr}_{X\sim\mathcal{N}(\Theta,\Sigma^2)}(|X|\in (\overline{X}_s, t \Sigma))$, and $p_c:=\mathbbm{Pr}_{X\sim\mathcal{N}(\Theta,\Sigma^2)}(|X| \in [0, \overline{X}_s])$ from Appendix~\ref{pf:thm:slow}. We have $q_c^s = \frac{p_c/\delta}{1-(1-1/\delta)p_c}$ and $q_c^h = \frac{(p_c)^{n^*}/\delta}{1-(1-1/\delta)(p_c)^{n^*}}$. This further shows the following: 
\begin{align*}
q_c^h<q_c^s \iff & \frac{(p_c)^{n^*}/\delta}{1-(1-1/\delta)(p_c)^{n^*}} < \frac{p_c/\delta}{1-(1-1/\delta)p_c} \\
\iff &(p_c)^{n^*} [1-(1-1/\delta)p_c ]<p_c [1-(1-1/\delta)(p_c)^{n^*}]\\
\iff & (p_c)^{n^*-1}  - (1-1/\delta)](p_c)^{n^*} < 1 - (1-1/\delta)(p_c)^{n^*}\\
\iff & (p_c)^{n^*-1}<1,
\end{align*}
which holds if and only if $n^*>1$. 

Finally, we show $q_a^h > q_a^s$. Notice that 
\[q_a^s = \frac{p_a}{1-(1-1/\delta)p_c}\quad\text{and}\quad q_a^h = \frac{p_a^h}{1-(1-1/\delta)(p_c)^{n^*}}\,,\]
where 
\begin{eqnarray}
p_a^h &:=&  p_a \cdot \left[1+\Pr(|X_1|<\overline{X}_f^1)+\Pr(|X_1|<\overline{X}_f^2)\Pr(|X_2|<\overline{X}_f^2)+\dots\right] \nonumber \\
&=& p_a \cdot \left[1+\sum_{n=1}^{n^*-1}\prod_{k\leq n}\Pr\left(|X_k|<\overline{X}_f^{n}\right)\right] \nonumber \\
&>& p_a \cdot \left(1+\sum_{n=1}^{n^*-1}(p_c)^{n}\right) \nonumber \\
&=& p_a \cdot \frac{1-(p_c)^{n^*}}{1-p_c}, \nonumber
\end{eqnarray}
where the inequality comes from $\overline{X}_f^n>\overline{X}_s$. Therefore, we have 
\begin{eqnarray}
q_a^h > q_a^s &\iff& \frac{p_a^h}{1-(1-1/\delta)(p_c)^{n^*}} > \frac{p_a}{1-(1-1/\delta)p_c} \nonumber \\
&\Longleftarrow&  \frac{[1-(p_c)^{n^*}]/(1-p_c)}{1-(1-1/\delta) (p_c)^{n^*}} \geq \frac{1}{1-(1-1/\delta )p_c} \nonumber \\
&\iff &[1-(p_c)^{n^*}][1-(1-1/\delta)p_c] \geq (1-p_c)[1-(1-1/\delta)(p_c)^{n^*}] \nonumber\\
&\iff& (p_c)^{n^*} + p_c(1-1/\delta) \leq p_c + (p_c)^{n^*}(1-1/\delta) \nonumber\\
&\iff& p_c\geq (p_c)^{n^*}, \nonumber 
\end{eqnarray}
which holds if $n^*>1$. 
\end{proof}

With these two lemmas, we can finally show Part (a) of the theorem as follows. 
\begin{align*}
\gamma_h(\delta) - \gamma_s(\delta) &> (q_a^h e_a + q_b^h e_b^s + q_c^h e_c) - (q_a^s e_a + q_b^s e_b^s + q_c^s e_c) \\ 
&= (q_a^h-q_a^s)(e_a-e_b^s) + (q_c^s-q_c^h)(e_b^s-e_c) \\
&> 0, 
\end{align*}
where the first inequality follows from Lemma~\ref{lemma:thm4_1}, the equality holds by rearranging the terms, and the second inequality exploits Lemma~\ref{lemma:thm4_2}. 

\paragraph{Proof of (b).} By continuity of $\gamma_f$ and $\gamma_h$, it suffices to show that 
\[\lim_{\delta\longrightarrow\infty}\gamma_f(\delta)>\lim_{\delta\longrightarrow\infty}\gamma_h(\delta).\]
We know from Theorem \ref{thm:fast} that 
\[\lim_{\delta\longrightarrow\infty}\gamma_f(\delta) = \lim_{\delta\longrightarrow\infty}\gamma(\delta).\]
Furthermore, we know from Theorem \ref{thm:combined} that
\[\lim_{\delta\longrightarrow\infty}\gamma_h(\delta) <\lim_{\delta\longrightarrow\infty}\gamma(\delta).\]
The claim is thus established. 

\subsection{Proof of Theorem~\ref{thm:identification_onesided}}
\label{pf:thm:identification_onesided}

In Section \ref{pf:thm:identification_onesided}, $\Phi(x)$ and $\phi(x)$ will refer to the CDF and PDF of the standard normal distribution respectively. We will make explicit the dependence on $\Theta$ and $\Sigma$ whenever we reference the distribution of a $N(\Theta, \Sigma^2)$ random variable.

Let $\hat{Z}=\hat{X}/\hat{\Sigma}$ be the reported Z-statistic and define
\begin{align*}
    Q =
    \begin{cases}
        2 & \mbox{ if } \hat{Z} > t_w, \\
        1 & \mbox{ if } t \leq \hat{Z} \leq t_w, \\
        0 & \mbox{ if } -t < \hat{Z} < t, \\
        -2 & \mbox{ if } \hat{Z} \leq -t.
    \end{cases}
\end{align*}
Let $(-\infty,\overline{X}_f^{l}]$ be the continuation region for fast p-hacking after the $l$-th draw. These thresholds do not depend on $\Theta$, because researchers neither know $\Theta$ nor update their beliefs about $\Theta$ during the p-hacking process.

Conditional on $\Theta=b$ and $\Sigma=s$, a report with $Q=2$ is generated when the researcher reaches some draw $j+1$, all previous draws lie in the relevant continuation region, and the $(j+1)$th draw is positive and exceeds $t_w s$. Hence
\begin{equation*}
    \mathbb{P}(Q=2\mid \Theta=b,\Sigma=s)
    =
    \left(1-\Phi\left(\frac{t_w s-b}{s}\right)\right)p_{f,1}(b,s)~,
\end{equation*}
where
\begin{align*}
    p_{f,1}(b,s)
    &=
    \sum_{j=0}^{n^*-1}
    \left(
        \Phi\left(\frac{\overline{X}_f^j-b}{s}\right)
    \right)^j~.
\end{align*}
and we define $\overline{X}_f^{n^*}:=-\infty$ and $\overline{X}_f^0:=\infty$.

Observe that the above function satisfies
\begin{equation}\label{equation--identification_lim_pf_onesided}
    \lim_{b\to\infty}p_{f,1}(b,s)
    =
    1~.
\end{equation}
Moreover,
\begin{equation*}
    \frac{\partial}{\partial b}p_{f,1}(b,s)
    =
    -\sum_{j=1}^{n^*-1}
    \frac{j}{s}
    \left(
        \Phi\left(\frac{\overline{X}_f^j-b}{s}\right)
    \right)^{j-1}
	\cdot
        \phi\left(\frac{\overline{X}_f^j-b}{s}\right)~,
\end{equation*}
Therefore,
\begin{equation}\label{equation--identification_derivative_pf_onesided}
    \lim_{|b|\to\infty}\frac{\partial}{\partial b}p_{f,1}(b,s)
    =
    0.
\end{equation}

We now identify the distribution of latent effects. Let $f_{\hat{X}}$ denote the density of published estimates. Conditional on $Q=2$, $\Sigma=s$, and $\Theta=b$,
\begin{equation*}
    f_{\hat{X}\mid\Theta,\Sigma,Q}(x\mid b,s,2)
    =
    \frac{
        \frac{1}{s}\phi\left(\frac{x-b}{s}\right)
    }{
        1-\Phi\left(\frac{t_w s-b}{s}\right)
    }
    \mathbbm{1}\{x>t_w s\}~.
\end{equation*}
Meanwhile,
\begin{equation*}
    \mu_{\Theta\mid\Sigma,Q}(b\mid s,2)
    =
    \frac{1}{p_{Q,2}(s)}
    \left(1-\Phi\left(\frac{t_w s-b}{s}\right)\right)
    p_{f,1}(b,s)
    \mu_{\Theta\mid\Sigma}(b\mid s)~,
\end{equation*}
where
\begin{equation*}
    p_{Q,2}(s)
    =
    \int
    \left(1-\Phi\left(\frac{t_w s-b}{s}\right)\right)
    p_{f,1}(b,s)
    \mu_{\Theta\mid\Sigma}(b\mid s)
    \; db~.
\end{equation*}
Thus, for $x > t_w s$,
\begin{equation*}
    f_{\hat{X}\mid\Sigma,Q}(x\mid s,2)
    =
    \frac{1}{p_{Q,2}(s)}
    \int
    \frac{1}{s}\phi\left(\frac{x-b}{s}\right)
    p_{f,1}(b,s)
    \mu_{\Theta\mid\Sigma}(b\mid s)
    \; db~.
\end{equation*}
The Gaussian convolution on the right-hand side is real analytic in $x$, so its values on the open set $x>t_w s$ determine it on $\mathbb{R}$. Moreover, Gaussian convolution is injective because the characteristic function of the Gaussian kernel is everywhere nonzero. Partial deconvolution therefore identifies
\begin{equation*}
    g_1(b,s)
    :=
    \frac{p_{f,1}(b,s)}{p_{Q,2}(s)}
    \mu_{\Theta\mid\Sigma}(b\mid s)
    =
    \frac{p_{f,1}(b,s)}{p_{Q,2}(s)}
    \frac{1}{\sigma(s)}
    \phi\left(\frac{b-\theta(s)}{\sigma(s)}\right)~.
\end{equation*}
Differentiating,
\begin{equation*}
    \frac{\partial}{\partial b}g_1(b,s)
    =
    \frac{1}{p_{Q,2}(s)}
    \left[
        -p_{f,1}(b,s)\frac{b-\theta(s)}{\sigma^2(s)}
        +
        \frac{\partial}{\partial b}p_{f,1}(b,s)
    \right]
    \frac{1}{\sigma(s)}
    \phi\left(\frac{b-\theta(s)}{\sigma(s)}\right)~.
\end{equation*}
Next, form
\begin{equation*}
    h(b,s)
    =
    \frac{
        \partial_b g_1(b,s)
    }{
        g_1(b,s)
    }
    =
    -\frac{b-\theta(s)}{\sigma^2(s)}
    +
    \frac{
        \partial_b p_{f,1}(b,s)
    }{
        p_{f,1}(b,s)
    }~.
\end{equation*}
Since $p_{f,1}(b,s) \geq 1$,
\eqref{equation--identification_derivative_pf_onesided} implies that the second term
converges to zero. Hence
\begin{equation*}
    -\frac{1}{\sigma^2(s)}
    =
    \lim_{b\to\infty}\left(h(b+1,s)-h(b,s)\right)~,
\end{equation*}
and
\begin{equation*}
    \frac{2\theta(s)}{\sigma^2(s)}
    =
    \lim_{b\to\infty}\left(h(b,s)+h(-b,s)\right)~.
\end{equation*}
Therefore $\theta(s)$ and $\sigma^2(s)$ are identified, and so is
$\mu_{\Theta\mid\Sigma}(\cdot\mid s) = N\left(\theta(s),\sigma^2(s)\right)$.

In turn,
\begin{equation*}
    \frac{g_1(b,s)}{\mu_{\Theta\mid\Sigma}(b\mid s)}
    =
    \frac{p_{f,1}(b,s)}{p_{Q,2}(s)}
\end{equation*}
is identified. Equation
\eqref{equation--identification_lim_pf_onesided} gives
\begin{equation*}
    p_{Q,2}(s)
    =
    \left(
        \lim_{b\to\infty}
        \frac{g_1(b,s)}{\mu_{\Theta\mid\Sigma}(b\mid s)}
    \right)^{-1}~.
\end{equation*}
Substituting $p_{Q,2}(s)$ back into the preceding equation then identifies
$p_{f,1}(b,s)$.

The density of $\Sigma$ conditional on $Q=2$ satisfies
\begin{equation}\label{equation--sigma_cond_significance_onesided}
    \mu_{\Sigma\mid Q}(s\mid2)
    =
    \frac{1}{\overline p_{Q,2}}
    \mu_\Sigma(s)p_{Q,2}(s),
\end{equation}
where
\begin{equation*}
    \overline p_{Q,2}
    =
    \int p_{Q,2}(s)\mu_\Sigma(s)\;ds.
\end{equation*}
Normalization gives
\begin{equation*}
    \overline p_{Q,2}
    \int
    \frac{\mu_{\Sigma\mid Q}(s\mid2)}{p_{Q,2}(s)}
    \;ds
    =1,
\end{equation*}
so $\overline p_{Q,2}$ is identified. Equation
\eqref{equation--sigma_cond_significance_onesided} then identifies
$\mu_\Sigma(s)$ and therefore the joint distribution $\mu$.

We next identify the fast p-hacking thresholds. Fix $s$ and define
\begin{equation*}
    W_j(b,s)
    :=
    \Phi\left(\frac{\overline X_f^j-b}{s}\right)~.
\end{equation*}
Suppose that two sets of fast-p-hacking parameters,
\begin{equation*}
    \left(
        n,\left\{\overline X_f^j\right\}_{j=1}^{n-1}
    \right)
    \quad\text{and}\quad
    \left(
        n^\dagger,
        \left\{\overline X_f^{\dagger,j}\right\}_{j=1}^{n^\dagger-1}
    \right)~,
\end{equation*}
generate the same identified function $p_{f,1}(b,s)$ for every
$b\in\mathbb{R}$. We write
\begin{equation}\label{eq:pf1_onesided}
    p_{f,1}(b,s)
    =
    1+\sum_{j=1}^{n-1}W_j(b,s)^j~.
\end{equation}
Define $W_j^\dagger(b,s)$ analogously for the second parameterization.

We show inductively that the two sequences of thresholds coincide. Suppose
that the thresholds have already been shown to agree for every $j<k$ and that
the $k$-th term appears in both sets of parameters. Subtract their common
contributions and define
\begin{align*}
    R_k(b,s)
    &:={}
    p_{f,1}(b,s)
    -1
    -\sum_{j=1}^{k-1}W_j(b,s)^j,\\
    R_k^\dagger(b,s)
    &:={}
    p_{f,1}^\dagger(b,s)
    -1
    -\sum_{j=1}^{k-1}W_j^\dagger(b,s)^j~.
\end{align*}
Observational equivalence and the induction hypothesis imply that
\begin{equation}\label{eq:residual_equality_onesided}
    R_k(b,s)=R_k^\dagger(b,s)
    \qquad\text{for every }b\in\mathbb{R}.
\end{equation}

As $b\to\infty$, the asymptotic approximation for the Mills ratio gives
\begin{equation*}
    W_j(b,s)
    =
    \frac{s}{b-\overline X_f^j}
    \phi\left(\frac{b-\overline X_f^j}{s}\right)
    (1+o(1))~~.
\end{equation*}
Taking logs,
\begin{equation}\label{equation--logW_onesided}
    \log W_j(b,s)
    =
    -\frac{(b-\overline X_f^j)^2}{2s^2}
    -\log\left(\frac{b-\overline X_f^j}{s}\right)
    -\frac{1}{2}\log(2\pi)
    +o(1)~.
\end{equation}
Expanding the quadractic term above, we have that for every $j>k$,
\begin{equation*}
    \log\left(
        \frac{W_j(b,s)^j}{W_k(b,s)^k}
    \right)
    =
    -\frac{(j-k)b^2}{2s^2}+O(b)~,
\end{equation*}
which converges to $-\infty$. Thus the $k$-th term dominates the residual:
\begin{align*}
    R_k(b,s)
    &=W_k(b,s)^k(1+o(1))~,\\
    R_k^\dagger(b,s)
    &=W_k^\dagger(b,s)^k(1+o(1))~.
\end{align*}
Equation \eqref{eq:residual_equality_onesided} therefore implies
\begin{equation}\label{eq:positive_tail_ratio_onesided}
    \frac{W_k(b,s)^k}{W_k^\dagger(b,s)^k}
    \to 1
    \quad\mbox{as} \quad b\to\infty~.
\end{equation}

Suppose, toward a contradiction, that
$\overline X_f^k\neq\overline X_f^{\dagger,k}$. Equation
\eqref{equation--logW_onesided} gives
\begin{equation*}
    \log\left(
        \frac{W_k(b,s)}{W_k^\dagger(b,s)}
    \right)
    =
    \frac{\overline X_f^k-\overline X_f^{\dagger,k}}{s^2}b
    +O(1).
\end{equation*}
The right-hand side then converges to either $+\infty$ or $-\infty$, contradicting
\eqref{eq:positive_tail_ratio_onesided}. Therefore
$\overline X_f^k=\overline X_f^{\dagger,k}$ and common term
$W_k(b,s)^k$ can consequently be subtracted from both representations. We can now repeat the argument for the next threshold.

Finally, suppose without loss of generality that $n<n^\dagger$. After the first
$n-1$ common terms have been subtracted, the residual under the first
parameterization is zero, whereas the residual under the second is
\begin{equation*}
    \sum_{j=n}^{n^\dagger-1}W_j^\dagger(b,s)^j.
\end{equation*}
Every term is strictly positive for finite $b$, so this residual cannot vanish
identically. Hence $n=n^\dagger$. Therefore $n^*$ and
$\left\{\overline X_f^j\right\}_{j=1}^{n^*-1}$ are identified as functions of $s$. 

We can now identify the selective publication parameter. Let
$\overline p_{Q,1}:=\mathbb{P}(Q=1)$. Because positive significant results
are published with the same probability,
\begin{equation*}
    \mathbb{P}\left(
        t\leq\hat Z \leq t_w
        \mid
        \hat Z\geq t,D=1
    \right)
    =
    \frac{\overline p_{Q,1}}
    {\overline p_{Q,1}+\overline p_{Q,2}}.
\end{equation*}
Hence $\overline p_{Q,1}$ is identified whenever
$\overline p_{Q,2}>0$.

Recall that
negative significant and insignificant results are published with relative probability
$\lambda=1/\delta$. As such,
\begin{align*}
    \mathbb{P}\left(Q > 0\mid Q \geq 0, D=1\right)
    =
    \frac{
        \overline p_{Q,1}+\overline p_{Q,2}
    }{
        \overline p_{Q,1}+\overline p_{Q,2}
        +
        \lambda
        (1-\overline p_{Q,1}-\overline p_{Q,2})
    }~.
\end{align*}
$\delta$ is thus identified as long as $(1-\overline p_{Q,1}-\overline p_{Q,2}) > 0$, a sufficient condition for which is that $\mathbb{P}(\hat{Z} < t) > 0$. This then gives us $\mathbb{E}[\gamma(\delta)]$, the counterfactual mean under selective
publication only.

Finally, the mass of final reports that have been slow p-hacked into positive
significance is also identified. In particular,
\begin{equation*}
    \overline p_{s,1}
    =
    \overline p_{Q,1}
    -
    \int
    \left[
        \Phi\left(\frac{t_w s-b}{s}\right)
        -
        \Phi\left(\frac{ts-b}{s}\right)
    \right]
    p_{f,1}(b,s)
    \;d\mu_{\Theta,\Sigma}(b,s)~.
\end{equation*}
The integral above is the probability that the final report lies in $Q=1$ without slow p-hacking. Any excess mass must therefore have been generated by slow p-hacking.

\section{One-Sided Likelihood Function}\label{app:likelihood}

The censored likelihood function for the one-sided model is presented below. It is expressed in terms of the finite dimensional parameter $$\beta = (\theta, \sigma, \kappa, \lambda, \{\overline{X}_f^n\}_{n=1}^{n^*-1}, r_{s}, \delta)~.$$
With the exception of $r_s$, the components of $\beta$ are parameters from our model. $r_s \in [0,1]$ is the fraction of the mass in $(-t,t)$ that is slow p-hacked into significance. The proof of Theorem \ref{thm:identification_onesided} shows that $\beta$ is identified. %

The likelihood divides the observations into 4 regions, indexed by $Q_i$:
\begin{equation*}
    Q_i = \begin{cases}
        2 & \mbox{ if } \hat{X}_i/\hat{\Sigma}_i \geq t_w, \\
        1 & \mbox{ if } t \leq \hat{X}_i/\hat{\Sigma}_i < t_w, \\
        0 & \mbox{ if } -t < \hat{X}_i/\hat{\Sigma}_i < t, \\
        -2 & \mbox{ if } \hat{X}_i/\hat{\Sigma}_i \leq -t~,
    \end{cases}
\end{equation*}
Observations for which $Q_i \in \{1,0\}$ are censored. Their contributions to the log-likelihood are the same regardless of the exact values of $(\hat{X}_i, \hat{\Sigma}_i)$. Observations for which $Q_i \in \{2,-2\}$ are uncensored. Their contributions to the log-likelihood depends on the density of $(\hat{X}_i, \hat{\Sigma}_i)$. 

For the uncensored regions, the densities of a point $(x,s)$ are:
\begin{align*}
    f_{Q,2}(x, s \mid \beta ) &=  \int \mu_\Sigma(s, \kappa, \lambda) \, p_{f,1}(b,s) \,\frac{1}{\sigma}\phi\left(\frac{b-\theta}{\sigma}\right)\frac{1}{s}\phi\left(\frac{x-b}{s}\right) \; db  \\
    f_{Q,-2}(x, s \mid \beta ) &=  \frac{1}{\delta}  \int \mu_\Sigma(s, \kappa, \lambda) \, p_{f,-1}(x,b,s) \, \frac{1}{\sigma}\phi\left(\frac{b-\theta}{\sigma}\right)\frac{1}{s}\phi\left(\frac{x-b}{s}\right) \; db~,
\end{align*}
where
\begin{align*}
    p_{f,1}(b,s)  = \sum_{j = 0}^{n^*-1} \Phi\left( \frac{\overline{X}_f^{j}\cdot s -b}{s}\right)^j \quad \mbox{and} \quad p_{f,-1}(x,b,s)  = n^* \cdot \Phi\left(\frac{x-b}{s}\right)^{n^*-1}~.
\end{align*}
The point masses of the censored regions are:
\begin{align*}
    \overline{p}_{Q,1} & =   \int \mu_\Sigma(s, \kappa, \lambda) \, p_{f,1}(b,s) \, \frac{1}{\sigma}\phi\left(\frac{b-\theta}{\sigma}\right)\left(   \Phi\left(\frac{t_w\cdot s - b}{s}\right) - \Phi\left(\frac{t\cdot s - b}{s}\right) \right) \; db \; ds \; + \overline{p}_{s,1}  \\
    \overline{p}_{Q,0} & =  \frac{1}{\delta} \cdot (1-r_s) \int \mu_\Sigma(s, \kappa, \lambda)
    \frac{1}{\sigma}\phi\left(\frac{b-\theta}{\sigma}\right)
    \, p_{f,0}(b,s) \; db \;ds  ~,
\end{align*}
\noindent where
\begin{equation*}
    \overline{p}_{s,1}  = r_s \cdot \int \mu_\Sigma(s, \kappa, \lambda)
    \frac{1}{\sigma}\phi\left(\frac{b-\theta}{\sigma}\right)
    \, p_{f,0}(b,s) \; db\; ds
\end{equation*}
\noindent and 
\begin{align*}
    p_{f,0}(b,s) &= \sum_{j=1}^{n^*-1}
    \left[ \Phi\left(\frac{\overline{X}_f^{j-1}\cdot s-b}{s}\right)^{j-1} \cdot \Phi\left(\frac{t\cdot s-b}{s}\right)
    - \Phi\left(\frac{\overline{X}_f^{j}\cdot s-b}{s}\right)^{j} \right] \\
    & \qquad + \Phi\left(\frac{\overline{X}_f^{n^*-1}\cdot s-b}{s}\right)^{n^*-1} \cdot \Phi\left(\frac{t\cdot s-b}{s}\right)
    - \Phi\left(\frac{-t\cdot s-b}{s}\right)^{n^*} ~.
\end{align*}
As in Appendix \ref{pf:thm:identification_onesided}, we define $\overline{X}^{n^*}_f : = -\infty$ and $\overline{X}^{0}_f := \infty$. Appendix \ref{pf:thm:identification_onesided} also derives all of the above expressions. The exception is $p_{f,-1}$, which is the density for the largest estimate among $n^*$ estimates. This follows from our tie-breaking rule in Assumption \ref{assump:tiebreak_onesided}.

We can write the total mass from all four values of $Q$ as:
\begin{align*}
    \overline{p}_{TOT}  & =  \overline{p}_{Q,1} +  \overline{p}_{Q,0} + \int_{x/s\geq t_w} f_{Q,2}(x, s \mid \beta ) \; dx \; ds +  \int_{x/s\leq -t} f_{Q,-2}(x, s \mid \beta ) \; dx \; ds
\end{align*}
Finally, define:
\begin{align*}
    f(\hat{X}_i, \hat{\Sigma}_i \mid \beta) = \begin{cases}
        f_{Q,2}(\hat{X}_i, \hat{\Sigma_i} \mid \beta ) & \mbox{ if } \hat{X}_i/\hat{\Sigma}_i \geq t_w \\
         \overline{p}_{Q,1} & \mbox{ if } t \leq \hat{X}_i/\hat{\Sigma}_i  < t_w \\
        \overline{p}_{Q,0} & \mbox{ if } -t < \hat{X}_i/\hat{\Sigma}_i  < t \\
        f_{Q,-2}(\hat{X}_i, \hat{\Sigma}_i \mid \beta ) & \mbox{ if } \hat{X}_i/\hat{\Sigma}_i  \leq -t
    \end{cases}
\end{align*}
Then, the likelihood function is
\begin{equation*}
    L(\beta) = \prod_{i=1}^n \frac{{f}(\hat{X}_i, \hat{\Sigma}_i \mid \beta)}{\overline{p}_{TOT}} ~.
\end{equation*}

Additionally, we impose the following constraints: 
\begin{itemize}
    \item $-t \leq \overline{X}_f^{n^*-1} \leq ... \leq \overline{X}_f^1 \leq t$
    \item $0 \leq r_s \leq 1$ 
    \item $\sigma, \delta_0, \delta_-, \kappa, \lambda > 0$
\end{itemize}
The first constraint follows from Proposition \ref{prop:onesided_strat} except the lower bound for the thresholds is set to $-t$ rather than $0$. This is strict relaxation of the model since it is now agnostic about researcher behavior over a wider interval. 
The next two constraints follow from the definition of the variables. 
We implement all of the weak inequality constraints as strict inequality constraints via variable transformations. 

\newpage

\section{Online Appendix: Sensitivity Analysis for Empirical Results}\label{app:emp_robust}

\subsection{Effect of Nudges on Choices}\label{appendix:emp_robust_nudge}

Tables \ref{tab:emp_nudge_nstar} and \ref{tab:emp_nudge_tw} presents results for \cite{mertens2022effectiveness} under different choices of $n^*$ and $t_w$ respectively. In Table \ref{tab:emp_nudge_p}, instead of choosing the most significant estimate, we consider using the $50$-th percentile most significant estimate as the preferred value.\footnote{In this application, each study has at most 3 estimates, so that the 75-th percentile sample is identical to the 100-th percentile sample.} In each of the above tables, the specification in the main text is highlighted in bold. Finally, in Table \ref{tab:emp_nudge_pub}, we consider selecting one estimate per published paper instead of one estimate per study. Table \ref{tab:emp_nudge_pub} should be compared with Table \ref{tab:emp_nudge_p}. Across the board, we find that p-hacking mitigates publication bias and that our results are not sensitive to the choice of tuning parameters. 

\begin{table}[htbp]
  \centering
    \begin{tabular}{cccccc}\hline\hline
    $n^*$ & $1/\delta$ & $\mathbb{E}(\Theta)$ & $\mathbb{E}(\gamma_h)$ & $\mathbb{E}(\gamma)$ & $\mathbb{E}(\gamma) - \mathbb{E}(\gamma_h)$ \\ \hline
    \multirow{2}[0]{*}{1} & 0.095 & 0.284 & 0.531 & 1.009 & 0.478 \\
          & (0.032) & (0.112) & (0.028) & (0.158) & (0.154) \\
    \multirow{2}[0]{*}{\textbf{2}} & \textbf{0.083} & \textbf{0.171} & \textbf{0.531} & \textbf{0.959} & \textbf{0.428} \\
          & \textbf{(0.031)} & \textbf{(0.130)} & \textbf{(0.028)} & \textbf{(0.180)} & \textbf{(0.178)} \\
    \multirow{2}[0]{*}{3} & 0.079 & 0.123 & 0.531 & 0.928 & 0.397 \\
          & (0.024) & (0.124) & (0.028) & (0.143) & (0.138) \\
    \multirow{2}[0]{*}{4} & 0.077 & 0.094 & 0.531 & 0.911 & 0.380 \\
          & (0.026) & (0.132) & (0.028) & (0.151) & (0.148) \\
    \hline\hline
    \end{tabular}%
    \caption{Sensitivity analysis of the \cite{mertens2022effectiveness} application to the choice of $n^*$. Our preferred specification, $n^* = 2$ is in bold.}\label{tab:emp_nudge_nstar}
\end{table}%

\begin{table}[htbp]
  \centering
    \begin{tabular}{cccccc}\hline\hline
    $t_w$ & $1/\delta$ & $\mathbb{E}(\Theta)$ & $\mathbb{E}(\gamma_h)$ & $\mathbb{E}(\gamma)$ & $\mathbb{E}(\gamma) - \mathbb{E}(\gamma_h)$ \\ \hline
    \multirow{2}[0]{*}{3} & 0.148 & 0.165 & 0.531 & 0.619 & 0.088 \\
          & (0.072) & (0.113) & (0.028) & (0.069) & (0.063) \\
    \multirow{2}[0]{*}{\textbf{4}} & \textbf{0.083} & \textbf{0.171} & \textbf{0.531} & \textbf{0.959} & \textbf{0.428} \\
          & \textbf{(0.031)} & \textbf{(0.130)} & \textbf{(0.028)} & \textbf{(0.180)} & \textbf{(0.178)} \\
    \multirow{2}[0]{*}{5} & 0.072 & 0.209 & 0.531 & 1.205 & 0.674 \\
          & (0.029) & (0.159) & (0.028) & (0.297) & (0.296) \\
          \hline\hline
    \end{tabular}%
    \caption{Sensitivity analysis of the \cite{mertens2022effectiveness} application to the choice of $t_w$. Our preferred specification, $t_w = 4$ is in bold.}\label{tab:emp_nudge_tw}
\end{table}%

\begin{table}[htbp]
  \centering
    \begin{tabular}{cccccc} \hline\hline
    $p$   & $1/\delta$ & $\mathbb{E}(\Theta)$ & $\mathbb{E}(\gamma_h)$ & $\mathbb{E}(\gamma)$ & $\mathbb{E}(\gamma) - \mathbb{E}(\gamma_h)$ \\ \hline
    \multirow{2}[0]{*}{\textbf{1}} & \textbf{0.083} & \textbf{0.171} & \textbf{0.531} & \textbf{0.959} & \textbf{0.428} \\
          & \textbf{(0.031)} & \textbf{(0.130)} & \textbf{(0.028)} & \textbf{(0.180)} & \textbf{(0.178)} \\
    \multirow{2}[0]{*}{0.5} & 0.104 & 0.211 & 0.521 & 0.906 & 0.386 \\
          & (0.030) & (0.109) & (0.028) & (0.116) & (0.114) \\
    \hline\hline
    \end{tabular}%
    \caption{Sensitivity analysis of the \cite{mertens2022effectiveness} application to the choice of $p$. Our preferred specification, $p = 1$ (i.e. selecting the most significant estimate) is in bold. Because studies have 3 or fewer estimates each, $p =0.75$ leads to the same preferred sample as $p = 1$ and is omitted.}\label{tab:emp_nudge_p}
\end{table}%

\begin{table}[htbp]
  \centering
    \begin{tabular}{cccccc}\hline\hline
    $p$   & $1/\delta$ & $\mathbb{E}(\Theta)$ & $\mathbb{E}(\gamma_h)$ & $\mathbb{E}(\gamma)$ & $\mathbb{E}(\gamma) - \mathbb{E}(\gamma_h)$ \\
    \hline
    \multirow{2}[0]{*}{1} & 0.080 & 0.176 & 0.567 & 0.938 & 0.371 \\
          & (0.032) & (0.151) & (0.037) & (0.155) & (0.149) \\
    \multirow{2}[0]{*}{0.75} & 0.082 & 0.181 & 0.569 & 0.921 & 0.352 \\
          & (0.032) & (0.150) & (0.037) & (0.130) & (0.122) \\
    \multirow{2}[0]{*}{0.5} & 0.080 & 0.099 & 0.529 & 0.937 & 0.408 \\
          & (0.033) & (0.158) & (0.036) & (0.203) & (0.194) \\
    \hline\hline
    \end{tabular}%
    \caption{Results for \cite{mertens2022effectiveness} using one estimate per published paper instead one estimate per study.}\label{tab:emp_nudge_pub}\end{table}%

\subsection{Effect of Development Aid on Economic Growth}\label{appendix:emp_robust_aid}

Tables \ref{tab:emp_aid_nstar} and \ref{tab:emp_aid_tw} presents results for \cite{doucouliagos2011ineffectiveness} under different choices of $n^*$ and $t_w$ respectively. In Table \ref{tab:emp_aid_p}, instead of choosing the most significant estimate, we consider using the $75$-th and $50$-th percentile most significant estimate as the preferred value. In the above tables, the specification in the main text is highlighted in bold. Across the board, we find that p-hacking exacerbates publication bias and that our results are robust to the choice of tuning parameters.  

Finally, Table \ref{tab:emp_aid_2s} presents results from fitting the general model in which $\delta_-$ is estimated. See Appendix \ref{app:twosided_emp}. We find that the estimate of $\mathbb{E}(\gamma)$---and hence $\mathbb{E}(\gamma)-\mathbb{E}(\gamma_h)$---is very close to that in the one-sided model. However, standard errors are much larger, likely because the two-sided model adds 3 parameters (including $\delta_-$) and observations are censored over a larger set ($(-4,4]$ instead of $(-1.96, 4]$). Taken together, the point estimates do not appear to contradict the conclusion of exacerbation in this literature.

\begin{table}[htbp]
  \centering
    \begin{tabular}{cccccc}
    \hline\hline
    $n^*$ & $1/\delta$ & $\mathbb{E}(\Theta)$ & $\mathbb{E}(\gamma_h)$ & $\mathbb{E}(\gamma)$ & $\mathbb{E}(\gamma) - \mathbb{E}(\gamma_h)$ \\
	\hline
    \multirow{2}[0]{*}{1} & 0.140 & -0.069 & 0.185 & 0.160 & -0.026 \\
          & (0.112) & (0.075) & (0.032) & (0.043) & (0.032) \\
    \multirow{2}[0]{*}{\textbf{2}} & \textbf{0.149} & \textbf{-0.113} & \textbf{0.185} & \textbf{0.121} & \textbf{-0.064} \\
          & \textbf{(0.057)} & \textbf{(0.061)} & \textbf{(0.032)} & \textbf{(0.038)} & \textbf{(0.017)} \\
    \multirow{2}[0]{*}{3} & 0.130 & -0.147 & 0.185 & 0.110 & -0.076 \\
          & (0.049) & (0.061) & (0.032) & (0.039) & (0.019) \\
    \multirow{2}[0]{*}{4} & 0.119 & -0.168 & 0.185 & 0.102 & -0.083 \\
          & (0.047) & (0.063) & (0.032) & (0.041) & (0.019) \\
        \hline\hline
    \end{tabular}%
    \caption{Sensitivity analysis of the \cite{doucouliagos2011ineffectiveness} application to the choice of $n^*$. Our preferred specification, $n^* = 2$ is in bold.}\label{tab:emp_aid_nstar}
\end{table}%

\begin{table}[htbp]
  \centering
    \begin{tabular}{cccccc}\hline\hline
    $t_w$ & $1/\delta$ & $\mathbb{E}(\Theta)$ & $\mathbb{E}(\gamma_h)$ & $\mathbb{E}(\gamma)$ & $\mathbb{E}(\gamma) - \mathbb{E}(\gamma_h)$\\
    \hline
    \multirow{2}[1]{*}{3} & 0.151 & -0.110 & 0.185 & 0.131 & -0.054 \\
          & (0.056) & (0.062) & (0.032) & (0.039) & (0.013) \\
    \multirow{2}[0]{*}{\textbf{4}} & \textbf{0.149} & \textbf{-0.113} & \textbf{0.185} & \textbf{0.121} & \textbf{-0.064} \\
          & \textbf{(0.057)} & \textbf{(0.061)} & \textbf{(0.032)} & \textbf{(0.038)} & \textbf{(0.017)} \\
    \multirow{2}[0]{*}{5} & 0.356 & -0.050 & 0.185 & 0.072 & -0.113 \\
          & (0.449) & (0.125) & (0.032) & (0.056) & (0.050) \\
          \hline\hline
    \end{tabular}%
    \caption{Sensitivity analysis of the \cite{doucouliagos2011ineffectiveness} application to the choice of $t_w$. Our preferred specification, $t_w = 4$ is in bold.}\label{tab:emp_aid_tw}
\end{table}%

\begin{table}[htbp]
  \centering
    \begin{tabular}{cccccc} \hline\hline
    $p$   & $1/\delta$ & $\mathbb{E}(\Theta)$ & $\mathbb{E}(\gamma_h)$ & $\mathbb{E}(\gamma)$ & $\mathbb{E}(\gamma) - \mathbb{E}(\gamma_h)$ \\
    \hline
    \multirow{2}[0]{*}{\textbf{1}} & \textbf{0.149} & \textbf{-0.113} & \textbf{0.185} & \textbf{0.121} & \textbf{-0.064} \\
          & \textbf{(0.057)} & \textbf{(0.061)} & \textbf{(0.032)} & \textbf{(0.038)} & \textbf{(0.017)} \\
    \multirow{2}[0]{*}{0.75} & 0.581 & 0.000 & 0.144 & 0.059 & -0.085 \\
          & (0.571) & (0.098) & (0.029) & (0.046) & (0.035) \\
    \multirow{2}[0]{*}{0.5} & 1.149 & 0.103 & 0.124 & 0.091 & -0.033 \\
          & (0.544) & (0.068) & (0.024) & (0.037) & (0.027) \\
    \hline\hline
    \end{tabular}%
    \caption{Sensitivity analysis of the \cite{doucouliagos2011ineffectiveness} application to the choice of $p$. Our preferred specification, $p=1$ (i.e. selecting the most significant estimate) is in bold.}\label{tab:emp_aid_p}
\end{table}%

\begin{table}[htbp]
  \centering
    \begin{tabular}{cccccc}\hline\hline
    $1/\delta_-$ & $1/\delta_0$ & $\mathbb{E}(\Theta)$ & $\mathbb{E}(\gamma_h)$ & $\mathbb{E}(\gamma)$ & $\mathbb{E}(\gamma) - \mathbb{E}(\gamma_h)$ \\
    \hline
    0.024 & 0.155 & -0.265 & 0.185 & 0.120 & -0.065 \\
    (0.023) & (0.287) & (0.138) & (0.032) & (0.174) & (0.171) \\
    \hline\hline
    \end{tabular}%
    \caption{Results for \cite{doucouliagos2011ineffectiveness} under two-sided selection. Here, $n^* = 2$, $t_w = 4$ and $ p = 1$.}\label{tab:emp_aid_2s}
\end{table}%

\newpage

\section{Online Appendix: Proofs of Secondary Results}
\label{onlineapx_secondaryproof}

\subsection{Proof of Proposition  \ref{thm:slow_quad}}\label{pf:thm:slow_quad}

Under quadratic cost, When undertaking slow p-hacking, a researcher who undertakes slow p-hacking manipulates the point estimate to
\[\xi(X) = \frac{c_Xt^2}{c_Xt^2+c_\Sigma}\cdot X + \frac{c_\Sigma}{c_Xt^2+c_\Sigma}\cdot t\Sigma=\alpha \cdot X +(1-\alpha)\cdot t\Sigma\,,\]
where $\alpha:= c_Xt^2/(c_Xt^2+c_\Sigma)$. Recall our previous notation $\lambda = 1/\delta$. The slow p-hacking threshold satisfies 
\[V(1-\lambda) = \kappa_s \frac{c_Xc_\Sigma}{c_Xt^2+c_\Sigma}(t\Sigma-\overline{X}_s)^2\iff \frac{t\Sigma-\overline{X}_s}{\sqrt{1-\lambda}}=\sqrt{\frac{V[c_Xt^2+c_\Sigma]}{\kappa_sc_Xc_\Sigma}}\,.\]

Let $\varepsilon:=t\Sigma - \overline{X}_s$. Under quadratic costs, we thus have
\[\varepsilon=t\Sigma-\overline{X}_s=\sqrt{\frac{(1-\lambda)V[c_Xt^2+c_\Sigma]}{\kappa_sc_Xc_\Sigma}}.\]
Note that the RHS is a strictly decreasing bijection from $\lambda\in (0,1)$ to $\varepsilon\in (0,\overline{\varepsilon})$ where $\overline{\varepsilon}:=\sqrt{V[c_Xt^2+c_\Sigma]/(\kappa_sc_Xc_\Sigma)}$. Therefore, we can use $\varepsilon$ instead of $\lambda$ as the key parameter in our analysis. We have:
\[\lambda = 1-\frac{\varepsilon^2}{\overline{\varepsilon}^2}\quad\text{and}\quad \frac{d\lambda}{d\varepsilon}=-\frac{2\varepsilon}{\overline{\varepsilon}^2}<0\,.\]

Recall from the proof of Theorem \ref{thm:slow} that 
\[\text{sign}(\tilde{\gamma}_s(\lambda) - \tilde{\gamma}(\lambda)) = \text{sign}(\tilde{\Lambda}(\lambda))\,,\]
where $\tilde{\Lambda}(\lambda) := (e_b^s- \lambda e_b) - (1-\lambda) \tilde{\gamma}(\lambda)$. Furthermore, for $\lambda<1$, $ \text{sign}(\tilde{\Lambda}(\lambda))= \text{sign}(\tilde{\Lambda}(\lambda)/(1-\lambda))$. Throughout, let $w_+(x) := \phi(x)+\phi(-x)$ and $w_-(x) = \phi(x)-\phi(-x)$. We have
\begin{align*}
\frac{\tilde{\Lambda}(\lambda)}{1-\lambda} = \frac{(e_b^s- \lambda e_b)}{1-\lambda}-\tilde{\gamma}(\lambda)&=\int_{\overline{X}_s}^{t\Sigma} \frac{(\xi(x) -\lambda x)}{(1-\lambda)}\frac{w_-(x)}{\int_{\overline{X}_s}^{t\Sigma}w_+(x')dx'}dx-\tilde{\gamma}(\lambda)\\
&=\int_{\overline{X}_s}^{t\Sigma} \left(\frac{(\xi(x) -\lambda x)}{(1-\lambda)}\frac{w_-(x)}{w_+(x)}-\tilde{\gamma}(\lambda)\right)\frac{w_+(x)}{\int_{\overline{X}_s}^{t\Sigma}w_+(x')dx'}dx.
 \end{align*}
 Let 
 \[\Psi(\varepsilon):=\int_{t\Sigma-\varepsilon}^{t\Sigma} \left[\left(\frac{\overline{\varepsilon}^2}{\varepsilon^2}\left(\xi(x) -x\right)+x\right)\frac{w_-(x)}{w_+(x)}-\tilde{\gamma}\left(1-\frac{\varepsilon^2}{\overline{\varepsilon}^2}\right)\right]w_+(x)dx.\]
 We thus have 
\[\text{sign}(\tilde{\gamma}_s(\lambda) - \tilde{\gamma}(\lambda)) = \text{sign}(\Psi(\varepsilon))\,.\]
We want to show that $\Psi(\varepsilon)=0$ implies $\Psi'(\varepsilon)<0$ for any $\varepsilon\in (0,\overline{\varepsilon})$. Differentiating $\Psi$, we get 
\begin{align}
\Psi'(\varepsilon) &=  \left(\frac{\overline{\varepsilon}^2}{\varepsilon^2}(\xi(\overline{X}_s)-\overline{X}_s) + \overline{X}_s\right)w_-(\overline{X}_s)-\tilde{\gamma}\left(1-\frac{\varepsilon^2}{\overline{\varepsilon}^2}\right)w_+(\overline{X}_s) \nonumber \\
&-\frac{2\overline{\varepsilon}^2}{\varepsilon^3}\int_{t\Sigma-\varepsilon}^{t\Sigma} \left(\xi(x) -x\right)w_-(x)dx+\frac{2\varepsilon}{\overline{\varepsilon}^2}\tilde{\gamma}'\left(1-\frac{\varepsilon^2}{\overline{\varepsilon}^2}\right) \cdot \int_{t\Sigma-\varepsilon}^{t\Sigma} w_+(x)dx. \label{eq:sc1}
\end{align}
Note that $\tilde{\gamma}$ is strictly decreasing in $\lambda$. Indeed,
\[\tilde{\gamma}'(\lambda) = \frac{p_a(p_b+p_c)}{(p_a+\lambda(p_b+p_c))^2}\cdot\left(\frac{p_be_b+p_ce_c}{p_b+p_c}-e_a\right)<0,\]
where the inequality comes from Lemma \ref{lem:key}. Also, $\int_{t\Sigma-\varepsilon}^{t\Sigma} w_+(x)dx > 0$. We can thus infer that the fourth term of (\ref{eq:sc1}) is negative. 

Take any $\varepsilon$ such that $\Psi(\varepsilon)=0$. By definition of $\Psi$, we have 
\begin{equation}
\int_{t\Sigma-\varepsilon}^{t\Sigma} \left(\frac{\overline{\varepsilon}^2}{\varepsilon^2}(\xi(x)-x)+x\right)w_-(x)dx=\tilde{\gamma}\left(1-\frac{\varepsilon^2}{\overline{\varepsilon}^2}\right)\int_{t\Sigma-\varepsilon}^{t\Sigma}w_+(x)dx. \label{eq:sc2}
\end{equation}
Then, 
\begin{align}
\Psi'(\varepsilon) < &\left( \frac{\overline{\varepsilon}^2}{\varepsilon^2} (\xi(\overline{X}_s)- \overline{X}_s)+\overline{X}_s\right)w_-(\overline{X}_s)-w_+(\overline{X}_s)\int_{t\Sigma-\varepsilon}^{t\Sigma}\left(\frac{\overline{\varepsilon}^2}{\varepsilon^2}(\xi(x)-x)+x \right)\frac{w_-(x)}{\int_{t\Sigma-\varepsilon}^{t\Sigma}w_+(x')dx'}dx \nonumber  \\
&-\frac{2\overline{\varepsilon}^2}{\varepsilon^3}\int_{t\Sigma-\varepsilon}^{t\Sigma} \left(\xi(x) -x\right)w_-(x)dx \nonumber \\
=&w_+(\overline{X}_s)\times \nonumber \\ 
&\quad\int_{t\Sigma-\varepsilon}^{t\Sigma}\left[ \frac{\overline{\varepsilon}^2}{\varepsilon^2}\left(\xi(\overline{X}_s) - \overline{X}_s)+ \overline{X}_s\right)\frac{w_-(\overline{X}_s)}{w_+(\overline{X}_s)}-\left(\frac{\overline{\varepsilon}^2}{\varepsilon^2}(\xi(x)-x)+ x\right)\frac{w_-(x)}{w_+(x)}\right]\frac{w_+(x)}{\int_{t\Sigma-\varepsilon}^{t\Sigma}w_+(x')dx'}dx\nonumber \\
&-\frac{2\overline{\varepsilon}^2}{\varepsilon^3} \int_{t\Sigma-\varepsilon}^{t\Sigma} \left(\xi(x) -x\right)w_-(x)dx \nonumber \\
\leq&w_-(\overline{X}_s)\int_{t\Sigma-\varepsilon}^{t\Sigma} \left[\left(\frac{\overline{\varepsilon}^2}{\varepsilon^2}(\xi(\overline{X}_s) -\overline{X}_s)+\overline{X}_s\right)-\left(\frac{\overline{\varepsilon}^2}{\varepsilon^2} (\xi(x)-x)+x\right)\right]\frac{w_+(x)}{\int_{t\Sigma-\varepsilon}^{t\Sigma}w_+(x')dx'}dx \nonumber \\
&-\frac{2\overline{\varepsilon}^2}{\varepsilon^3} \frac{w_-(\overline{X}_s)}{w_+(\overline{X}_s)}\int_{t\Sigma-\varepsilon}^{t\Sigma} \left(\xi(x) -x\right)w_+(x)dx. \label{eq:sc3}
\end{align}
where the first inequality uses (\ref{eq:sc2}) and the fact that the fourth term of (\ref{eq:sc1}) is negative, the equality rearranges terms, and the last inequality uses the fact that $w_-(x)/w_+(x)$ is increasing and $\xi(x) > x$ over the relevant range of $x$.

Let $F(x):=\int^{t\Sigma}_xw_+(x')dx'$. Integrating by part and using the fact that $\xi(x)-x = (1-\alpha)(t\Sigma-x)$ under quadratic costs, we have 
\begin{align}
& \int_{t\Sigma-\varepsilon}^{t\Sigma} \left[\left(\frac{\overline{\varepsilon}^2}{\varepsilon^2}(\xi(\overline{X}_s) -\overline{X}_s)+\overline{X}_s\right)-\left(\frac{\overline{\varepsilon}^2}{\varepsilon^2} (\xi(x)-x)+x\right)\right]w_+(x)dx \nonumber \\
=& -\int_{t\Sigma-\varepsilon}^{t\Sigma}\frac{d}{dx}\left(\frac{\overline{\varepsilon}^2}{\varepsilon^2} (\xi(x)-x)+x\right)F(x)dx\nonumber \\
=& -\left(1-\frac{\overline{\varepsilon}^2}{\varepsilon^2} (1-\alpha)\right)\int_{t\Sigma-\varepsilon}^{t\Sigma}F(x)dx, \label{eq:sc4}
\end{align}
and 
\begin{align}
\int_{t\Sigma-\varepsilon}^{t\Sigma} \left(\xi(x) -x\right)w_+(x)dx&=(\xi(\overline{X}_s)-\overline{X}_s))F(\overline{X}_s)-(1-\alpha)\int_{t\Sigma-\varepsilon}^{t\Sigma}F(x)dx\nonumber \\
&=(1-\alpha)\varepsilon F(\overline{X}_s)-(1-\alpha)\int_{t\Sigma-\varepsilon}^{t\Sigma}F(x)dx. \label{eq:sc5}
\end{align}
Substitute (\ref{eq:sc4}) and (\ref{eq:sc5}) into (\ref{eq:sc3}), we know that $\Psi'(\varepsilon) < 0$ if 
\begin{align}
-\left(1-\frac{\overline{\varepsilon}^2}{\varepsilon^2} (1-\alpha)\right)&\frac{\int_{t\Sigma-\varepsilon}^{t\Sigma}F(x)dx}{F(\overline{X}_s)} - \frac{2\overline{\varepsilon}^2}{\varepsilon^3} \frac{(1-\alpha)}{w_+(\overline{X}_s)}\left(\varepsilon F(\overline{X}_s)-\int_{t\Sigma-\varepsilon}^{t\Sigma}F(x)dx\right) \leq 0. \label{eq:sc6}
\end{align}
Dividing (\ref{eq:sc6}) by $\varepsilon>0$, it collapses to 
\[\left[\frac{\overline{\varepsilon}^2}{\varepsilon^2} (1-\alpha)-1\right]\times r(\varepsilon) \leq \frac{2\overline{\varepsilon}^2}{\varepsilon^2}(1-\alpha)\times\frac{1-r(\varepsilon)}{z(\varepsilon)}\]
where
\[r(\varepsilon):=\frac{\int_{t\Sigma-\varepsilon}^{t\Sigma}F(x)dx}{\varepsilon F(\overline{X}_s)}\quad\text{and}\quad z(\varepsilon):=\frac{\varepsilon w_+(\overline{X}_s)}{F(\overline{X}_s)}\,.\]
Since $\frac{\overline{\varepsilon}^2}{\varepsilon^2} (1-\alpha)-1\leq \frac{\overline{\varepsilon}^2}{\varepsilon^2}(1-\alpha)$, it suffices to show that 
\[r(\varepsilon) \leq 2\times\frac{1-r(\varepsilon)}{z(\varepsilon)}\iff r(\varepsilon)\leq \frac{2}{2+z(\varepsilon)}\,.\]
By assumption, $w_+$ is log-concave on $[\overline{X}_s, t\Sigma]$. Thus, for each $s\geq 0$, the function $x\mapsto \log(w_+(x+s)) - \log (w_+(x))$ is non-increasing over the relevant range. Exponentiating, this implies that $w_+(x+s)/w_+(x)$ is non-increasing as well. 

By a change of variable, we can write
\[\frac{F(x)}{w_+(x)} = \frac{\int_x^{t\Sigma}w_+(x')dx'}{w_+(x)} = \frac{\int_0^{t\Sigma-x}w_+(x+s)dx'}{w_+(x)}\,.\]
Thus, $F(x)/w_+(x)$ is also non-increasing in $x$. Note that 
\[-\frac{d\log (F(x))}{dx} = \frac{w_+(x)}{F(x)}\,,\]
which is non-decreasing. Thus, the decay rate of $F$ at $\overline{X}_s$ is a lower bound on its decay rate at any $x\in (\overline{X}_s,t\Sigma)$: 
\[-\frac{d\log (F(x))}{dx}\geq \frac{w_+(\overline{X}_s)}{F(\overline{X}_s)}=\frac{z(\varepsilon)}{\varepsilon}\quad\forall x\in (\overline{X}_s,t\Sigma).\]
Integrating and exponentiating gives 
\[F(x)\leq F(\overline{X}_s)\exp\left(-(x-\overline{X}_s)\frac{z(\varepsilon)}{\varepsilon}\right)\,.\]
Thus, 
\begin{align*}
r(\varepsilon)\leq \frac{\int_{t\Sigma-\varepsilon}^{t\Sigma}F(\overline{X}_s)\exp\left(-(x-\overline{X}_s)\frac{z(\varepsilon)}{\varepsilon}\right)dx}{\varepsilon F(\overline{X}_s)}&=\frac{1-\exp(-z(\varepsilon))}{z(\varepsilon)}\,.
\end{align*}
To finish the proof, we have left to show that, for $z(\varepsilon)>0$,
\[\frac{1-\exp(-z(\varepsilon))}{z(\varepsilon)}\leq \frac{2}{2+z(\varepsilon)}\iff 0\leq (2+z(\varepsilon))\exp(-z(\varepsilon))-2+z(\varepsilon)\,.\]
The RHS is increasing in $z(\varepsilon)$ and equal to zero at $z(\varepsilon)=0$, so the inequality holds. As a result, for any  $\varepsilon$ such that $\Psi(\varepsilon)=0$, we have $\Psi'(\varepsilon)<0$. Thus, the sign of $\tilde{\gamma}_s(\lambda) - \tilde{\gamma}(\lambda)$ only changes once: there exists $\lambda^*\in (0,1)$ such that $\tilde{\gamma}_s(\lambda)\geq \tilde{\gamma}(\lambda)$ iff $\lambda \geq \lambda^*$. Equivalently, there exists a single threshold $1<\delta^*<\infty$ such that slow p-hacking mitigates publication bias if $\delta > \delta^*$ and exacerbates publication bias if $\delta < \delta^*$.

\subsection{Proof of Proposition \ref{prop:onesided_strat}} \label{pf:prop:onesided_strat}

\paragraph{Disclaimer on notations:} For the sake of readability, the proofs in Appendies~\ref{pf:prop:onesided_strat} and \ref{pf:thm:onesided} reuse several notations from the proofs of Section~\ref{section:twosided} (e.g., the continuation value functions $U_n(\cdot)$ and $V_n(\cdot)$, the benefit function $\Delta_n(\cdot)$, the expected estimates $e_i$, and the probabilities $(p_i, q_i)$). Unless otherwise specified, these terms in Appendices~\ref{pf:prop:onesided_strat} and \ref{pf:thm:onesided} are defined locally within the appendices for the one-sided selective publication setting and should not be confused with their counterparts in the proofs for two-sided selective publication. %

We characterize the optimal p-hacking strategy by backward induction. 

\paragraph{Step 1: Optimal slow p-hacking.} 

When the researcher decides to undertake slow p-hacking, he must have stopped drawing new estimates. Suppose he has already drawn $k$ estimates, with the best current estimate being $\breve{X}^*_k:=\max_{l\leq k}X_l$. He slow p-hacks if and only if 
\[V - \kappa_s C(\breve{X}^*_k)  > \frac{V}{\delta}.\]
The LHS is strictly increasing in $\breve{X}^*_k \in [0, t\Sigma]$, strictly smaller than the RHS for $\breve{X}^*_k = 0$ because of Assumption~\ref{assump:largekappas}, and strictly larger than the RHS for $\breve{X}^*_k = t\Sigma$. Thus, there exists $\overline{X}_s \in (0, t\Sigma)$ such that the researcher slow p-hacks if and only if the current best estimate $\breve{X}^*_k \in (\overline{X}_s, t\Sigma)$. Based on his slow p-hacking strategy, we define his continuation value upon stopping drawing new estimates as
\[U_{s}(x) = \frac{V}{\delta}+V\left(1-\frac{1}{\delta}\right)\mathbbm{1}\{x >\overline{X}_s\}- \kappa_s C(x)\mathbbm{1}\{\overline{X}_s< x < t\Sigma\}\]
when the value of his best current estimate is $x$.

\paragraph{Step 2: Optimal fast p-hacking.} 

~\\
\noindent \underline{Step 2.1: Existence of $n^*$.} The same arguments as in the proof of Proposition \ref{prop:combined_strat} yield that the maximal number of studies drawn, $n^*$, is finite. As before, we introduce two functions. Let $U_n(x)$ denote the researcher's continuation value under his optimal strategy when he has drawn $n$ estimates, with the best current estimate being $x$. By the definition of $n^*$, we have $U_{n^*}(x) = U_s(x)$. Let $V_n(x)$ denote the researcher's continuation value if he has drawn $n$ estimates, with the best current estimate being $x$, and decides to draw one more estimate. By definition, we have
\begin{equation*}
V_n(x) = \mathbb{E}_{X_{n+1}\sim\mathcal{N}(0,\Sigma^2)}[U_{n+1}(\max\{x, X_{n+1}\})] - \kappa_f c_{n+1},
\end{equation*}
and also,
\begin{equation*}
U_n(x) = \max\{U_s(x), V_n(x)\}.
\end{equation*}

Moreover, we let $\Delta_n(x) := [V_n(x) + \kappa_f c_{n+1}] - U_s(x)$ denote the benefit of drawing one more estimate (without taking into account the cost). Apparently, $V_n(x) > U_s(x)$ is equivalent to $\Delta_n(x) > \kappa_f c_{n+1}$. From the researcher's cost functions, we can infer that $U_n(x)$, $V_n(x)$, and $\Delta_n(x)$ are all constant for $x \le \overline{X}_s$. It remains to show two statements:

\begin{enumerate}[(a)]
    \item For any $n \le n^* - 1$, $\exists \overline{X}_f^n \in (\overline{X}_s, t\Sigma)$ such that $V_n(x) > U_s(x)$ if and only if $x < \overline{X}_f^n$.
    \item For $n \le n^* - 1$, the threshold $\overline{X}_f^n$ strictly decreases in $n$.
\end{enumerate}

\noindent \underline{Step 2.2: Optimal fast p-hacking for $n = n^* - 1$.} When $n = n^* - 1$, the researcher draws another estimate if and only if $\Delta_{n^*-1}(x) > \kappa_f c_{n^*}$. From its closed form, we can infer that: $\Delta_{n^*-1}(x)$ is constant for $x \le \overline{X}_s$, strictly decreases for $x \in (\overline{X}_s, t\Sigma)$, and becomes constant again at zero for $x \ge t\Sigma$. Since $\Delta_{n^*-1}(t\Sigma) = 0 <\kappa_f  c_{n^*} < \Delta_{n^*-1}(\overline{X}_s)$, whose last inequality holds as the researcher may draw one more estimate with $n = n^* - 1$, we know that there exists $\overline{X}_f^{n^*-1} \in (\overline{X}_s, t\Sigma)$ such that $V_{n^*-1}(x) > U_s(x)$ if and only if $x < \overline{X}_f^{n^*-1}$. Accordingly,
\begin{equation*}
U_{n^*-1}(x) = 
\begin{cases} 
U_s(x) & \text{if } x \ge \overline{X}_f^{n^*-1} \\ 
V_{n^*-1}(x) & \text{if } x < \overline{X}_f^{n^*-1}. 
\end{cases}
\end{equation*}

\noindent \underline{Step 2.3: Optimal fast p-hacking for $n < n^* - 1$.} We now use induction to prove the two statements identified in Step 2.1 for any $n < n^* - 1$.

\begin{lemma}
\label{lemma:delta_onesided}
Given $n < n^* - 1$. Suppose $V_{n+1}(x)$, $\Delta_{n+1}(x)$, and $c_{n+2}$ satisfy the following.
\begin{enumerate}[(i)]
    \item \textit{$V_{n+1}(x)$ is constant for $x \le \overline{X}_s$, strictly increases for $x \in (\overline{X}_s, t\Sigma)$, and becomes constant again for $x \ge t\Sigma$;}
    \item \textit{$\Delta_{n+1}(x)$ is constant for $x \le \overline{X}_s$, strictly decreases for $x \in (\overline{X}_s, t\Sigma)$, and becomes constant again at zero for $x \ge t\Sigma$;}
    \item \textit{$\Delta_{n+1}(\overline{X}_s) > \kappa_f c_{n+2}$.}
\end{enumerate}
Then $V_n(x)$, $\Delta_n(x)$, and $c_{n+1}$ also satisfy these conditions.
\end{lemma}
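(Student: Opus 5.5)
The plan follows the proof of Lemma~\ref{lemma:delta} in the two-sided case almost verbatim. The only change is that $\max\{|x|,|X_{n+1}|\}$ is replaced by $\max\{x,X_{n+1}\}$, and symmetry is replaced by the observation that everything is constant for $x\le\overline{X}_s$. We verify the three conditions in order.

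\textbf{Part (i).} $U_s(x)$ is constant (equal to $V/\delta$) for $x\le\overline{X}_s$. On $(\overline{X}_s,t\Sigma)$ it equals $V-\kappa_s C(x)$, which is strictly increasing because $C$ is strictly decreasing there. It is constant at $V$ for $x\ge t\Sigma$. By hypothesis (i), $V_{n+1}$ has the same shape, so $U_{n+1}=\max\{U_s,V_{n+1}\}$ inherits it. Then
\[
V_n(x)=\mathbb{E}\bigl[U_{n+1}(\max\{x,X_{n+1}\})\bigr]-\kappa_f c_{n+1}.
\]
For $x\le\overline{X}_s$, the value $U_{n+1}(\max\{x,X_{n+1}\})$ does not depend on $x$: if $X_{n+1}\ge x$ it is $U_{n+1}(X_{n+1})$, and otherwise both arguments lie in the region where $U_{n+1}$ is constant. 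For $x\ge t\Sigma$ the integrand is the constant $V$. On $(\overline{X}_s,t\Sigma)$, raising $x$ strictly raises the integrand on the event $X_{n+1}<x$, which has positive probability. This gives strict monotonicity.

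\textbf{Part (ii).} Constancy of $\Delta_n=V_n+\kappa_f c_{n+1}-U_s$ on $x\le\overline{X}_s$ follows from (i). For $x\ge t\Sigma$ we get $\Delta_n=V-V=0$. For strict decrease on $(\overline{X}_s,t\Sigma)$, take $\overline{X}_s<x'<x''<t\Sigma$ and reduce the claim to
\[
\mathbb{E}\bigl[U_{n+1}(\max\{x'',X\})-U_{n+1}(\max\{x',X\})\bigr]<U_s(x'')-U_s(x').
\]
The left side is at most $U_{n+1}(x'')-U_{n+1}(x')$ pointwise, since the difference vanishes when $X\ge x''$ and is smaller when $X\in(x',x'')$. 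The inequality is strict because that event has positive probability and $U_{n+1}$ is strictly increasing there. Next, $U_{n+1}-U_s$ equals $\max\{0,\Delta_{n+1}-\kappa_f c_{n+2}\}$, which is weakly decreasing on $(\overline{X}_s,t\Sigma)$ by hypothesis (ii). Hence $U_{n+1}(x'')-U_{n+1}(x')\le U_s(x'')-U_s(x')$, and the claim follows.

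\textbf{Part (iii).} I would show $\Delta_n(\overline{X}_s)\ge\Delta_{n+1}(\overline{X}_s)>\kappa_f c_{n+2}>\kappa_f c_{n+1}$. The last two inequalities are the hypothesis and strict monotonicity of $c_n$. The first follows from $U_{n+1}\ge U_{n+2}$ pointwise, since both $\Delta$'s subtract the same $U_s$. That inequality needs a small backward-induction argument: the continuation problem at stage $n+1$ dominates that at stage $n+2$ because every future draw is cheaper by the shift $c_{k}\mapsto c_{k-1}$, with $U_{n^*}=U_s$ as the base case.

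This domination claim is the one step I expect to be a genuine obstacle; the paper asserts it as intuitive in the two-sided proof. To make it rigorous, I would argue that the researcher at stage $n+1$ can mimic any strategy of the researcher at stage $n+2$. He gets identical draws and pays $c_{n+1+j}\le c_{n+2+j}$ on the $j$-th additional draw. Mimicking guarantees at least the same value, and optimizing can only raise it.
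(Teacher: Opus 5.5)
Your proposal is correct and follows the paper's proof essentially step for step: the same inheritance argument through $U_{n+1}=\max\{U_s,V_{n+1}\}$ for (i), the same reduction of the strict decrease of $\Delta_n$ to $U_{n+1}(x'')-U_{n+1}(x')\le U_s(x'')-U_s(x')$ for (ii), and the same chain $\Delta_n(\overline{X}_s)\ge\Delta_{n+1}(\overline{X}_s)>\kappa_f c_{n+2}>\kappa_f c_{n+1}$ for (iii). Your mimicking (equivalently, backward-induction) argument for $U_{n+1}\ge U_{n+2}$ makes rigorous a step the paper only calls intuitive, and writing $U_{n+1}-U_s=\max\{0,\Delta_{n+1}-\kappa_f c_{n+2}\}$ avoids presupposing the threshold $\overline{X}_f^{n+1}$.
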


\begin{proof}
\underline{Part (i).} Since both $V_{n+1}(x)$ and $U_s(x)$ satisfy condition (i), we know that $U_{n+1}(x) = \max\{U_s(x), V_{n+1}(x)\}$ also satisfies condition (i). As $V_n(x) = \mathbb{E}_{X_{n+1}\sim\mathcal{N}(0,\Sigma^2)}[U_{n+1}(\max\{x, X_{n+1}\})] - \kappa_f c_{n+1}$, we can further infer that $V_n(x)$ is also constant for $x \le \overline{X}_s$, strictly increases for $x \in (\overline{X}_s, t\Sigma)$, and becomes constant again for $x \ge t\Sigma$, proving condition (i) for $V_n(x)$.

\noindent \underline{Part (ii).} The above result, together with the fact that $U_s(x)$ satisfies condition (i), immediately indicates that $\Delta_n(x) = V_n(x) - U_s(x) + \kappa_f c_{n+1}$ is constant for $x \le \overline{X}_s$. We also know that $\Delta_n(x) = 0$ for $x \ge t\Sigma$ as the researcher gains nothing from one more draw if his best current estimate is already positively significant. It remains to show that $\Delta_n(x)$ strictly decreases for $x \in (\overline{X}_s, t\Sigma)$.

To show this, let $\overline{X}_s < x' < x'' < t\Sigma$, and it suffices to show $\Delta_n(x') > \Delta_n(x'')$, which is equivalent to
\begin{equation*}
\mathbb{E}_{X_{n+1}\sim\mathcal{N}(0,\Sigma^2)}[U_{n+1}(\max\{x', X_{n+1}\})] - U_s(x') > \mathbb{E}_{X_{n+1}\sim\mathcal{N}(0,\Sigma^2)}[U_{n+1}(\max\{x'', X_{n+1}\})] - U_s(x'')
\end{equation*}
by definition, which, with terms rearranged, is
\begin{equation}
\mathbb{E}_{X_{n+1}\sim\mathcal{N}(0,\Sigma^2)}[U_{n+1}(\max\{x'', X_{n+1}\}) - U_{n+1}(\max\{x', X_{n+1}\})] < U_s(x'') - U_s(x'). \label{eq:a10}
\end{equation}
Notice that $U_{n+1}(\max\{x'', X_{n+1}\}) - U_{n+1}(\max\{x', X_{n+1}\}) \le U_{n+1}(x'') - U_{n+1}(x')$, with the inequality being strict if $X_{n+1} \in (x', x'')$. Hence, the LHS of (\ref{eq:a10}) satisfies
\begin{equation}
\mathbb{E}_{X_{n+1}\sim\mathcal{N}(0,\Sigma^2)}[U_{n+1}(\max\{x'', X_{n+1}\}) - U_{n+1}(\max\{x', X_{n+1}\})] < U_{n+1}(x'') - U_{n+1}(x'). \label{eq:a11}
\end{equation}
Meanwhile, we have
\begin{equation*}
U_{n+1}(x) - U_s(x) = 
\begin{cases} 
0 & \text{if } x \ge \overline{X}_f^{n+1} \\ 
V_{n+1}(x) - U_s(x) = \Delta_{n+1}(x) -\kappa_f  c_{n+2} & \text{if } x < \overline{X}_f^{n+1}. 
\end{cases}
\end{equation*}
Since $\Delta_{n+1}(x)$ is strictly decreasing for $x \in (\overline{X}_s, t\Sigma)$ according to the premise of this lemma, we know that $U_{n+1}(x) - U_s(x)$ is weakly decreasing for $x \in (\overline{X}_s, t\Sigma)$, and therefore,
\begin{equation*}
U_{n+1}(x') - U_s(x') \ge U_{n+1}(x'') - U_s(x'')
\end{equation*}
which is equivalent to
\begin{equation}
U_s(x'') - U_s(x') \ge U_{n+1}(x'') - U_{n+1}(x'). \label{eq:a12}
\end{equation}
Combining (\ref{eq:a11}) and (\ref{eq:a12}), we finally show (\ref{eq:a10}) and thus conclude the proof of (ii) for $\Delta_n(x)$.

\noindent \underline{Part (iii).} Following the exact reasoning of Lemma~\ref{lemma:delta}, we have $\Delta_n(\overline{X}_s) \ge \Delta_{n+1}(\overline{X}_s) > \kappa_f c_{n+2} > \kappa_f c_{n+1}$, which concludes the proof.
\end{proof}

Combining statements (ii) and (iii) of Lemma~\ref{lemma:delta_onesided} and the fact that $\Delta_n(x)$ is constant for $x \le \overline{X}_s$, we conclude that there exists $\overline{X}_f^n \in (\overline{X}_s, t\Sigma)$ such that $V_n(x) > U_s(x)$ if and only if $x < \overline{X}_f^n$. This proves statement (a) specified in Step 2.1 for any $n < n^* - 1$.

Finally, we prove statement (b) in Step 2.1. For any $n < n^* - 1$, we have
\begin{equation}
U_s(\overline{X}_f^n) = V_n(\overline{X}_f^n) > V_{n+1}(\overline{X}_f^n), \label{eq:a13}
\end{equation}
where the equality follows from the definition of $\overline{X}_f^n$ and the inequality holds because $V_n(x) > V_{n+1}(x)$ for any $x \in (\overline{X}_s, t\Sigma)$, as previously proved. Hence, (\ref{eq:a13}) implies that $\overline{X}_f^n > \overline{X}_f^{n+1}$ by the definition of $\overline{X}_f^{n+1}$. Therefore, we conclude that for $n \le n^* - 1$, the threshold $\overline{X}_f^n$ strictly decreases in $n$.

\noindent \underline{Step 2.4: Characterization of $n^*$.} Similarly to the proofs of Proposition~\ref{prop:combined_strat}, the maximum number of draws $n^*$ is uniquely pinned down by evaluating the maximum potential benefit of a draw, which occurs at $x=-\infty$ under one-sided selective publication. Thus, $n^* = \max \{n \mid \kappa_f c_n < \Delta_{n^*-1}(-\infty) \}$.

\subsection{Proof of Theorem \ref{thm:onesided}}
\label{pf:thm:onesided}

Similarly to Appendix~\ref{pf:thm:slow}, we categorize the published results by the value of the original estimate (i.e., the $X_n$ chosen by the researcher). Category (a) consists of published results whose original estimate satisfies $X_n\geq t\Sigma$; category (b) consists of published results whose original estimate satisfies $X_n\in (\overline{X}_s, t\Sigma)$; category (c) consists of published results whose original estimate satisfies $X_n \leq \overline{X}_s$. Based on this categorization, we can express the expected published results of the two scenarios as follows. 
\begin{eqnarray}
\breve{\gamma}(\delta) &=& e_a q_a + e_b q_b + e_c q_c \label{eq:nohack_onesided} \\
\breve{\gamma}_h(\delta) &=& e_a^h q_a^h + e_b^h q_b^h + e_c^h q_c^h. \label{eq:both_onesided}
\end{eqnarray}

\paragraph{\underline{Part 1}: p-hacking mitigates publication bias if $\delta > \overline{\delta}_1$.} 

As $\delta\longrightarrow\infty$, only positively significant studies are published. Therefore, 
\[\lim\limits_{\delta\rightarrow\infty}\breve{\gamma}(\delta) = e_a=\mathbb{E}_{X\sim\mathcal{N}(\Theta,\Sigma^2)}[X \mid X\geq t\Sigma]\]
and 
\[\lim\limits_{\delta\rightarrow\infty}\breve{\gamma}_h(\delta) = e_a^h q_a^h + e_b^h q_b^h.\]
Since the researcher stops p-hacking once obtaining a positively significant estimate, we know that $e_a^h=e_a$. Therefore, it suffices to show that $e_b^h < e_a$. This is straightforward under one-sided selective publication because $e_a > t\Sigma \geq e_b^h$, where the last inequality follows from the same reasoning as Lemma~\ref{lemma:slowhack}. 

To conclude the proof of Part 1, we also need to show that $\lim\limits_{\delta\rightarrow\infty}\breve{\gamma}_h(\delta) > \Theta$. This holds because both the selection of the maximum draw and the slow p-hacking process weakly increase the reported estimate relative to the original normal draw. 

\paragraph{\underline{Part 2}: p-hacking exacerbates publication bias if $\delta < \underline{\delta}_1$.} According to Appendix~\ref{pf:prop:onesided_strat}, when $\delta$ is sufficiently small, fast p-hacking is never undertaken, so the reported result must be based on $X_1$. Hence, we have 
\begin{align}
\breve{\gamma}_h(\delta) &= \frac{e_a p_a + e_b^h p_b + e_c p_c/\delta}{p_a + p_b + p_c/\delta} \label{eq:a21} \\
\breve{\gamma}(\delta) &= \frac{e_a p_a + e_b p_b/\delta + e_c p_c/\delta}{p_a + p_b/\delta + p_c/\delta}, \label{eq:a22}
\end{align}
where $p_a = \mathbbm{Pr}(X_1 \geq t \Sigma)$, $p_b = \mathbbm{Pr}(X_1 \in (\overline{X}_s, t \Sigma))$, $p_c = \mathbbm{Pr}(X_1 < \overline{X}_s)$, $e_a=\mathbb{E}[X_1 \mid X_1\geq t\Sigma]$, $e_b=\mathbb{E}[X_1 \mid X_1\in (\overline{X}_s, t\Sigma)]$, $e_b^h=\mathbb{E}[\xi(X_1) \mid X_1\in (\overline{X}_s, t\Sigma)]$, and $e_c=\mathbb{E}[X_1 \mid X_1 \leq \overline{X}_s]$. Similarly to Appendix~\ref{pf:thm:slow}, we have 
$$\breve{\gamma}_h(\delta)-\breve{\gamma}(\delta) = \frac{p_b}{p_a+p_b+\lambda p_c} \cdot \tilde{\Lambda}(\lambda)$$
where $\lambda = 1/\delta$, $\tilde{\Lambda}(\lambda):= (e_b^h-\lambda e_b) - (1-\lambda)\cdot \frac{e_a p_a + \lambda e_b p_b + \lambda e_c p_c}{p_a + \lambda p_b +\lambda p_c}$. It suffices to show that $\lim\limits_{\lambda\rightarrow 1}\tilde{\Lambda}'(\lambda) < 0$. Following the same reasoning as Appendix~\ref{pf:thm:slow}, we have 
\begin{eqnarray}
\lim\limits_{\lambda\rightarrow 1}\tilde{\Lambda}'(\lambda) &=& \lim\limits_{\lambda\rightarrow 1}\frac{\partial (e_b^h-e_b)}{\partial \lambda} - t\Sigma + \Theta - 0.  \label{eq:a23}
\end{eqnarray}

\begin{lemma}
\label{lemma:onesided}
Under one-sided selective publication, we have $\lim\limits_{\lambda\rightarrow 1}\frac{\partial (e_b^h-e_b)}{\partial \lambda}=-\infty$. 
\end{lemma}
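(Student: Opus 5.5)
I will mirror the proof of Lemma~\ref{lemma:tech}, replacing the symmetrized weights $\phi(x)\pm\phi(-x)$ by the one-sided density $\phi(x)$ of $X_1\sim\mathcal{N}(\Theta,\Sigma^2)$. In the one-sided model, Part 2 of the proof of Theorem~\ref{thm:onesided} places us in the regime with no fast p-hacking. Hence
\[
e_b^h-e_b=\frac{N(\lambda)}{D(\lambda)},\qquad N(\lambda):=\int_{\overline{X}_s}^{t\Sigma}(\xi(x)-x)\phi(x)\,dx,\qquad D(\lambda):=\int_{\overline{X}_s}^{t\Sigma}\phi(x)\,dx,
\]
and the dependence on $\lambda$ enters only through the lower limit $\overline{X}_s$. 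The slow threshold $\overline{X}_s$ and the map $\xi$ are the same as in the two-sided case. Indeed, the indifference condition $V(1-\lambda)=\kappa_s C(\overline{X}_s)$ from Step 1 of the proof of Proposition~\ref{prop:onesided_strat} coincides with \eqref{eqn:Xs}. So Lemma~\ref{lemma:slowhack} and formula \eqref{eq:a9} apply verbatim on $(\overline{X}_s,t\Sigma)$.

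\textbf{Step 1 (chain rule).} Differentiating by Leibniz's rule gives
\[
\frac{\partial(e_b^h-e_b)}{\partial\lambda}=\frac{\partial\overline{X}_s}{\partial\lambda}\left\{\frac{-(\xi(\overline{X}_s)-\overline{X}_s)\phi(\overline{X}_s)}{D}+\frac{\phi(\overline{X}_s)\,N}{D^2}\right\}.
\]

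\textbf{Step 2 (limit of the bracket).} As $\lambda\to1$, Proposition~\ref{prop:slow_strat} gives $\overline{X}_s\to t\Sigma$, so $N,D\to0$ and each term in the bracket is of type $0/0$. For the first term, write $\xi(\overline{X}_s)-\overline{X}_s\approx(1-\xi'(t\Sigma))(t\Sigma-\overline{X}_s)$ and $D\approx\phi(t\Sigma)(t\Sigma-\overline{X}_s)$, so it tends to $-(1-\xi'(t\Sigma))$. For the second term, $N\approx\tfrac12(1-\xi'(t\Sigma))\phi(t\Sigma)(t\Sigma-\overline{X}_s)^2$, so it tends to $+\tfrac12(1-\xi'(t\Sigma))$.

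Here $\xi'(t\Sigma)$ denotes the left limit $\lim_{x\to t\Sigma}\xi'(x)$. Taking this limit uses $\xi(t\Sigma)=t\Sigma$ by continuity, which follows from the first-order condition \eqref{eq:a8} together with $c_1(0,0)=c_2(0,0)=0$. Combining the two terms, the bracket tends to $-\tfrac12\bigl(1-\lim_{x\to t\Sigma}\xi'(x)\bigr)$. By \eqref{eqn9}, $\xi'<1$, so this limit is a strictly negative finite constant. It is exactly the one-sided analogue of \eqref{eqn6}, with the factor $\tanh$ replaced by $1$.

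\textbf{Step 3 (blow-up of $\partial\overline{X}_s/\partial\lambda$).} Differentiate $V(1-\lambda)=\kappa_s c(\xi(\overline{X}_s)-\overline{X}_s,\Sigma-\xi(\overline{X}_s)/t)$, using the envelope theorem to drop the $\xi'$ terms. This yields $-V=\kappa_s\,\partial_x C(\overline{X}_s)\cdot\partial\overline{X}_s/\partial\lambda$, where $\partial_xC=-c_1\to -c_1(0,0)=0$. Hence $\partial\overline{X}_s/\partial\lambda\to+\infty$, as in \eqref{eqn8}. This is the same computation as in Lemma~\ref{lemma:tech}, and I will simply cite it.

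\textbf{Step 4 (conclusion).} Multiplying the limit $+\infty$ from Step 3 by the strictly negative constant from Step 2 gives $-\infty$, which proves the lemma.

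The main obstacle is Step 2. One must carefully justify the two $0/0$ expansions, for instance via L'H\^opital or first-order Taylor expansions, using continuity of $\xi'$ and $\phi$ near $t\Sigma$. One must also check that the limiting coefficient $-\tfrac12(1-\xi')$ is strictly negative rather than zero. That nondegeneracy rests entirely on $c_{22}>0$ through \eqref{eqn9}, so I would verify that \eqref{eq:a9} remains valid in the limit as $x\to t\Sigma$, where $c_{11}$ and $c_{22}$ are evaluated at $(0,0)$.
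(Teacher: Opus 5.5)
Your proposal is correct and follows essentially the same route as the paper. You differentiate $N/D$ through the lower limit $\overline{X}_s$, evaluate the $0/0$ bracket as $-\tfrac12\bigl(1-\lim_{x\to t\Sigma}\xi'(x)\bigr)<0$ (via Taylor expansions where the paper uses L'H\^opital), and combine this with $\partial\overline{X}_s/\partial\lambda\to+\infty$ taken from Lemma~\ref{lemma:tech}.
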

\begin{proof}
Notice that $e_b^h-e_b = \frac{\int_{\overline{X}_s}^{t\Sigma}(\xi(x)-x) \phi(x) dx}{\int_{\overline{X}_s}^{t\Sigma} \phi(x) dx}$. We let $N(\lambda)$ denote its numerator and ${D(\lambda)}$ its denominator, so 
\begin{eqnarray}
\frac{\partial (e_b^h-e_b)}{\partial \lambda} 
&=& \frac{N'(\lambda)}{D(\lambda)} - \frac{N(\lambda)\cdot D'(\lambda)}{[D(\lambda)]^2} \nonumber \\
&=& \frac{\partial \overline{X}_s}{\partial \lambda} \cdot \phi(\overline{X}_s) \cdot \left\{\frac{\overline{X}_s-\xi(\overline{X}_s)  }{\int_{\overline{X}_s}^{t\Sigma} \phi(x) dx}  + \frac{\int_{\overline{X}_s}^{t\Sigma}(\xi(x)-x) \phi(x) dx}{\left[\int_{\overline{X}_s}^{t\Sigma} \phi(x) dx\right]^2}  \right\}. \label{eq:a24}
\end{eqnarray}
Since $\overline{X}_s\rightarrow t\Sigma$ as $\lambda\rightarrow 1$, we apply the L'Hospital rule to the terms in the curly bracket of (\ref{eq:a24}) when $\overline{X}_s\rightarrow t\Sigma$, it ends up being equal to
\begin{eqnarray}
&& \frac{\lim\limits_{x\rightarrow t\Sigma}\xi'(x) - 1}{\phi(t\Sigma)} + \frac{\lim\limits_{x\rightarrow t\Sigma}\xi'(x) - 1}{-2 \phi(t\Sigma)} \nonumber \\
 &=& \lim\limits_{x\rightarrow t\Sigma}(\xi'(x) - 1) \cdot \frac{1}{2 \phi(t\Sigma)}. \nonumber
\end{eqnarray}
Hence, we have 
\begin{equation}
\lim\limits_{\lambda\rightarrow 1} \frac{\partial (e_b^h-e_b)}{\partial \lambda} = \lim\limits_{\lambda\rightarrow 1} \frac{\partial \overline{X}_s}{\partial \lambda} \cdot \left(\lim\limits_{x\rightarrow t\Sigma}\xi'(x) - 1\right) \cdot \frac{1}{2}. \label{eqn13}
\end{equation}
We can directly apply the following results from the proof of Lemma~\ref{lemma:tech}: first, $\lim\limits_{\lambda\rightarrow 1}\frac{\partial \overline{X}_s}{\partial \lambda} = +\infty$; second, $\lim\limits_{x\rightarrow t\Sigma} \left( \xi'(x) - 1 \right) < 0$. These two results allow us to conclude that $\lim\limits_{\lambda\rightarrow 1} \frac{\partial (e_b^h-e_b)}{\partial \lambda} = -\infty$. 

\end{proof}

Applying Lemma~\ref{lemma:onesided} to (\ref{eq:a23}), we have $\lim\limits_{\lambda\rightarrow 1}\tilde{\Lambda}'(\lambda) =-\infty <0$. This completes the proof. 

\newpage

\section{Online Appendix: General Empirical Model}\label{app:twosided_emp}

This appendix studies the empirical model in which $\delta_- \in [1, \delta_0]$ is estimated. It nests both the one- and two-sided selection models. Appendix \ref{app:twosided_emp_model} discusses stochastic slow p-hacking. Appendix \ref{app:twosided_emp_identification} presents our identification result. %
Appendix \ref{app:emp_twosided_likelihood} presents the censored likelihood for the model.

\subsection{Empirical Model with Stochastic Slow p-Hacking}\label{app:twosided_emp_model}

Suppose the researcher picks an estimate $X_n$ for slow p-hacking. Because of Assumption~\ref{assump:largekappas}, his slow p-hacking never changes the sign of the estimate. If $X_n > 0$, slow p-hacking it to positive significance costs $\kappa_s C(X_n)$, and it generates an output $(\hat{X}, \hat{\Sigma})$, drawn from an arbitrary distribution $H^+(\cdot, \cdot \mid X_n, \Sigma)$, that must satisfy $\hat{X}/\hat{\Sigma} \in[t, t_w]$ for some known $t_w \geq t$. Similarly, if $X_n < 0$, slow p-hacking it to negative significance costs $\kappa_s C(X_n)$ and generates an output $(\hat{X}, \hat{\Sigma})$, drawn from an arbitrary distribution $H^-(\cdot, \cdot \mid X_n, \Sigma)$, that must satisfy $\hat{X}/\hat{\Sigma} \in [-t_w, -t]$. Note that if we let $t_w = t$, the setting for slow p-hacking in the empirical model is identical to the theoretical model. As discussed in \cite{simonsohn2020fast}, slow and fast p-hacking differ in how much control researchers have over their p-values. Allowing slow p-hacking to be stochastic captures the fact that this control is imperfect. Researchers nonetheless retain substantially more control under slow than fast p-hacking, as encoded by the support restriction on the former. 

We summarize the p-hacking strategies of researchers in this more general environment:

\begin{proposition}\label{prop:empirical_strat}
The researcher draws at most $n^*$ estimates. His optimal p-hacking strategy can be represented by the slow thresholds $\{\underline{X}_s, \overline{X}_s\}$ and a series of fast thresholds $\{\underline{X}_f^n, \overline{X}_f^n\}_{n=1}^{n^*-1}$ satisfying $-t\Sigma < \underline{X}_s < \overline{X}_s < t\Sigma$ and\footnote{Like Footnote~\ref{footnote1}, as long as no confusion arises, we reuse the notations of $n^*$ and some thresholds in the empirical model, although they do not necessarily have the same values as in the theoretical model.}
\[-\infty \leq \underline{X}_f^{1} \leq \underline{X}_f^{2} \leq ... \leq \underline{X}_f^{n^*-1} < \underline{X}_s < 0 < \overline{X}_s < \overline{X}_f^{n^*-1} < ... < \overline{X}_f^{2} < \overline{X}_f^{1} < t\Sigma. \]
If he has already drawn $k < n^*$ estimates, he draws a new estimate if and only if all his current estimates are contained in the interval $(\underline{X}_f^k, \overline{X}_f^k)$. After he stops drawing new estimates, he picks the estimate $X_n$ that maximizes his continuation value, which is $\left\{V\left(1-\frac{1}{\delta_0}\right)-\kappa_s C(X_n)\right\}$ if $X_n>0$ and $\left\{V\left(\frac{1}{\delta_{-}}-\frac{1}{\delta_0}\right)-\kappa_s C(X_n) \right\}$ if $X_n<0$. He slow p-hacks this estimate if it falls in $(-t\Sigma, \underline{X}_s) \cup (\overline{X}_s, t\Sigma)$. \end{proposition}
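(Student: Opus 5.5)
The plan is to follow the backward-induction template of the proofs of Proposition~\ref{prop:combined_strat} and Proposition~\ref{prop:onesided_strat}, adapting it to the asymmetric rule with $\delta_0\ge\delta_-\ge1$.

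\emph{Step 1: slow p-hacking.} Fix the estimate $X_n$ chosen after the researcher stops drawing. Since the stochastic output $H^\pm$ keeps $\hat X/\hat\Sigma$ inside the significance region with the correct sign, the payoff and cost are exactly as in the deterministic model. By Assumption~\ref{assump:largekappas}, he never flips signs. For $X_n\in(0,t\Sigma)$ he slow p-hacks iff $V(1-1/\delta_0)>\kappa_s C(X_n)$. For $X_n\in(-t\Sigma,0)$ he does so iff $V(1/\delta_- -1/\delta_0)>\kappa_s C(X_n)$. Because $C$ is symmetric, continuous, strictly decreasing in $|x|$ on $[0,t\Sigma)$, and satisfies $C(0)>V/\kappa_s$ and $C(t\Sigma)=0$, each condition defines an interval. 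This gives $\overline X_s\in(0,t\Sigma)$ and $\underline X_s\in(-t\Sigma,0)$. Since $1/\delta_- -1/\delta_0\le 1-1/\delta_0$, we get $|\underline X_s|\ge\overline X_s$. (If $\delta_-=\delta_0$, there is no benefit on the negative side; set $\underline X_s=-t\Sigma$ by convention, or treat it as degenerate.)

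The value upon stopping with estimate $x$ is
\[U_s(x)=\tfrac{V}{\delta_0}+\bigl[V(1-\tfrac{1}{\delta_0})-\kappa_sC(x)\bigr]^+\mathbbm{1}\{x>0\}+\bigl[V(\tfrac1{\delta_-}-\tfrac1{\delta_0})-\kappa_sC(x)\bigr]^+\mathbbm{1}\{x<0\}.\]
When he holds several estimates, he picks the one maximizing this continuation value. This proves the reporting claim. The relevant state after $k$ draws is therefore the best current value $u_k:=\max_{l\le k}U_s(X_l)$, rather than $\max|X_l|$.

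\emph{Step 2: fast p-hacking.} Write $U_n(u)$ and $V_n(u)$ as functions of this scalar state, with
\[V_n(u)=\mathbb E[U_{n+1}(\max\{u,U_s(X_{n+1})\})]-\kappa_f c_{n+1},\]
where $X_{n+1}\sim\mathcal N(0,\Sigma^2)$. Let $\Delta_n(u)=V_n(u)+\kappa_fc_{n+1}-u$. Finiteness of $n^*$ follows as before, because $c_n\to\infty$.

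I would then run the induction of Lemma~\ref{lemma:delta} in the variable $u$. The claim is that $\Delta_n$ is constant on $u\le V/\delta_0$, strictly decreasing on $(V/\delta_0,V)$, and zero at $u=V$. The monotonicity argument (inequalities \eqref{eq:a4}--\eqref{eq:a5}) carries over verbatim, because it only used that $U_{n+1}-U_s$ is weakly decreasing and that $\max$ contracts differences. Hence drawing continues iff $u_k<\bar u_k$ for a threshold $\bar u_k$. The inequality $\Delta_n\ge\Delta_{n+1}$ (since $U_{n+1}\ge U_{n+2}$) together with $c_{n+1}<c_{n+2}$ gives that $\bar u_k$ strictly decreases in $k$. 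Also $\bar u_k>V/\delta_0$, since drawing is worthwhile at $u=V/\delta_0$ whenever $k<n^*$.

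\emph{Step 3: translate back to thresholds on $X$.} The event $u_k<\bar u_k$ holds iff every current estimate satisfies $U_s(X_l)<\bar u_k$. On the positive side, $U_s$ is strictly increasing on $(\overline X_s,t\Sigma)$, which gives $\overline X_f^k=U_s^{-1}(\bar u_k)\in(\overline X_s,t\Sigma)$. On the negative side, $U_s$ is strictly decreasing on $(-t\Sigma,\underline X_s)$ and caps at $V/\delta_-$. So $\underline X_f^k$ is the corresponding preimage when $\bar u_k\le V/\delta_-$, and $\underline X_f^k=-\infty$ when $\bar u_k>V/\delta_-$. This explains the weak inequalities and the possible $-\infty$ on the left, and the strict inequalities on the right. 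Monotonicity of $\bar u_k$ yields the stated ordering, with $\underline X_f^{n^*-1}<\underline X_s$ and $\overline X_s<\overline X_f^{n^*-1}$.

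The main obstacle is Step 2. In the earlier proofs, the state variable was $|x|$ or $x$, and the monotone structure came directly from symmetry or one-sidedness. Here the two sides have different payoffs, so one must verify that reducing to the scalar $u$ is legitimate. The key check is that the distribution of $U_s(X_{n+1})$ does not depend on history, so the max-recursion in $u$ is exact. One must also handle the flat pieces of $U_s$ (at $V/\delta_0$, $V/\delta_-$, $V$), where strict monotonicity fails and the tie-breaking rule is needed.
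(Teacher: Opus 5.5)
Your proof is correct. It shares the paper's backbone: slow thresholds from the two indifference conditions, backward induction on $\Delta_n$ via the analogue of the induction lemma, and $\Delta_n\ge\Delta_{n+1}$ together with increasing $c_n$ to order the fast thresholds. It differs in the choice of state.

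\emph{The paper's route.} It keeps the best current estimate $x$ as the state and treats the two tails separately. On $(\overline{X}_s,t\Sigma)$, $\Delta_n(x)$ decreases towards $t\Sigma$; on $(-t\Sigma,\underline{X}_s)$ it decreases towards $-t\Sigma$. It equals $0$ for $x\ge t\Sigma$ and a constant $\Delta_n(-\infty)\ge 0$ for $x\le -t\Sigma$. The paper then sets $\underline{X}_f^n=-\infty$ when $\Delta_n(-\infty)>\kappa_f c_{n+1}$ and takes a finite crossing otherwise. It asserts that the strict-monotonicity step is ``identical'' to the symmetric case.

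\emph{Your route.} You collapse both tails into the scalar $u=\max_l U_s(X_l)$, solve a one-dimensional stopping problem with a single value threshold $\bar u_k\in(V/\delta_0,V)$, and pull it back through $U_s$.

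\emph{What your route buys.}
\begin{itemize}
\item The sufficient-statistic reduction is explicit and exact. The paper's recursion $V_n(x)=\mathbb{E}[\max\{U_{n+1}(x),U_{n+1}(X_{n+1})\}]-\kappa_f c_{n+1}$ tacitly needs $U_{n+1}$ to be a nondecreasing function of $U_s$, which is exactly your reduction.
\item The monotonicity argument genuinely carries over, because the state is now one-dimensional.
\item The claim that he draws iff all current estimates lie in $(\underline{X}_f^k,\overline{X}_f^k)$ is immediate, since that interval is the strict sublevel set $\{U_s<\bar u_k\}$.
\item The asymmetric display is explained directly: the cap $U_s\le V/\delta_-$ on the negative side produces the weak inequalities and the possible $-\infty$ on the left, and the strict inequalities on the right.
\end{itemize}

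\emph{What the paper's route buys.} It stays closer to the earlier proofs and works in estimate space throughout, with no translation step.

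\emph{On the obstacle you flag.} It is milder than you suggest. In the $u$ variable, the flat pieces of $U_s$ become atoms of $Y=U_s(X_{n+1})$. Since $\Pr(Y=V)>0$, monotonicity of $U_{n+1}$ and the hypothesis that $U_{n+1}(u)-u$ is weakly decreasing give
$\mathbb{E}[U_{n+1}(\max\{u'',Y\})-U_{n+1}(\max\{u',Y\})]\le (u''-u')\Pr(Y<u'')<u''-u'$.
Hence $\Delta_n$ is strictly decreasing on all of $[V/\delta_0,V]$. The reporting tie-break only determines which maximizer is reported, not the thresholds.
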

\begin{proof}
See Online Appendix~\ref{pf:prop:empirical_strat}.
\end{proof}

\subsection{Identification under Normality}\label{app:twosided_emp_identification}

\begin{theorem}\label{thm:identification_twosided}
   Suppose $t$ and $t_w$ are known and that $\Theta \mid \Sigma = s \sim N(\theta(s), \sigma^2(s))$. Moreover, suppose $\mathbb{P}(\hat{X}_i/\hat{\Sigma}_i > t_w), \mathbb{P}(\hat{X}_i/\hat{\Sigma}_i < -t_w), \mathbb{P}(|\hat{X}_i/\hat{\Sigma}_i| < t) > 0$ and that $\sigma^2(s) > 0$ for all $s \in \text{Supp}(\Sigma)$. Then the joint distribution of $(\Theta, \Sigma)$, as well as the parameters $\delta_0$ and $\delta_-$ are identified. In particular, $\mathbb{E}(\Theta)$, $\mathbb{E}(\gamma)$ and $\mathbb{E}(\gamma_h)$ are identified. %
\end{theorem}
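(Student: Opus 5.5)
The plan is to mirror the proof of Theorem~\ref{thm:identification_onesided}, now using both tails. Define the censoring variable with five cells: $\hat Z>t_w$, $t\le \hat Z\le t_w$, $|\hat Z|<t$, $-t_w\le \hat Z\le -t$, and $\hat Z<-t_w$.

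\textbf{Step 1: the upper tail.} By Proposition~\ref{prop:empirical_strat}, a report with $\hat Z>t_w$ arises only from an unmanipulated draw that stopped the fast p-hacking process. Its density conditional on $(\Theta,\Sigma)=(b,s)$ is
\[
\tfrac1s\phi\!\left(\tfrac{x-b}{s}\right)\, p^+_{f}(b,s),
\]
where $p^+_f$ is a sum over $j$ of probabilities that the first $j$ draws all lie in the continuation intervals $(\underline X_f^l,\overline X_f^l)$. Because the top draw is positive significant and published with probability one, no $\delta$ appears. I then repeat the one-sided argument verbatim:
\begin{itemize}
\item analyticity plus injectivity of Gaussian deconvolution identify $g_1(b,s)\propto p^+_f(b,s)\,\mu_{\Theta\mid\Sigma}(b\mid s)$;
\item since $p^+_f(b,s)\to1$ and $\partial_b p^+_f\to 0$ as $|b|\to\infty$ (each term involves $\Phi$ of a difference of thresholds minus $b$), the log-derivative $h(b,s)$ is asymptotically linear in $b$, which recovers $\theta(s)$ and $\sigma^2(s)$;
\item this in turn gives $p_{Q,2}(s)$, then $p^+_f(b,s)$, and, by normalizing $\mu_{\Sigma\mid Q}(\cdot\mid 2)$, the marginal $\mu_\Sigma$ and hence $\mu$.
\end{itemize}

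\textbf{Step 2: the lower tail and $\delta_-$.} For $\hat Z<-t_w$ the report is an unmanipulated negative draw, published with probability $1/\delta_-$. One complication is that the researcher, after stopping, picks the estimate maximizing continuation value. A negative significant estimate could therefore be dominated by a positive one only if a positive significant draw existed, but in that case he would have stopped earlier. So the negative-tail density is
\[
\tfrac{1}{\delta_-}\,\tfrac1s\phi\!\left(\tfrac{x-b}{s}\right) p^-_f(b,s),
\]
with $p^-_f$ built from the same continuation intervals. Since $\mu$ is already known, the observed mass $\mathbb P(Q=-2)$ relative to $\mathbb P(Q=2)$, both conditional on $D=1$, pins down $1/\delta_-$ once $p^-_f$ is known.

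\textbf{Step 3: the thresholds, which I expect to be the main obstacle.} Here $p^\pm_f$ depends on both the upper and lower thresholds, whereas the one-sided proof peeled them off one at a time using the Mills-ratio asymptotics as $b\to\infty$. I would do the same with $b\to+\infty$ and $b\to-\infty$:
\begin{itemize}
\item as $b\to\infty$, the probability that draw $j$ lies in the interval is dominated by $\Phi((\overline X_f^j-b)/s)$, so the $b\to\infty$ expansion identifies $\overline X_f^j$ and $n^*$ inductively exactly as before;
\item as $b\to-\infty$, the dominant factor is $1-\Phi((\underline X_f^j-b)/s)$, which identifies $\underline X_f^j$ when it is finite;
\item when some $\underline X_f^j=-\infty$, the corresponding factor is identically one, which is detectable because the residual does not decay.
\end{itemize}
Care is needed because the interval probabilities are products of upper and lower Gaussian tails; the cross terms are of smaller order, so the same $(j-k)b^2$ argument should separate them.

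\textbf{Step 4: $\delta_0$ and the counterfactual means.} With $\mu$, the thresholds, and $\delta_-$ in hand, the model-implied masses of unmanipulated reports in $[t,t_w]$ and $[-t_w,-t]$ are computable. Observed excess mass in these cells is slow p-hacked mass, which enters with probability one and $1/\delta_-$ respectively. The observed share of $|\hat Z|<t$ among published reports equals $(1/\delta_0)$ times the remaining insignificant mass, normalized. This is one equation in $\delta_0$ and is solvable because $\mathbb P(|\hat Z|<t)>0$.

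Finally, $\mathbb E(\gamma)$ is an explicit functional of $(\mu,\delta_0,\delta_-)$, namely the selection-only mean without hacking, and $\mathbb E(\gamma_h)$ is the observed mean of published estimates.
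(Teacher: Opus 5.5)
Most of your outline matches the paper's proof. The shared steps are the partial deconvolution of the $\hat Z>t_w$ region, the asymptotically linear log-derivative for $\theta(s)$ and $\sigma^2(s)$, the $b\to\infty$ limit for $p_{Q,2}(s)$, normalization of $\mu_{\Sigma\mid Q}(\cdot\mid 2)$ for $\mu_\Sigma$, and the Mills-ratio induction in both directions for the thresholds.

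Working with the upper tail alone is fine under the sign normalization. Since $\overline X_f^1<t\Sigma$, the $j=0$ term of $p_f^+$ equals one, so $p_f^+\ge 1$. That bound is what makes $\partial_b p_f^+/p_f^+$ vanish in both tails, not $p_f^+\to1$ as $b\to-\infty$, which fails when $\underline X_f^1=-\infty$. The paper instead pools $g_1+g_{-1}$ in $h$ so as not to rely on knowing the favored sign.

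\textbf{The gap is in Step 2.} Your claim that a negative significant estimate can be dominated only by a positive significant one is false. A negative significant estimate has stopping value $V/\delta_-$. An earlier positive insignificant draw $x\in(\overline X_s,t\Sigma)$ is worth $V-\kappa_s C(x)$. This exceeds $V/\delta_-$ for every $x$ above the cutoff $y^*$ defined by $V-\kappa_s C(y^*)=V/\delta_-$, and $y^*<t\Sigma$ as soon as $\delta_->1$. The error matters for your route. You compute $p_f^-$ from the thresholds, use it to solve for $1/\delta_-$, and use it again in Step 4 to compute the unmanipulated mass in $[-t_w,-t]$. If such dominance could occur, the lower-tail report probability would depend on $y^*$, hence on the slow-p-hacking cost, which is not identified.

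\textbf{Rescuing your formula.} The formula can be saved, but it needs an idea you do not supply. In the proof of Proposition~\ref{prop:empirical_strat}, backward induction from $U_{n^*}=U_s$ shows that all value functions depend on the best estimate $x$ only through $U_s(x)$. Together with the monotonicity of $\Delta_k$ shown there, each continuation region is a sublevel set $\{U_s<m_k\}$. A finite $\underline X_f^k$ means $m_k\le V/\delta_-$, equivalently $\overline X_f^k\le y^*$. Hence, when the lower thresholds are finite, every estimate held while continuing is worth strictly less than a negative significant one. The researcher then stops and reports the new negative significant draw, and in fact $p_f^-=p_f^+$.

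\textbf{The paper's route.} The paper sidesteps the issue. It deconvolves the $\hat Z<-t_w$ region exactly as in your Step 1, obtaining $g_{-1}\propto p_f^-\,\mu_{\Theta\mid\Sigma}$. It then reads off $p_{Q,-2}(s)$ from $\lim_{b\to-\infty}g_{-1}/\mu_{\Theta\mid\Sigma}$, which uses only the limiting behavior of $p_f^-$.

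Once $\overline p_{Q,\pm2}$ are known, Step 4 needs no split into manipulated and unmanipulated mass:
\begin{itemize}
\item $\overline p_{Q,1}$ and $\overline p_{Q,-1}$ follow from within-sign publication shares, since cells $Q=1,2$ share publication probability $1$ and cells $Q=-1,-2$ share $1/\delta_-$;
\item $\overline p_{Q,0}$ is the residual;
\item $\delta_-$ and $\delta_0$ come from $\mathbb P(\hat Z\ge t\mid |\hat Z|\ge t,D=1)$ and $\mathbb P(|\hat Z|<t\mid D=1)$.
\end{itemize}
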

\begin{proof}
See Online Appendix~\ref{pf:thm:identification_twosided}.
\end{proof}

Identification in the the general model is more challenging because it involves more parameters. In particular, we require both $\mathbb{P}(\hat{X}_i/\hat{\Sigma}_i > t_w)$ and $\mathbb{P}(\hat{X}_i/\hat{\Sigma}_i < -t_w)$ to be non-zero. In practice, many literatures appear to be missing one of the two tails. 

For ease of exposition, we have normalized the sign of the most preferentially published results to be positive. The proof of Theorem \ref{thm:identification_twosided} does not assume that this normalization is known. In other words, the main direction of p-hacking is estimated and not imposed.

\subsection{Likelihood Function}\label{app:emp_twosided_likelihood}

As with the one-sided model, we assume that $\Theta$ and $\Sigma$ are independent and let $(\theta, \sigma^2):=(\mathbb{E}(\Theta),\text{Var}(\Theta))$. Suppose also that $\Sigma \sim \Gamma(\kappa, \lambda)$ and that
\[C(X; \Sigma)=\tilde{C}(X/\Sigma).\]
Then, the censored likelihood function is as presented below. It is expressed in terms of the finite dimensional parameter $$\beta = (\theta, \sigma, \kappa, \lambda, \{\underline{X}_f^n, \overline{X}_f^n\}_{n=1}^{n^*-1}, r_{s}, r_{s,1}, \delta_0, \delta_-)~.$$
Observe that relative to the one-sided model, the general model requires the additional parameters $\{\underline{X}_f^n\}_{n=1}^{n^*-1}, r_{s,1}$ and $\delta_-$. This is because the direction of p-hacking is now unknown ex ante. 
The proof of Theorem \ref{thm:identification_twosided} shows that $\beta$ is identified. %

Define:
\begin{align*}
    f_{Q,2}(x, s \mid \beta ) &= \int \mu_\Sigma(s, \kappa, \lambda) p_{f,1}(b,s) \frac{1}{\sigma}\phi\left(\frac{b-\theta}{\sigma}\right)\frac{1}{s}\phi\left(\frac{x-b}{s}\right) \; db  \\
    f_{Q,-2}(x, s \mid \beta ) &= \frac{1}{\delta_-}\int \mu_\Sigma(s, \kappa, \lambda) p_{f,-1}(b,s) \frac{1}{\sigma}\phi\left(\frac{b-\theta}{\sigma}\right)\frac{1}{s}\phi\left(\frac{x-b}{s}\right) \; db 
\end{align*}
\noindent and {\small
\begin{align*}
        \overline{p}_{Q,1} & = \int \mu_\Sigma(s, \kappa, \lambda) p_{f,1}(b,s) \frac{1}{\sigma}\phi\left(\frac{b-\theta}{\sigma}\right)\left(   \Phi\left(\frac{t_w\cdot s - b}{s}\right) - \Phi\left(\frac{t\cdot s - b}{s}\right) \right) \; db \; ds \; + \overline{p}_{s,1}  \\
        \overline{p}_{Q,0} & = \frac{1}{\delta_0}\int \mu_\Sigma(s, \kappa, \lambda) 
        \frac{1}{\sigma}\phi\left(\frac{b-\theta}{\sigma}\right)
        p_{f,0}(b,s)
        \; db \;ds \; - \frac{p_{s,1}}{\delta_0} -  \frac{p_{s,-1}}{\delta_0}\\
        \overline{p}_{Q,-1} & = \frac{1}{\delta_-}\int \mu_\Sigma(s, \kappa, \lambda) p_{f,-1}(b,s) \frac{1}{\sigma}\phi\left(\frac{b-\theta}{\sigma}\right)\left(   \Phi\left(\frac{-t\cdot s - b}{s}\right) - \Phi\left(\frac{-t_w\cdot s - b}{s}\right) \right) \; db \; ds \;+ \frac{\overline{p}_{s,-1}}{\delta_-} ~,
\end{align*}}
where
\begin{align*}
    \overline{p}_{s,1} & = r_s \cdot r_{s,1} \cdot \int \mu_\Sigma(s, \kappa, \lambda) 
    \frac{1}{\sigma}\phi\left(\frac{b-\theta}{\sigma}\right)
    p_{f,0}(b,s) \; db\; ds
    \\
    \overline{p}_{s,-1} & = r_s \cdot (1-r_{s,1}) \cdot \int \mu_\Sigma(s, \kappa, \lambda) 
    \frac{1}{\sigma}\phi\left(\frac{b-\theta}{\sigma}\right)
    p_{f,0}(b,s) \; db \; ds
    ~,
\end{align*}
\noindent and  
\begin{align*}
    p_{f,1}(b,s) & = \sum_{j = 0}^{n^*-1} \mathbbm{1}\{ \overline{X}^{j+1}_f\cdot s < \infty\} \cdot \left( \Phi\left( \frac{\overline{X}_f^{j}\cdot s -b}{s}\right) - \Phi\left( \frac{\underline{X}_f^{j}\cdot s -b}{s}\right) \right)^j~\\
    p_{f,-1}(b,s) & = \sum_{j = 0}^{n^*-1} \mathbbm{1}\{          \underline{X}^{j+1}_f\cdot s > - \infty\} \cdot \left( \Phi\left( \frac{\overline{X}_f^{j}\cdot s -b}{s}\right) - \Phi\left( \frac{\underline{X}_f^{j}\cdot s -b}{s}\right) \right)^j\\
    p_{f,0}(b,s) &= \sum_{j=1}^{n^*-1}  
    \Bigg[ \left( \Phi\left(\frac{\overline{X}_f^{j-1}\cdot s-b}{s}\right) -  \Phi\left(\frac{\underline{X}_f^{j-1}\cdot s-b}{s}\right)\right)^{j-1} \cdot  \left( \Phi\left(\frac{t\cdot s-b}{s}\right) -  \Phi\left(\frac{-t\cdot s-b}{s}\right)\right)  \\
    & \qquad \qquad -\left( \Phi\left(\frac{\overline{X}_f^{j}\cdot s-b}{s}\right) -  \Phi\left(\frac{\underline{X}_f^{j}\cdot s-b}{s}\right)\right)^{j} \Bigg] \\
    & \qquad + \left( \Phi\left(\frac{\overline{X}_f^{n^*-1}\cdot s-b}{s}\right) -  \Phi\left(\frac{\underline{X}_f^{n^*-1}\cdot s-b}{s}\right)\right)^{n^*-1} \left( \Phi\left(\frac{t\cdot s-b}{s}\right) -  \Phi\left(\frac{-t \cdot s -b}{s}\right) \right)~.
\end{align*}
As in Appendix \ref{pf:thm:identification_twosided}, we define $\overline{X}^{n^*}_f : = -\infty$ $\underline{X}^{n^*}_f : = \infty$, $\overline{X}^{0}_f := \infty$, $\underline{X}^{0}_f := -\infty$. In the above expressions, $f_{Q,2}$ and $f_{Q,-2}$ are the densities of the region $Q=2$ and $Q=-2$ respectively. $\overline{p}_{Q,1}, \overline{p}_{Q,0}$ and $\overline{p}_{Q,-1}$ are the probabilities that $Q = 1, 0$ or $-1$ respectively. As such, the total mass from all 5 values of $Q$ is:
\begin{align*}
    \overline{p}_{TOT}  = & \int_{x/s\geq t_w} f_{Q,2}(x, s \mid \beta ) \; dx \; ds + \int_{x/s\leq -t_w} f_{Q,-2}(x, s \mid \beta ) \; dx \; ds \\
    & \qquad + \overline{p}_{Q,1} + \overline{p}_{Q,0} + \overline{p}_{Q,-1}~.
\end{align*}
Finally, define:
\begin{align*}
    f(\hat{X}_i, \hat{\Sigma}_i \mid \beta) = \begin{cases}
        f_{Q,2}(\hat{X}_i, \hat{\Sigma}_i \mid \beta ) & \mbox{ if } \hat{X}_i/\hat{\Sigma}_i \geq t_w \\
        \overline{p}_{Q,1} & \mbox{ if } t \leq \hat{X}_i/\hat{\Sigma}_i < t_w \\
        \overline{p}_{Q,0} & \mbox{ if } -t < \hat{X}_i/\hat{\Sigma}_i < t \\
        \overline{p}_{Q,-1} & \mbox{ if } -t_w < \hat{X}_i/\hat{\Sigma}_i \leq -t \\
        f_{Q,-2}(\hat{X}_i, \hat{\Sigma}_i \mid \beta ) & \mbox{ if } \hat{X}_i/\hat{\Sigma}_i \leq -t_w
    \end{cases}
\end{align*}
Then, the likelihood function is
\begin{equation*}
    L(\beta) = \prod_{i=1}^n \frac{{f}(\hat{X}_i, \hat{\Sigma}_i \mid \beta)}{\overline{p}_{TOT}} ~.
\end{equation*}

When estimating the two-sided model, we also impose the following constraints: 
\begin{itemize}
    \item $-t < \underline{X}_f^1 < ... < \underline{X}_f^{n^*-1} < \overline{X}_f^{n^*-1} < ... < \overline{X}_f^1 < t$
    \item $0 < r_s, r_{s,1} < 1$ 
    \item $\sigma, \delta_0, \delta_-, \kappa, \lambda > 0$
\end{itemize}
The first constraint follows from Proposition \ref{prop:empirical_strat} and our assumption that $-t \leq \underline{X}_f^1$ and $\overline{X}_f^1 \leq t$. In principle, the model also allows either $\overline{X}_f^j = \infty$ or $\underline{X}_f^j = - \infty$. We omit these cases since they are not nested with the main model and would have to be checked case-by-case. The next two constraints follow the definition of the variables. Note also that the first two constraints are technically weak inequality constraints. We implement the strict inequality constraints by transforming the related variables, so to avoid constrained optimization.

\subsection{Proof of Proposition~\ref{prop:empirical_strat}}
\label{pf:prop:empirical_strat}

We characterize the optimal p-hacking strategy by backward induction, determining the slow p-hacking strategy first, followed by the fast p-hacking strategy. 

\paragraph{Step 1: Optimal Slow p-Hacking} Suppose the researcher has stopped drawing new estimates and must decide whether to engage in slow p-hacking. At this point, he evaluates the optimal action for each individual estimate and then selects the one that yields the highest continuation value. 

By Assumption 1, the researcher never changes the sign of an estimate through slow p-hacking. Consider an individual estimate $x$. If it is already significant, the researcher reports it without slow p-hacking. If it is positive and insignificant, he slow p-hacks it if and only if 
\[V - \kappa_s C(x)  > \frac{V}{\delta_0};\]
if it is negative and insignificant, he slow p-hacks it if and only if 
\[\frac{V}{\delta_-} - \kappa_s C(x) > \frac{V}{\delta_0}.\]
The indifference points pin down the slow p-hacking thresholds $\overline{X}_s \in (0, t\Sigma)$ and $\underline{X}_s \in (-t\Sigma, 0)$.\footnote{In the degenerate case where $\delta_{-}=\delta_0$, such a $\underline{X}_s$ does not exist, as slow p-hacking to negative significance is meaningless. This case is analyzed in Section~\ref{section:onesided}.} Based on these two thresholds, the continuation value from using the individual estimate $x$ is: 
\[
U_s(x) = \max \left\{\frac{V}{\delta_0} , \left[V - \kappa_s C(x)\right]\mathbbm{1}\{x \ge 0\}, \left[\frac{V}{\delta_-} - \kappa_s C(x)\right] \mathbbm{1}\{x < 0\} \right\}
\]
Among all estimates that he has, he will choose the estimate $X_n$ that maximizes $U_s(X_n)$. 

\paragraph{Step 2: Optimal Fast p-Hacking} 

~\\
\noindent \underline{Step 2.1: Existence of $n^*$} \ Because the cost of fast p-hacking, $c_n$, strictly increases with $n$ and $\lim_{n \to \infty} c_n = \infty$, there is a finite maximum number of draws, $n^*$, that the researcher is willing to make before the cost strictly exceeds any potential benefit $V$.

Let $U_n(x)$ denote the researcher's continuation value when they have drawn $n$ estimates and the best current estimate (the one maximizing the stopping payoff $U_s(x)$) is $x$. By definition, $U_{n^*}(x) = U_s(x)$. Let $V_n(x)$ denote the expected continuation value if the researcher decides to draw one more estimate:
\[
V_n(x) = \mathbb{E}_{X_{n+1}\sim\mathcal{N}(0,\Sigma^2)} \left[ \max\{U_{n+1}(x), U_{n+1}(X_{n+1})\} \right] - \kappa_f c_{n+1}
\]
The researcher draws another estimate if $V_n(x) > U_s(x)$. Let $\Delta_n(x) = V_n(x) - U_s(x) + \kappa_f c_{n+1}$ represent the gross expected benefit of drawing one more estimate. 

\noindent \underline{Step 2.2: Fast p-hacking for $n = n^* - 1$} \ For the final possible draw, the benefit of drawing again is $\Delta_{n^*-1}(x)$. Since $U_s(x)$ is constant for $x \in [\underline{X}_s, \overline{X}_s]$, $\Delta_{n^*-1}(x)$ is also constant in this interval. 

As $x$ moves into the positive tail, $U_s(x)$ strictly increases in $(\overline{X}_s, t\Sigma)$ and then remains constant in $[t\Sigma, \infty)$, which causes the marginal benefit of a new draw, $\Delta_{n^*-1}(x)$, to strictly decrease in $(\overline{X}_s, t\Sigma)$ and then remain constant in $[t\Sigma, \infty)$. Similarly, as $x$ moves into the negative tail, $\Delta_{n^*-1}(x)$ strictly decreases before $x$ reaches $-t\Sigma$ and remains constant afterwards. 

Since the researcher will draw one more estimate if $\Delta_{n^*-1}(x) > \kappa_f c_{n^*}$, and $\Delta_{n^*-1}(x)$ decreases toward both tails, there exist unique thresholds $\underline{X}_f^{n^*-1}$ and $\overline{X}_f^{n^*-1}$ such that the researcher draws an $(n^*)$-th estimate if and only if $x \in (\underline{X}_f^{n^*-1}, \overline{X}_f^{n^*-1})$.

\noindent \underline{Step 2.3: Optimal fast p-hacking for $n < n^* - 1$} \ We now use induction to prove the properties of the value functions and the existence and nested nature of the fast p-hacking thresholds for any $n < n^* - 1$. 

\begin{lemma}
\label{lemma:empirical}
Given $n < n^* - 1$. Suppose $V_{n+1}(x)$, $\Delta_{n+1}(x)$, and $c_{n+2}$ satisfy the following:
\begin{enumerate}
    \item[(i)] $V_{n+1}(x)$ is constant for $x \in [\underline{X}_s, \overline{X}_s]$, strictly increases for $x \in (\overline{X}_s, t\Sigma)$ and $x \in (-t\Sigma, \underline{X}_s)$ as $x$ approaches the respective significance boundaries, becomes constant in the intervals $[t\Sigma, \infty)$ and $(-\infty, -t\Sigma]$.
    \item[(ii)] $\Delta_{n+1}(x)$ is constant for $x \in [\underline{X}_s, \overline{X}_s]$, strictly decreases for $x \in (\overline{X}_s, t\Sigma)$ and $x \in (-t\Sigma, \underline{X}_s)$ as $x$ approaches the significance boundaries. Furthermore, $\Delta_{n+1}(x)$ is constant at zero for $x \ge t\Sigma$, and constant at some value $\Delta_{n+1}(-\infty) \ge 0$ for $x \le -t\Sigma$.
    \item[(iii)] $\Delta_{n+1}(\overline{X}_s) > \kappa_f c_{n+2}$ and $\Delta_{n+1}(\underline{X}_s) > \kappa_f c_{n+2}$.
\end{enumerate}
Then $V_n(x)$, $\Delta_n(x)$, and $c_{n+1}$ also satisfy these conditions.
\end{lemma}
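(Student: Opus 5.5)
I will follow the template of Lemma~\ref{lemma:delta} and Lemma~\ref{lemma:delta_onesided}, adapting each step to the asymmetric two-tailed stopping value $U_s$. Two structural facts about $U_s$ are used throughout. It is constant on $[\underline{X}_s,\overline{X}_s]$, equal to $V/\delta_0$. It is strictly increasing on $(\overline{X}_s,t\Sigma)$ and strictly decreasing on $(-t\Sigma,\underline{X}_s)$, because $C$ is strictly decreasing in $|x|$ on each side. It is constant on the significance regions, equal to $V$ on $[t\Sigma,\infty)$ and to $V/\delta_-$ on $(-\infty,-t\Sigma]$. Under this ordering, "best current estimate" means the one with the largest $U_s$. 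The update is then $V_n(x)=\mathbb{E}[\max\{U_{n+1}(x),U_{n+1}(X_{n+1})\}]-\kappa_f c_{n+1}$, and $V_n$ depends on $x$ only through the scalar $u=U_s(x)$, provided $U_{n+1}$ is itself a nondecreasing function of $U_s$. I will include that as an extra inductive invariant, and it propagates from the premise.

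\emph{Part (i).} Since $U_{n+1}=\max\{U_s,V_{n+1}\}$, premise (i) shows that $U_{n+1}$ is constant on $[\underline{X}_s,\overline{X}_s]$, strictly monotone toward each boundary, and constant beyond $\pm t\Sigma$. The map $x\mapsto \mathbb{E}[\max\{U_{n+1}(x),U_{n+1}(X_{n+1})\}]$ inherits these shape properties pointwise. Strict monotonicity on $(\overline{X}_s,t\Sigma)$ and $(-t\Sigma,\underline{X}_s)$ holds because $X_{n+1}$ has full support, so the event $U_{n+1}(X_{n+1})<U_{n+1}(x)$ has positive probability.

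\emph{Part (ii).} Constancy of $\Delta_n$ on the middle interval and on both tails follows directly from (i) together with the shape of $U_s$. On $x\ge t\Sigma$, $\Delta_n=0$ because $U_s=V$ is the maximum attainable value. On $x\le -t\Sigma$, the value $\Delta_n(-\infty)$ is nonnegative since drawing again weakly raises the gross value. For strict decrease on $(\overline{X}_s,t\Sigma)$, take $\overline{X}_s<x'<x''<t\Sigma$ and reproduce the argument of (\ref{eq:a10})--(\ref{eq:a12}) with $\max\{|x|,|X|\}$ replaced by the $U_s$-maximum. The inequality $\max\{a'',b\}-\max\{a',b\}\le a''-a'$ is strict when $b\in(a',a'')$, and the premise gives that $U_{n+1}-U_s$ is weakly decreasing. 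The negative side is handled by the same computation with $x$ moving toward $-t\Sigma$.

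\emph{Part (iii).} As in Lemma~\ref{lemma:delta}, I will show $\Delta_n(x)\ge\Delta_{n+1}(x)$ pointwise using $U_{n+1}\ge U_{n+2}$, which holds because fewer draws means lower future costs for the same state. Evaluating at $\overline{X}_s$ and at $\underline{X}_s$ and applying premise (iii) gives $\Delta_n>\kappa_f c_{n+2}>\kappa_f c_{n+1}$.

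\emph{Main obstacle.} The main obstacle is the bookkeeping in (ii) across the two asymmetric tails. On the negative side the limiting value $V/\delta_-$ is below $V$, so an estimate slightly past $-t\Sigma$ can still be dominated by a positive insignificant draw. The reduction to the scalar $u=U_s(x)$ is what makes the monotone-difference argument go through on both sides simultaneously, so I will set that reduction up first.
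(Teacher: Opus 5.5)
Your proposal is correct and follows the paper's proof essentially step for step. Part (i) inherits the shape properties through $U_{n+1}=\max\{U_s,V_{n+1}\}$. Part (ii) combines $\max\{a'',b\}-\max\{a',b\}\le a''-a'$ with the fact that $U_{n+1}-U_s=\max\{0,\Delta_{n+1}-\kappa_f c_{n+2}\}$ is weakly decreasing. Part (iii) uses $U_{n+1}\ge U_{n+2}$.

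Your added invariant, that $U_{n+1}$ is a nondecreasing function of $U_s$, is not needed for the lemma as stated, and it does not follow from (i)--(iii) alone. The paper's $V_n$ is already defined through $\max\{U_{n+1}(x),U_{n+1}(X_{n+1})\}$, so the argument runs with $b=U_{n+1}(X_{n+1})$ directly. The invariant is still a worthwhile check that this definition matches the researcher actually keeping the $U_s$-best estimate.
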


\begin{proof}
\underline{Part (i).} Since both $V_{n+1}(x)$ and $U_s(x)$ satisfy condition (i), $U_{n+1}(x) = \max\{U_s(x), V_{n+1}(x)\}$ also satisfies condition (i). As $V_n(x) = \mathbb{E}_{X_{n+1}\sim\mathcal{N}(0,\Sigma^2)} \left[ \max\{U_{n+1}(x), U_{n+1}(X_{n+1})\} \right] - \kappa_f c_{n+1}$, it follows that $V_n(x)$ is constant for $x \in [\underline{X}_s, \overline{X}_s]$, strictly increases as $x$ approaches the tails within the insignificant region, and becomes constant for significant results. This proves condition (i) for $V_n(x)$.

\noindent \underline{Part (ii).} The gross expected benefit of drawing one more estimate is defined as $\Delta_n(x) = V_n(x) - U_s(x) + \kappa_f c_{n+1}$. Because both $V_n(x)$ and $U_s(x)$ are constant on $[\underline{X}_s, \overline{X}_s]$, $\Delta_n(x)$ is also constant on this interval.

To evaluate the significant regions, we must observe the asymmetric payoffs. If $x \ge t\Sigma$, the estimate is positively significant and yields the maximum possible payoff $U_s(x) = V$, and therefore, $\Delta_n(x) = 0$ for all $x \ge t\Sigma$. 

Conversely, if $x \le -t\Sigma$, the estimate is negatively significant and yields $U_s(x) = V/\delta_-$. If $\delta_- > 1$, this payoff is strictly less than $V$. A new draw $X_{n+1}$ could potentially be positively significant (or close enough to profitably slow p-hack to positive significance), yielding a payoff greater than $V/\delta_-$. Because the expected marginal gain of obtaining a positively significant result does not depend on the exact magnitude of the already-held negatively significant estimate, $\Delta_n(x)$ is constant for all $x \le -t\Sigma$, but this constant, denoted $\Delta_n(-\infty)$, is strictly positive if $\delta_- > 1$.

It remains to show that $\Delta_n(x)$ strictly decreases as $x$ moves towards $t\Sigma$ in $(\overline{X}_s, t\Sigma)$, and as $x$ moves towards $-t\Sigma$ in $(-t\Sigma, \underline{X}_s)$. The proof needed is identical to its counterpart in Proposition~\ref{prop:combined_strat}. 

\noindent \underline{Part (iii).} We can reuse the proof of Lemma~\ref{lemma:delta} to show $\Delta_n(\overline{X}_s) \ge \Delta_{n+1}(\overline{X}_s) > \kappa_f c_{n+2} > \kappa_f c_{n+1}$ and $\Delta_n(\underline{X}_s) \ge \Delta_{n+1}(\underline{X}_s) > \kappa_f c_{n+2} > \kappa_f c_{n+1}$. 
\end{proof}

Combining statements (ii) and (iii) of Lemma~\ref{lemma:empirical} guarantees the existence of a continuation region for each draw $n$. 

Because $\Delta_n(x)$ is strictly decreasing from $\overline{X}_s$ to $t\Sigma$ and reaches exactly $0$ at $t\Sigma$, there must exist a unique upper threshold $\overline{X}_f^n \in (\overline{X}_s, t\Sigma)$ such that $\Delta_n(x) > \kappa_f c_{n+1}$ (i.e., $V_n(x) > U_s(x)$) if and only if $x < \overline{X}_f^n$. 

For the negative domain, $\Delta_n(x)$ strictly decreases from $\underline{X}_s$ towards $-t\Sigma$ and remains at the constant $\Delta_n(-\infty) \ge 0$. If the cost of the next draw is sufficiently low such that $\Delta_n(-\infty) > \kappa_f c_{n+1}$, the researcher will unconditionally draw another estimate even if they hold a negatively significant result, meaning $\underline{X}_f^n = -\infty$. If $\Delta_n(-\infty) \le \kappa_f c_{n+1}$, the intersection occurs within the finite domain, yielding a unique lower threshold $\underline{X}_f^n \in [-t\Sigma, \underline{X}_s)$. 

Finally, to establish that the continuation region shrinks as $n$ increases, we use the fact that $V_n(x) > V_{n+1}(x)$ for any $x \in (\overline{X}_s, t\Sigma)$. This implies:
\[
U_s(\overline{X}_f^n) = V_n(\overline{X}_f^n) > V_{n+1}(\overline{X}_f^n)
\]
Because $\overline{X}_f^{n+1}$ is defined as the point where $U_s = V_{n+1}$, and $V_{n+1}$ crosses $U_s$ from above, it must be that $\overline{X}_f^n > \overline{X}_f^{n+1}$. Symmetrically for the negative tail, $\underline{X}_f^n \le \underline{X}_f^{n+1}$. 

This establishes the final nested property for all $n \le n^* - 1$:
\[-\infty \leq \underline{X}_f^{1} < \underline{X}_f^{2} < ... < \underline{X}_f^{n^*-1} < \underline{X}_s < 0 < \overline{X}_s < \overline{X}_f^{n^*-1} < ... < \overline{X}_f^{2} < \overline{X}_f^{1} < t\Sigma. \]

\noindent \underline{Step 2.4: Characterization of $n^*$} \ Similarly to the proofs of Proposition~\ref{prop:combined_strat}, the maximum number of draws $n^*$ is uniquely pinned down by evaluating the maximum potential benefit of a draw, which occurs at $x=0$. Thus, $n^* = \max \{n \mid \kappa_f c_n \le \Delta_{n^*-1}(0) \}$.

\subsection{Proof of Theorem~\ref{thm:identification_twosided}}
\label{pf:thm:identification_twosided}

In Section \ref{pf:thm:identification_twosided}, $\Phi(x)$ and $\phi(x)$ will refer to the CDF and PDF of the standard normal distribution respectively. We will make explicit the dependence on $\Theta$ and $\Sigma$ whenever we reference the distribution of a $N(\Theta, \Sigma^2)$ random variable.

Let $\hat{Z}=\hat{X}/\hat{\Sigma}$ be the reported Z-statistic and define
\begin{align*}
    Q =
    \begin{cases}
        2 & \mbox{ if } \hat{Z} > t_w, \\
        1 & \mbox{ if } t \leq \hat{Z} \leq t_w, \\
        0 & \mbox{ if } -t < \hat{Z} < t, \\
        -1 & \mbox{ if } -t_w \leq \hat{Z} \leq -t, \\
        -2 & \mbox{ if } \hat{Z} < -t_w.
    \end{cases}
\end{align*}

Let $[\underline{X}_f^{l},\overline{X}_f^{l}]$ be the continuation region for fast p-hacking after the $l$-th draw. These thresholds do not depend on $\Theta$, because researchers neither know $\Theta$ nor update their beliefs about $\Theta$ during the p-hacking process. Moreover, at least one of $\underline{X}_f^l$ and $\overline{X}_f^l$ is finite.

Conditional on $\Theta=b$ and $\Sigma=s$, a report with $Q=2$ is generated when the researcher reaches some draw $j+1$, all previous draws lie in the relevant continuation region, and the $(j+1)$th draw is positive and exceeds $t_w s$. Hence
\begin{equation*}
    \mathbb{P}(Q=2\mid \Theta=b,\Sigma=s)
    =
    \left(1-\Phi\left(\frac{t_w s-b}{s}\right)\right)p_{f,1}(b,s),
\end{equation*}
where
\begin{align*}
    p_{f,1}(b,s)
    &=
    \sum_{j=0}^{n^*-1}
    \mathbbm{1}\{\overline{X}_f^{j+1}<\infty\}
    \left(
        \Phi\left(\frac{\overline{X}_f^j-b}{s}\right)
        -
        \Phi\left(\frac{\underline{X}_f^j-b}{s}\right)
    \right)^j.
\end{align*}
Similarly,
\begin{equation*}
    \mathbb{P}(Q=-2\mid \Theta=b,\Sigma=s)
    =
    \Phi\left(\frac{-t_w s-b}{s}\right)p_{f,-1}(b,s),
\end{equation*}
where
\begin{align*}
    p_{f,-1}(b,s)
    &=
    \sum_{j=0}^{n^*-1}
    \mathbbm{1}\{\underline{X}_f^{j+1}>-\infty\}
    \left(
        \Phi\left(\frac{\overline{X}_f^j-b}{s}\right)
        -
        \Phi\left(\frac{\underline{X}_f^j-b}{s}\right)
    \right)^j.
\end{align*}
We define $\overline{X}_f^{n^*}:=-\infty$, $\underline{X}_f^{n^*}:=\infty$, $\overline{X}_f^0:=\infty$, and $\underline{X}_f^0:=-\infty$.

Observe that the two functions satisfy
\begin{equation}\label{equation--identification_lim_pf}
    \lim_{b\to\infty}p_{f,1}(b,s)
    =
    \lim_{b\to-\infty}p_{f,-1}(b,s)
    = 1~.
\end{equation}
Moreover,
\begin{multline*}
    \frac{\partial}{\partial b}p_{f,1}(b,s)
    =
    -\sum_{j=1}^{n^*-1}
    \frac{j}{s}
    \mathbbm{1}\{\overline{X}_f^{j+1}<\infty\}
    \left(
        \Phi\left(\frac{\overline{X}_f^j-b}{s}\right)
        -
        \Phi\left(\frac{\underline{X}_f^j-b}{s}\right)
    \right)^{j-1}
    \\
    \cdot
    \left(
        \phi\left(\frac{\overline{X}_f^j-b}{s}\right)
        -
        \phi\left(\frac{\underline{X}_f^j-b}{s}\right)
    \right),
\end{multline*}
and analogously for $p_{f,-1}$. Therefore,
\begin{equation}\label{equation--identification_derivative_pf}
    \lim_{|b|\to\infty}\frac{\partial}{\partial b}p_{f,1}(b,s)
    =
    \lim_{|b|\to\infty}\frac{\partial}{\partial b}p_{f,-1}(b,s)
    =
    0.
\end{equation}

We now identify the distribution of latent effects. Let $f_{\hat{X}}$ denote the density of published estimates. Conditional on $Q=2$, $\Sigma=s$, and $\Theta=b$,
\begin{equation*}
    f_{\hat{X}\mid\Theta,\Sigma,Q}(x\mid b,s,2)
    =
    \frac{
        \frac{1}{s}\phi\left(\frac{x-b}{s}\right)
    }{
        1-\Phi\left(\frac{t_w s-b}{s}\right)
    }
    \mathbbm{1}\{x>t_w s\}.
\end{equation*}
Meanwhile,
\begin{equation*}
    \mu_{\Theta\mid\Sigma,Q}(b\mid s,2)
    =
    \frac{1}{p_{Q,2}(s)}
    \left(1-\Phi\left(\frac{t_w s-b}{s}\right)\right)
    p_{f,1}(b,s)
    \mu_{\Theta\mid\Sigma}(b\mid s),
\end{equation*}
where
\begin{equation*}
    p_{Q,2}(s)
    =
    \int
    \left(1-\Phi\left(\frac{t_w s-b}{s}\right)\right)
    p_{f,1}(b,s)
    \mu_{\Theta\mid\Sigma}(b\mid s)
    \; db > 0~.
\end{equation*}
Thus, for $x>t_w s$,
\begin{equation*}
    f_{\hat{X}\mid\Sigma,Q}(x\mid s,2)
    =
    \frac{1}{p_{Q,2}(s)}
    \int
    \frac{1}{s}\phi\left(\frac{x-b}{s}\right)
    p_{f,1}(b,s)
    \mu_{\Theta\mid\Sigma}(b\mid s)
    \; db.
\end{equation*}
The above Gaussian convolution is real analytic in $x$, so its values on the open set $[t_ws, \infty)$ uniquely determine it on $\mathbb{R}$. Gaussian convolution is injective because the Fourier transform of the Gaussian kernel is everywhere non-zero. Hence, partial deconvolution identifies
\begin{equation*}
    g_1(b,s)
    :=
    \frac{p_{f,1}(b,s)}{p_{Q,2}(s)}
    \mu_{\Theta\mid\Sigma}(b\mid s)
    =
    \frac{p_{f,1}(b,s)}{p_{Q,2}(s)}
    \frac{1}{\sigma(s)}
    \phi\left(\frac{b-\theta(s)}{\sigma(s)}\right).
\end{equation*}
Differentiating,
\begin{equation*}
    \frac{\partial}{\partial b}g_1(b,s)
    =
    \frac{1}{p_{Q,2}(s)}
    \left[
        -p_{f,1}(b,s)\frac{b-\theta(s)}{\sigma^2(s)}
        +
        \frac{\partial}{\partial b}p_{f,1}(b,s)
    \right]
    \frac{1}{\sigma(s)}
    \phi\left(\frac{b-\theta(s)}{\sigma(s)}\right).
\end{equation*}
Define $g_{-1}(b,s)$ analogously using $p_{f,-1}(b,s)$. Next, form
\begin{equation*}
    h(b,s)
    =
    \frac{
        \partial_b g_1(b,s)+\partial_b g_{-1}(b,s)
    }{
        g_1(b,s)+g_{-1}(b,s)
    }
    =
    -\frac{b-\theta(s)}{\sigma^2(s)}
    +
    \frac{
        \partial_b p_{f,1}(b,s) \cdot p_{Q,-2}(s)+\partial_b p_{f,-1}(b,s) \cdot p_{Q,2}(s)
    }{
        p_{f,1}(b,s) \cdot p_{Q,-2}(s)+p_{f,-1}(b,s) \cdot p_{Q,2}(s)
    }.
\end{equation*}
By \eqref{equation--identification_lim_pf}, the denominator in the
second term is bounded away from zero in the tails, while
\eqref{equation--identification_derivative_pf} implies that its numerator
converges to zero. Hence
\begin{equation*}
    -\frac{1}{\sigma^2(s)}
    =
    \lim_{b\to\infty}\left[h(b+1,s)-h(b,s)\right],
\end{equation*}
and
\begin{equation*}
    \frac{2\theta(s)}{\sigma^2(s)}
    =
    \lim_{b\to\infty}\left[h(b,s)+h(-b,s)\right].
\end{equation*}
Therefore $\theta(s)$ and $\sigma^2(s)$ are identified, and so is
$\mu_{\Theta\mid\Sigma}(\cdot\mid s) = N\left(\theta(s),\sigma^2(s)\right)$. In turn, the following are identified:
\begin{align*}
	 \frac{g_1(b,s)}{\mu_{\Theta\mid\Sigma}(b\mid s)} = \frac{p_{f,1}(b,s)}{p_{Q,2}(s)} \quad \mbox{ and } \quad \frac{g_{-1}(b,s)}{\mu_{\Theta\mid\Sigma}(b\mid s)} = \frac{p_{f,-1}(b,s)}{p_{Q,-2}(s)}~.
\end{align*}
Then, \eqref{equation--identification_lim_pf} allows us to obtain
\begin{align*}
	p_{Q,2}(s)
	=
	\left(
	\lim_{b\to\infty}
	\frac{g_1(b,s)}
	{\mu_{\Theta\mid\Sigma}(b\mid s)}
	\right)^{-1} \quad \mbox{and} \quad 
	p_{Q,-2}(s)
	=
	\left(
	\lim_{b\to-\infty}
	\frac{g_{-1}(b,s)}
	{\mu_{\Theta\mid\Sigma}(b\mid s)}
	\right)^{-1}~.
\end{align*}
 Substituting them back into the previous equation then identifies $p_{f,1}(b,s)$ and $p_{f,-1}(b,s)$. Observe that
\begin{align*}
	\overline{X}_f^1 &=\infty \quad \iff \quad	\lim_{b\to-\infty}p_{f,1}(b,s)=0 \quad \mbox{, and} \\
	\underline{X}_f^1 &=-\infty \quad \iff \quad	\lim_{b\to\infty}p_{f,-1}(b,s)=0 ~.
\end{align*}
so that we also know whether $\overline{X}_f^1$ and $\underline{X}_f^1$ are finite.

Suppose WLOG that $\overline{X}^1_f < \infty$. Otherwise, $\underline{X}^1_f > -\infty$ and the following argument goes through using a symmetric argument involving $p_{f,-1}(b,s)$, $g_{-1}(b,s)$. 
The density of $\Sigma$ conditional on $Q=2$ satisfies
\begin{equation}\label{equation--sigma_cond_significance}
    \mu_{\Sigma\mid Q}(s\mid 2)
    =
    \frac{1}{\overline p_{Q,2}}
    \mu_\Sigma(s)p_{Q,2}(s),
\end{equation}
where
\begin{equation*}
    \overline p_{Q,2}
    =
    \int p_{Q,2}(s)\mu_\Sigma(s)\;ds.
\end{equation*}
Normalization gives
\begin{equation*}
    \overline p_{Q,2}
    \int
    \frac{\mu_{\Sigma\mid Q}(s\mid 2)}{p_{Q,2}(s)}
    \;ds
    =
    1,
\end{equation*}
so $\overline p_{Q,2}$ is identified. \eqref{equation--sigma_cond_significance} then identifies $\mu_\Sigma(s)$ and therefore the joint distribution $\mu$. 

We next identify the fast p-hacking thresholds. Fix $s$ and define the probability of the $j$-th continuation interval by
\begin{equation*}
    W_j(b,s)
    :=
    \Phi\left(\frac{\overline X_f^j-b}{s}\right)
    -
    \Phi\left(\frac{\underline X_f^j-b}{s}\right).
\end{equation*}
Suppose that two sets of fast-p-hacking parameters,
\begin{equation*}
    \left(
        n,\left\{\underline X_f^j,\overline X_f^j\right\}_{j=1}^{n-1}
    \right)
    \quad\text{and}\quad
    \left(
        n^\dagger,
        \left\{\underline X_f^{\dagger,j},
        \overline X_f^{\dagger,j}\right\}_{j=1}^{n^\dagger-1}
    \right),
\end{equation*}
generate the same identified functions $p_{f,1}(b,s)$ and
$p_{f,-1}(b,s)$ for every $b\in\mathbb R$.

Since $\overline X_f^1$ and $\overline X_f^{\dagger,1}$ are both finite, nesting of the continuation intervals implies that all subsequent upper
thresholds are then finite as well. Therefore
\begin{equation}
    p_{f,1}(b,s)
    =
    1+\sum_{j=1}^{n-1}W_j(b,s)^j.
    \label{eq:pf1_finite_upper}
\end{equation}
Define $W_j^\dagger(b,s)$ analogously for the second set of parameters.

We argue that the two continuation intervals coincide. Suppose that
the intervals have already been shown to agree for every $j<k$. Subtract their
common contributions and define
\begin{align*}
    R_k(b,s)
    &:=
    p_{f,1}(b,s)
    -
    1
    -
    \sum_{j=1}^{k-1}W_j(b,s)^j,\\
    R_k^\dagger(b,s)
    &:=
    p_{f,1}^\dagger(b,s)
    -
    1
    -
    \sum_{j=1}^{k-1}W_j^\dagger(b,s)^j.
\end{align*}
Observational equivalence and the induction hypothesis imply that
\begin{equation}
    R_k(b,s)=R_k^\dagger(b,s)
    \qquad\text{for all }b \in \mathbb{R }~.
    \label{eq:residual_equality}
\end{equation}

By finiteness of $\overline X_f^1$ and $\overline X_f^{\dagger,1}$, as $b\to\infty$, the interval probability satisfies
\begin{equation*}
    W_j(b,s)
    =
    \Phi\left(\frac{\overline X_f^j-b}{s}\right) \cdot (1+o(1))
    = 
    \frac{s}{b-\overline X_f^j}
    \phi\left(\frac{b-\overline X_f^j}{s}\right) \cdot
    (1+o(1))~,
\end{equation*}
where the second equality above follows from the usual asymptotic approximation of the Mill's ratio. Taking logs, we have that
\begin{equation}\label{equation--logW}
	\log W_j(b,s)
	=
	-\frac{(b-\overline X_f^j)^2}{2s^2}
	-\log\left(\frac{b-\overline X_f^j}{s}\right)
	-\frac{1}{2}\log(2\pi)
	+o(1)~.
\end{equation}
Expanding the quadratic term, we have that for every $j>k$, 
\begin{equation*}
    \log
    \left(
        \frac{W_j(b,s)^j}{W_k(b,s)^k}
    \right)
    =
    -\frac{(j-k)b^2}{2s^2}
    +O(b)~,
\end{equation*}
which converges to $-\infty$. As such, the $k$-th term dominates:
\begin{align*}
    R_k(b,s)
    &=
    W_k(b,s)^k(1+o(1))~,
    \\
    R_k^\dagger(b,s)
    &=
    W_k^\dagger(b,s)^k(1+o(1))~.
\end{align*}
In other words, \eqref{eq:residual_equality} implies that 
\begin{equation}
    \frac{W_k(b,s)^k}
         {W_k^\dagger(b,s)^k}
    \to 1
    \quad\text{as} \quad b\to\infty~.
    \label{eq:positive_tail_ratio}
\end{equation}

Suppose, toward a contradiction, that
$\overline X_f^k\neq\overline X_f^{\dagger,k}$. Then \eqref{equation--logW} gives us that
\begin{equation*}
    \log
    \left(
        \frac{W_k(b,s)}
             {W_k^\dagger(b,s)}
    \right)
    =
    \frac{
        \overline X_f^k-\overline X_f^{\dagger,k}
    }{s^2}b
    +O(1).
\end{equation*}
When the two thresholds are not equal, the right-hand side converges to either $+\infty$ or $-\infty$, contradicting
\eqref{eq:positive_tail_ratio}. As such, we must have that $\overline X_f^k = \overline X_f^{\dagger,k}$.

It remains to identify the lower threshold. First suppose that
$\underline X_f^k$ and $\underline X_f^{\dagger,k}$ are both finite. A similar argument as above gives us that
as $b\to-\infty$, 
\begin{equation*}
    \frac{W_k(b,s)^k}
         {W_k^\dagger(b,s)^k}
    \to 1
    \quad \mbox{as} \quad b\to-\infty~,
\end{equation*}
so that
\begin{equation*}
    \log
    \left(
        \frac{W_k(b,s)}
             {W_k^\dagger(b,s)}
    \right)
    =
    \frac{
        \underline X_f^k-\underline X_f^{\dagger,k}
    }{s^2}b
    +O(1)~.
\end{equation*}
For \eqref{eq:residual_equality} to hold, we must therefore have that $ \underline X_f^k = \underline X_f^{\dagger,k}$ whenever they are both finite.

We are done when $\underline X_f^k = \underline X_f^{\dagger,k} = -\infty$. It remains to rule out the case when $\underline X_f^k=-\infty$ but $\underline X_f^{\dagger,k}>-\infty$. In this case,
\begin{equation*}
    \lim_{b\to-\infty}R^\dagger_k(b,s) = 
    \lim_{b\to-\infty} \left(W_k^\dagger(b,s)\right)^k=0~,
\end{equation*}
but
\begin{equation*}
    \lim_{b\to-\infty}R_k(b,s) \geq  
    \lim_{b\to-\infty} \left(W_k(b,s)\right)^k=1~,
\end{equation*}
contradicting \eqref{eq:residual_equality}. As such,   $ \overline X_f^{\dagger,k} =  \overline X_f^{k}$ and  $ \underline X_f^{\dagger,k} =  \underline X_f^{k}$. Consequently, the common term $W_k(b,s)^k$ can be subtracted exactly from both
representations. Repeating this argument for all $k$ identifies the full set of thresholds.

Finally, suppose without loss of generality that $n<n^\dagger$. After the first
$n-1$ common terms have been subtracted, the LHS of \eqref{eq:residual_equality} is zero, whereas the RHS is
$
    \sum_{j=n}^{n^\dagger-1}
    W_j^\dagger(b,s)^j
$.
Every term in this sum is strictly positive for finite $b$ because the continuation intervals are nondegenerate. The residual therefore cannot vanish identically, contradicting \eqref{eq:residual_equality}. Hence $n=n^\dagger$. Therefore, we have that that $n^*$ and $\{\overline{X}^{j}_f, \underline{X}^{j}_f\}_{j=1}^{n^*-1}$ are identified as functions of $s$. 

We can now identify the selective-publication parameters. Let $\overline{p}_{Q,1} = \mathbb{P}(Q = 1)$. Now, 
\begin{equation*}
    \mathbb{P}\left(
        t\leq\hat Z\leq t_w
        \mid
        \hat Z\geq t,D=1
    \right)
    =
    \frac{
        \overline p_{Q,1}
    }{
        \overline p_{Q,1}+\overline p_{Q,2}
    }.
\end{equation*}
Hence $\overline p_{Q,1}$ is identified whenever $\overline p_{Q,2}>0$. The symmetric argument identifies $\overline p_{Q,-1}$ whenever $\overline p_{Q,-2}>0$. This also yields $$\overline{p}_{Q,0} = 1 - \overline p_{Q,2}-\overline p_{Q,1} - \overline p_{Q,-1}-\overline p_{Q,-2}~.$$

Recall that negative significant results are published with relative probability
$\lambda_-=1/\delta_-$. Since
\begin{equation*}
    \mathbb{P}\left(
        \hat Z\geq t
        \mid
        |\hat Z|\geq t,D=1
    \right)
    =
    \frac{
        \overline p_{Q,1}+\overline p_{Q,2}
    }{
        \overline p_{Q,1}+\overline p_{Q,2}
        +
        \lambda_-
        (\overline p_{Q,-1}+\overline p_{Q,-2})
    },
\end{equation*}
$\lambda_-$ is identified whenever $\overline p_{Q,-2}>0$.

To identify $\lambda_0=1/\delta_0$, use
\begin{align*}
    \mathbb{P}\left(|\hat Z| < t\mid D=1\right)
    =
    \frac{
        \lambda_0 \overline{p}_{Q,0}
    }{
          \overline p_{Q,1}+\overline p_{Q,2}
                +
                \lambda_-
                (\overline p_{Q,-1}+\overline p_{Q,-2}) + 
        \lambda_0 p_{Q,0}
    },
\end{align*}
As such, $\lambda_0$ is
identified whenever $\overline{p}_{Q,0} > 0$, a sufficient condition for which is $\mathbb{P}(|\hat Z|<t)>0$.

Having identified $\delta_0$ and $\delta_-$, we can identify $\mathbb{E}[\gamma(\delta_0,\delta_-)]$, the counterfactual mean under
selective publication only.

Finally, we can also identify the mass that has been p-hacked from the insignificant region into the positive significant region. Specifically, 
\begin{equation*}
	\overline p_{s,1}
	=
	\overline p_{Q,1}
	-
	\int
	\left[
	\Phi\!\left(\frac{t_ws-b}{s}\right)
	-
	\Phi\!\left(\frac{ts-b}{s}\right)
	\right]
	p_{f,1}(b,s)\,
	d\mu_{\Theta,\Sigma}(b,s)
\end{equation*}
where the second term above is the probability that $Q = 1$ as a result of fast p-hacking only. Any difference between $\overline p_{Q,1}$ and this term must therefore be due to slow p-hacking. Analogously for $\overline p_{s,-1}$.

\end{document}